\newcommand{\dbtilde}[1]{\accentset{\approx}{#1}}
 \newtheorem{thm}{Theorem}[section]
 \newtheorem{cor}[thm]{Corollary}
 \newtheorem{prop}[thm]{Proposition}
 \theoremstyle{definition}
 \newtheorem{defn}[thm]{Definition}
 \newtheorem{rem}[thm]{Remark}
 \numberwithin{equation}{section}
\newtheorem{exercise}{Exercise}
\newtheorem{example}[thm]{Example}
\newcommand{\ddz}{\frac{{\rm d}}{{\rm d}z}}
\renewcommand{\a}{\alpha}
\newcommand{\p}{Painlev\'{e}}
\newcommand{\hW}{{\widehat{W}}}
\newcommand{\tH}{\tilde{H}}
\newcommand{\hT}{\hat{T}}
\newcommand{\cc}[1]{c_{#1}}
\newcommand{\lp}{\left(}
\newcommand{\rp}{\right)}
\DeclareMathOperator{\rL}{\operatorname{L}}
\DeclareMathOperator{\Wr}{\operatorname{Wr}}
\newcommand{\N}{\mathbb{N}}
\newcommand{\Z}{\mathbb{Z}}
\newcommand{\C}{\mathbb{C}}
\newcommand{\R}{\mathbb{R}}
\newcommand{\bbeta}{{\boldsymbol{\beta}}}
\newcommand{\bmu}{{\boldsymbol{\mu}}}
\newcommand{\bpi}{{\boldsymbol{\pi}}}
\newcommand{\supth}{{}^{\rm{th}}}
\newcommand{\PIV}{$\mathrm{P}_{\mathrm{IV}}\,\,$}
\newcommand{\cM}{\mathcal{M}}
\newcommand{\cZ}{\mathcal{Z}}
\newcommand{\hH}{{\widehat{H}}}
\renewcommand{\th}{\tilde{H}}
\newcommand{\Pfour}{{\rm P_{IV}}}
\newcommand{\Pfive}{{\rm P_{V}}}
\newcommand{\Pone}{{\rm P_{I}}}
\renewcommand{\boxdot}{{\ \clap{\raise0.25ex\hbox{$\bullet$}}\clap{$\square$}\ }}
\newcommand{\emptybox}{\hbox{$\square$}}
\definecolor{shadecolor}{gray}{0.9}
\begin{document}
%
%
%
%
%
%
%
%
%

\title[Exceptional orthogonal polynomials and Painlev\'e equations]{Lectures on \mbox{exceptional orthogonal polynomials} and rational solutions to Painlev\'e equations.}
\author{David G\'omez-Ullate}

\address{Escuela Superior de Ingenier\'ia, Universidad de C\'adiz, 11519 Puerto Real, Spain.}
\address{Departamento de F\'isica Te\'orica, Universidad Complutense de
  Madrid, 28040 Madrid, Spain.}

\email{david.gomezullate@uca.es}

\author{Robert Milson}
\address{Department of Mathematics and Statistics, Dalhousie University,
  Halifax, NS, B3H 3J5, Canada.}
\email{rmilson@dal.ca}
\subjclass{Primary 	33C45; Secondary 34M55}

\keywords{Sturm-Liouville problems, classical polynomials, Darboux transformations, exceptional polynomials, Painlev\'e equations, rational solutions, Darboux dressing chains, Maya diagrams, Wronskian determinants.}

\date{Feb 20, 2019}

\begin{abstract}
These are the lecture notes for a course on exceptional polynomials taught at the \textit{AIMS-Volkswagen Stiftung Workshop on Introduction to Orthogonal Polynomials and Applications} that took place in Douala (Cameroon) from October 5-12, 2018. They summarize the basic results and construction of exceptional poynomials, developed over the past ten years. In addition, some new results are presented on the construction of rational solutions to Painlev\'e equation $\Pfour$ and its higher order generalizations that belong to the $A_{2n}^{(1)}$-Painlev\'e hierarchy. The construction is based on dressing chains of Schr\"odinger operators with potentials that are rational extensions of the harmonic oscillator. Some of the material presented here (Sturm-Liouville operators, classical orthogonal polynomials, Darboux-Crum transformations, etc.) are classical and can be found in many textbooks, while some results (genus, interlacing and cyclic  Maya diagrams) are new and presented for the first time in this set of lecture notes.
\end{abstract}

\maketitle
\setcounter{tocdepth}{1}
\tableofcontents

\section{Introduction}\label{sec:intro}
   
The past 10 years have witnessed an intense activity of several research groups around the concept of exceptional orthogonal polynomials. Although some isolated examples in the physics literature existed before \cite{dubov}, the systematic study of exceptional polynomials started in 2009, with the publication of two papers \cite{gomez2009extended,gomez2010extension}. The original approach to the problem was via a complete classification of exceptional operators by increasing codimension, which proved to be computationally untractable (and moreover, much later it was shown that codimension is not a very well defined concept).  The term \textit{exceptional} was originally intended to evoque very rare, almost exotic cases, as for low codimension exceptional families are almost unique \cite{gomez2010extension}. Yet, shortly after the publication of these results, Quesne showed \cite{Quesne2008} that the exceptional families in \cite{gomez2009extended} could be obtained by Darboux transformations, and Odake and Sasaki showed the way to generalize these examples to arbitrary codimension \cite{Odake2009a,Odake2010}. New families emerged later associated to multiple Darboux transformations \cite{Gomez-Ullate2012,odake2011exactly}, and nowadays it is clear that exceptional polynomials are certainly not rare, as we are starting to understand the whole theory behind their construction and classification.

The role of Darboux transformations in the construction of exceptional polynomial families is an essential ingredient. It was conjectured in \cite{gomez2013conjecture} that every exceptional polynomial system can be obtained from a classical one via a sequence of Darboux transformations, which has been recently proved in \cite{garcia2016bochner}. Explaining these results lies beyond the scope of these lectures, and the interested reader is advised to read \cite{garcia2016bochner} for an updated account on the structure theorems underlying the theory of exceptional polynomials and operators. We will limit ourselves in the following pages to introduce the main ideas and constructions, as a sort of primer to the subject.

For those interested in gaining deeper knowledge in the properties of exceptional orthogonal polynomials, there are a large number of references in the bibliography section that cover the main results published in the past 10 years, by authors like Dur\'an \cite{Duran2014a,Duran2014,Duran2015Jacobi,Duran2015a,Duran2015b}, Sasaki and Odake \cite{Odake2009a,Odake2010,odake2011exactly,odake2013extensions,odake2013krein}, Marquette and Quesne \cite{marquette2013one,marquette2013two,marquette2016}, Kuijlaars and Bonneux \cite{Kuijlaars2015,Bonneux2018}, etc. that cover aspects like recurrence relations, symmetries, asymptotics, admissibility and regularity of the weights, properties of their zeros and electrostatic interpretation, and applications in solvable quantum mechanical models, among others.

The connection between sequences of Darboux transformations and Painlev\'e type equations has been known for more than 20 years, since the works of Adler \cite{adler1994nonlinear}, and  Veselov and Shabat ,\cite{veselov1993dressing}. However, the russian school of integrable systems was more concerned with uncovering relations between different structures rather than providing complete classifications of solutions to Painlev\'e equations. The japanese school pionereed by Sato developed a scheme to understand integrable equations as reductions from the KP equations. Noumi and Yamada \cite{noumi2004painleve}, and their collaborators developed the geometric theory of Painlev\'e equations, by studying the group of B\"acklund transformations that map solutions to solutions (albeit for different values of the parameters). Using this transformations to \textit{dress} some very simple seed solutions they managed to build and classify large classes of rational solutions to $\Pfour$ and $\Pfive$, and to extend this symmetry approach to higher order equations, that now bear their name. It was later realised that determinantal representations of these rational solutions exist \cite{kajiwara1996determinant,kajiwara1998determinant} and that they involve classical orthogonal polynomial entries. For an updated account of the relation between orthogonal polynomials and Painlev\'e equations, the reader is advised to read the recent book by van Assche \cite{refWVAbook}.

Our aim is to merge these two approaches: the strength of the Darboux dressing chain formulation with a convenient representation and indexing to describe the whole set of rational solutions to $\Pfour$ and its higher order generalizatins belonging to the $A_{2N}$-Painlev\'e hierarchy. This is achieved by indexing iterated Darboux transformations with Maya diagrams, originally introduced by Sato, and exploring conditions that ensure cyclicity after an odd number of steps. We tackle this problem by introducing the concepts of genus and interlacing of Maya diagrams, which allow us to classify and describe cyclic Maya diagrams. For every such cycle, we show how to build a rational solution to the $A_{2N}$-Painlev\'e system, by a suitable choice of Wronskian determinants whose entries are Hermite polynomials. This approach generalizes the solutions for $\Pfour$ ($A_{2}$-Painlev\'e) known in the literature as Okamoto and generalized Hermite polynomials. We illustrate the construction by providing the complete set of rational solutions to $A_{4}$-Painlev\'e, the next system in the hierarchy.

\section{Darboux transformations}\label{sec:Darboux}

In this section we describe Darboux transformations on Schr\"odinger operators and their iterations at a purely formal level (i.e. with no interest on the spectral properties).

Let $L=-D_{xx}+U(x)$ be a Schr\"odinger operator, and $\varphi=\varphi(x)$ a formal eigenfunction of $L$ with eigenvalue $\lambda$, so that
\[ L[\varphi]=-\varphi''+U\varphi=\lambda\varphi\]
Note that we are not assuming any condition of $\varphi$ at this stage, we do not care at this formal level whether  $\varphi$ is square integrable or not.
The function $\varphi$ is usually called the \textit{seed function} for the transformation, and $\lambda$ the \textit{factorization energy}.
For every choice of $\varphi$ and $\lambda$, we can factorize $L$ in the following manner
\[L-\lambda=(D_x+w)(-D_x+w)=BA,\qquad w=(\log \varphi)'\]
where $B=D_x+w$ and $A=-D_x+w$ are first order differential operators.
The Darboux transform of $L$, that we call $\tilde L$, is defined by commuting the two factors:
\[ \tilde L-\lambda= AB= (-D_x+w)(D_x+w)\]
Expanding the two factors, we can find the relation between $U$ and its transform $\tilde U$:
\begin{eqnarray}
U=w^2+w'-\lambda,\quad \tilde U=w^2-w'-\lambda\Rightarrow \tilde U=U-2w'
\end{eqnarray}
or in terms of the seed function we have
\[\tilde U=U-2(\log\varphi)''\]
Note that $\ker A=\langle \varphi\rangle$, i.e. $A[\varphi]=-\varphi'+w\varphi=0$,and also that  $\ker B=\left\langle \frac{1}{ \varphi}\right\rangle$, i.e. 
$B\left[\frac{1}{\varphi}\right]=0$.
The main reason to introduce this transformation is that we have the following intertwining relations between $L$ and $\tilde L$:
\begin{equation}\label{eq:intertwining}
 LB=B\tilde L,\qquad AL=\tilde LA 
 \end{equation}
These relations mean that we can connect the eigenfunctions of $L$ and $\tilde L$. 
\begin{shaded}
\begin{exercise}
Show that if $\psi$ is an eigenfunction of $L$ with eigenvalue $E$, then $\tilde \psi=A[\psi]$ is an eigenfunction of $\tilde L$ with the same eigenvalue. Likewise, if $\tilde\varphi$ is an eigenfunction of $\tilde L$ with eigenvalue $\mu$, then $\varphi=B[\tilde \varphi]$ is an eigenfunction of $L$ with the same eigenvalue.
\end{exercise}
\end{shaded}
By hypothesis, we have that $L[\psi]=E\psi$. Let $\tilde \psi=A[\psi]$. We see that
\[ \tilde L[\tilde \psi]=\tilde{L}A[\psi]=AL[\psi]=A[E\psi]=\mu A[\psi]=E\tilde\psi
\]
The converse transformation is proved in a similar way.

Note that if we try to apply the Darboux transformation $A$ on $\varphi$ we do not get any eigenfunction of $\tilde L$, because $A[\varphi]=0$. However, the reciprocal of $\varphi$ is a new eigenfunction of $\tilde L$, with eigenvalue $\lambda$, as
\[ B\left[\frac{1}{\varphi}\right]=0\Rightarrow \tilde L\left[\frac{1}{\varphi}\right] = (AB+\lambda) \left[\frac{1}{\varphi}\right]=\lambda \left(\frac{1}{\varphi}\right)\]

\subsection{Exact solvability by polynomials}\label{sec:ESP}

The above transformation is purely formal and its main purpose is to connect the eigenfunctions and eigenvalues of two different Schr\"odinger operators $L$ and $\tilde L$. Now, this would be of very little purpose if we cannot say anything about the spectrum and eigenfunctions of at least one of the operators. Typically, we use Darboux transformations to generate new solvable operators from ones that we know to be solvable. But what do we mean by \textit{solvable} ?

In general, \textit{solvable} means that we can describe the spectrum and eigenfunctions in a more or less explicit form, and in terms of known functions. It is still not clear what a \textit{known function} is\dots so we'd rather narrow down the definition and define exact solvability by polynomials in the following manner

\begin{defn}\label{def:ESP}
  A Schr\"odinger operator
  \begin{equation}\label{eq:Hgen}
  L=-D_{xx} +U(x)
 \end{equation}  
 is said to be \textit{exactly solvable by polynomials} if there exist functions
 $\mu(x),z(x)$ such that \emph{for all but finitely many} $k\in
 \mathbb N$, $L$ has eigenfunctions (in the ${\mathrm L}^2$ sense) of the form 
  \[ \psi_k(x) = \mu(x) y_k(z(x)) \]
where $y_k(z)$ is degree $k$ polynomial in $z$.
\end{defn}

This definition captures many of the Schr\"odinger operators that we know to be exactly solvable: those in which the eigenfunctions have a common prefactor $\mu(x)$ times a polynomial in a suitable variable $z(x)$. The prefactor  $\mu(x)$ is responsible for ensuring the right asymptotic behaviour at the endpoints for all bound states, while the polynomials $y_k$ represent a modulation that describes the excited states.

From the purpose of orthogonal polynomials, this kind of Schr\"odinger operators are directly related to classical orthogonal polynomials, since polynomials $y_k$ are automatically orthogonal if $L$ is a self-adjoint operator, i.e. with appropriate regularity and boundary conditions.

More specifically, classical orthogonal polynomials are related to the following Schr\"odinger operators:
\begin{enumerate}
\item Hermite polynomials to the harmonic oscillator
\item Laguerre polynomials to the isotonic oscillator
\item Jacobi polynomials to the Darboux-P\"oschl-Teller potential.
\end{enumerate}

We would like to apply Darboux transformations to these three families of Schr\"odinger operators that are exactly solvable by polynomials, in order to generate new operators, but we would like these new operators to also be exactly solvable by polynomials. This means that we have to impose certain restrictions on the type of seed functions of $L$ that we are free to choose for the Darboux transformations. In general, the class of \textit{rational Darboux transformation} is the subset of all possible Darboux transformations that preserve exact solvability by polynomials. Fortunately, there is a simple way to characterize seed functions for this subclass, which we describe below. But before we do so, let us introduce some jargon between differential operators.

\subsection{Schr\"odinger and algebraic operators}\label{sec:ratDarb}

If we are dealing with Schr\"odinger operators that are exactly solvable by polynomials, there are two operators that we will work with: on one hand, we have the Schr\"odinger operator $L=-D_{xx}+U(x)$, on the other hand, we have the algebraic operator $T=p(z)D_{zz}+q(z)D_z+r(z)$ which is the one that has polynomial eigenfunctions
\[T[y_k]=p(z) y_k''+q(z) y_k'+r(z) y_k=\lambda_k y_k.\]
There is a connection between these two operators, which is not entirely bidirectional in general, but it is bidirectional if $L$ is exactly solvable by polynomials.

\begin{prop}
Every second order linear differential operator $T=p(z)D_{zz}+q(z)D_z+r(z)$ can be transformed into a Schr\"odinger operator $L=-D_{xx}+U(x)$ by the following change of variables and similarity transformation:
\begin{eqnarray}\label{eq:xtoz}
x&=&-\int^z (-p)^{-1/2}\\
L&=&\mu\circ T\circ \left(\frac{1}{\mu}\right)\Big|_{z=z(x)},\qquad \mu=\exp\int^z\frac{q-\frac{p'}{2}}{2p}\label{eq:TtoL}
\end{eqnarray}
where $z(x)$ is defined by inverting \eqref{eq:xtoz}
\end{prop}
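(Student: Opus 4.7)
The plan is to prove the proposition in two steps, each of which imposes exactly one of the two transformations in the statement. First, change the independent variable $z\mapsto x$ to normalize the leading coefficient of $T$ to $-1$; second, perform the gauge (similarity) conjugation by $\mu$ to eliminate the resulting first-order term. The formulas in \eqref{eq:xtoz} and \eqref{eq:TtoL} are forced by these two normalization requirements, which is what I would emphasize.

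\textbf{Step 1: Change of variables.} Applying the chain rule, $D_z = x'(z)\,D_x$ and $D_{zz} = x'(z)^2 D_{xx} + x''(z)\,D_x$, so
\[
T = p(z)\,x'(z)^2\,D_{xx} + \bigl(p(z)\,x''(z)+q(z)\,x'(z)\bigr)\,D_x + r(z).
\]
To obtain the Schr\"odinger form, I demand $p(z)\,x'(z)^2 = -1$, i.e.\ $x'(z) = -(-p)^{-1/2}$, which integrates to the formula \eqref{eq:xtoz}. After this substitution the operator takes the form
\[
\widetilde T = -D_{xx} + A(x)\,D_x + B(x), \qquad A = p\,x'' + q\,x',\quad B = r,
\]
with all coefficients re-expressed as functions of $x$.

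\textbf{Step 2: Similarity conjugation.} For any positive function $\mu(x)$, using $\mu\circ D_x\circ \mu^{-1} = D_x - (\log\mu)'$, one computes
\[
\mu\circ \widetilde T\circ \mu^{-1} = -D_{xx} + \bigl(2(\log\mu)' + A\bigr)\,D_x + \bigl(-((\log\mu)')^2 + (\log\mu)'' - A(\log\mu)' + B\bigr).
\]
To kill the $D_x$-coefficient I set $(\log\mu)' = -A/2$. Rewriting this condition back in $z$ via $dx = x'(z)\,dz$ and using $x'(z)^2 = -1/p$ (hence $2x'x'' = p'/p^2$) I get
\[
A\,dx = \bigl(p\,x'\,x'' + q\,(x')^2\bigr)\,dz = \left(\frac{p'}{2p} - \frac{q}{p}\right)dz = -\frac{q - p'/2}{p}\,dz,
\]
which integrates to $\log\mu = \int \frac{q - p'/2}{2p}\,dz$, matching \eqref{eq:TtoL}. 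What remains is $L = -D_{xx} + U(x)$ with $U = A^2/4 - A'/2 + B$, a well-defined function of $x$.

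\textbf{Expected obstacle.} The computation itself is routine; the only delicate points are (a) sign conventions in $(-p)^{-1/2}$, which require a local assumption that $-p>0$ in order for the change of variables to make sense as a real, invertible substitution, and (b) verifying that the two transformations are compatible in the sense that $\mu$ is expressed most naturally as a function of $z$ (via the antiderivative formula) even though the similarity conjugation happens in the $x$-variable. These are not serious obstacles but are the main source of errors; I would present them carefully and note that invertibility of $x(z)$ on the relevant interval is an implicit hypothesis of the statement.
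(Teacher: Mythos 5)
Your argument is correct and is essentially the paper's own computation performed in reverse order: the paper conjugates by $\mu$ in the $z$-variable first, obtaining $p\,D_{zz}+\tfrac{1}{2}p'\,D_z+\bigl(r-q\sigma-p\sigma'+p\sigma^2\bigr)$ with $\sigma=\frac{q-p'/2}{2p}$, and then observes that the change of variables annihilates the first-order term, whereas you change variables first and then solve for the $\mu$ that kills the $D_x$-coefficient. Both routes rest on the same chain-rule and gauge-conjugation identities, and your potential $U=A^2/4-A'/2+B$ agrees with the paper's $U=r+p\sigma^2-p\sigma'-q\sigma$ upon substituting $A=-2\sigma/x'$ and $x'^2=-1/p$, so the content is the same (your version has the minor virtue of showing the formula for $\mu$ is forced rather than verified).
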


\begin{shaded}
\begin{exercise}
Prove that if $T=p(z)D_{zz}+q(z)D_z+r(z)$, then the operator $L$ defined by \eqref{eq:xtoz} and \eqref{eq:TtoL} is a Schr\"odinger operator $L=-D_{xx}+U(x)$, and find an expression for the potential $U$ in terms of $p,q$ and $r$.
\end{exercise}
\end{shaded}

Let us denote
\[ \mu(z)=\exp\int^z \sigma,\qquad \sigma=\frac{q-\frac{p'}{2}}{2p}. \]
This, by applying the product rule for composition of differential operators, we have
\begin{eqnarray*}
\mu\circ D_z\circ \left(\frac{1}{\mu}\right)&=&D_z+\left(\log\frac{1}{\mu} \right)'=D_z-\sigma\\
\mu\circ D_{zz}\circ \left(\frac{1}{\mu}\right)&=& D_{zz}+2 \left(\log\frac{1}{\mu} \right)'D_z+\mu\left(\frac{1}{\mu}\right)''=D_{zz}-2\sigma D_z-\sigma'+\sigma^2
\end{eqnarray*}
So collecting terms we have
\[ \mu\circ T\circ \left(\frac{1}{\mu}\right)=pD_{zz} +\frac{1}{2} p' D_z+r-q\sigma-p\sigma'+p\sigma^2 \]

Finally, the chain rule for differentiation leads to
\begin{align*}
\frac{d}{dz}&=\frac{dx}{dz}\frac{d}{dx}&&\Rightarrow \qquad D_z=-(-p)^{1/2}D_x\\
\frac{d^2}{dz^2}&=\left( \frac{dx}{dz}\right)^2\frac{d^2}{dx^2}+ \frac{d^2x}{dz^2}\frac{d}{dx}&&\Rightarrow\qquad D_{zz}=-\frac{1}{p}D_{xx}+\frac{1}{2}(-p)^{1/2}p'D_x
\end{align*}
which inserted into the previous equation becomes
\[ \mu\circ T\circ \left(\frac{1}{\mu}\right) =-D_{xx}+r+p\sigma^2-p\sigma'-q\sigma \]
which is a Schr\"odinger operator. We identify the potential $U(x)$ to be
\begin{equation}
U(x)=r+p\sigma^2-p\sigma'-q\sigma\Big|_{z=z(x)},\qquad \sigma=\frac{q-\frac{p'}{2}}{2p}
\end{equation}

Note that we can always go from $T$ to $L$, but in general there is no prescribed way to go from $L$ to $T$. This means that  given a Schr\"odinger operator with some potential $L=-D_{xx}+U(x)$ it is a difficult question to know  whether we can perform a change of variables and conjugation by a factor $\mu$ as in \eqref{eq:xtoz}-\eqref{eq:TtoL} such that $T$ has polynomial eigenfunctions $y_k(z)$ (this is sometimes known as \textit{algebraizing} a Schr\"odinger operator). Otherwise speaking, given a potential $U$ it is hard to know whether it is exactly solvable by polynomials\footnote{Solving this question is equivalent to classifying exceptional polynomials and operators, a question that we shall mention below.}.

\subsection{Rational Darboux transformations}\label{sec:ratDarb}
Now we start from a given $L$ which we know to be exactly solvable by polynomials, and we would like to perform a Darboux transformation in such a way that $\tilde L$ is still exactly solvable by polynomials. What conditions must the seed function $\varphi$ satisfy for this to hold?

This question is not easy to answer in general. Let us look at one example.
\begin{example}\label{ex:ho}
Consider the harmonic oscillator $L=-D_{xx}+x^2$. One possible choice for seed functions for rational Darboux transformations comes from choosing $\varphi$ among the bound states of $L$, i.e.
\[\varphi_k={\rm e}^{-x^2/2} H_k(x),\qquad k\geq 0\]
where $H_n(x)$ is the $n$-th Hermite polynomial. We see that $L$ is exactly solvable by polynomials, with $z(x)=x$ and $\mu(x)={\rm e}^{-x^2/2}$. This implies that $p(z)=1$ from  \eqref{eq:xtoz}, and from \eqref{eq:TtoL} we see that $q(z)=2 z$, so that
$T=D_{zz}-2zD_z$
\end{example}
We thus have
\[
\begin{aligned}
&L=-D_{xx}+x^2,\qquad &&L[\varphi_k]=(2k+1) \varphi_k,\quad &k=0,1,2,\dots\\
 &T= D_{zz}-2 z D_z,\qquad &&T[H_k]= 2k H_k,\quad &k=0,1,2,\dots 
\end{aligned}
\]
But this is not the only possible choice. Note that the Schr\"odinger operator $L=-D_{xx}+x^2$ is not only invariant under the transformation $x\to-x$ but it only picks a sign when we perform the transformation $x\to{\rm i} x$, so there is another set of eigenfunctions
\[ \tilde \varphi_k= {\rm e}^{x^2/2} {\tilde H}_k(x),\quad k\geq 1  \]
where ${\tilde H}_k(x)={\rm i}^{-k}\,H_k({\rm i}x)$ is called the conjugate Hermite polynomial. Note that these eigenfunctions are obtained by exploiting a discrete symmetry of the equation, and their eigenvalues are negative:
\[ L[\tilde\varphi_k]=-(2k+1) \tilde\varphi_k,\quad k=1,2,\dots \]
Because the pre-factor is now a positive gaussian, the functions blow up at $\pm\infty$ and they are not square integrable (in the physics literature they are sometimes called \textit{virtual states}). But for the purposes of using them as seed functions for Darboux transformations, they are perfectly valid.
These two sets of eigenfunctions exhaust all possible seed functions for rational Darboux transformations of the harmonic oscillator. Rather than two families of functions, each of them indexed by natural numbers, it will be useful to consider them as one single family indexed by integers:

\begin{equation}\label{eq:seedHO}
\varphi_n(x)=\begin{cases}
{\rm e}^{-x^2/2} H_n(x),&\text{ if } n\geq 0\\
{\rm e}^{x^2/2} \tilde H_{-n-1}(x),&\text{ if } n< 0
\end{cases}
\end{equation}

\subsection{Iterated or Darboux-Crum transformations}\label{sec:Crum}

Now that we know how to apply Darboux transformations to pass from $L$ to $\tilde L$ there is no reason why we should stop there\dots we can apply the transformation once again, and in general as many times as we want. Let's do it once more. Suppose that $\varphi_1$ and $\varphi_2$ are two formal eigenfunctions of $L$:
\[L[\varphi_1]=\lambda_1 \varphi_1,\qquad L[\varphi_2]=\lambda_2 \varphi_2  \]
We first use seed function $\varphi_1$ to transform $L$ into $\tilde L$, so that
\[\tilde U = U- 2(\log\varphi_1)'' \]
and the eigenfunctions are related by
\[\tilde\psi =A[\psi]=-\psi'+\frac{\varphi_1'}{\varphi_1}\psi=\frac{\Wr[\varphi_1,\psi]}{\varphi_1}. \]
But now we observe that $\tilde\varphi_2$ given by
\[ \tilde\varphi_2 =A[\varphi_2]=\frac{\Wr[\varphi_1,\varphi_2]}{\varphi_1}\]
is a formal eigenfunction of $\tilde L$, so we can use it to Darboux transform $\tilde L$ into $\dbtilde L$, so
\[ \dbtilde U=\tilde U -2(\log\tilde\varphi_2)''=U-2(\log\Wr[\varphi_1,\varphi_2])''  \]
If $\psi$ are the eigenfunctions of $L$ and $\tilde\psi=\Wr[\varphi_1,\psi]/\varphi_1$ are the eigenfunctions of $\tilde L$, the eigenfunctions of $\dbtilde L$ are given by
\[\dbtilde\psi=A[\tilde\psi]=\frac{\Wr[\tilde\varphi_2,\tilde\psi]}{\tilde\varphi_2}=\frac{
\Wr \Big[ \varphi_1^{-1}\Wr[\varphi_1,\varphi_2],\varphi_1^{-1}\Wr[\varphi_1,\psi]  \Big]}{\varphi_1^{-1}\Wr[\varphi_1,\varphi_2]}=\frac{\Wr[\varphi_1,\varphi_2,\psi]}{\Wr[\varphi_1,\varphi_2]}
\]
where we have used two identities satisfied by Wronskian determinants, namely
\[\Wr[gf_1,\dots,gf_n] =g^n\Wr[f_1,\dots,f_n]\]
and
\[\Wr[f_1,\dots,f_n,g,h]=\frac{\Wr\big[\Wr[f_1,\dots,f_n,g],\Wr[f_1,\dots,f_{n},h]   \big]}{\Wr[f_1,\dots,f_n]}. \]

It is not hard to iterate this argument and prove by induction the following result, known as Darboux-Crum formula.
\begin{prop}
Let $\varphi_1,\dots,\varphi_n$ be a set of $n$ formal eigenfunctions of a Schr\"odinger operator $L$. We can perform an $n$-step Darboux transformation with these seed eigenfunctions, to obtain a chain of Schr\"odinger operators
\[L=L_0 \rightarrow L_1 \rightarrow \cdots \rightarrow L_n. \]
The Schr\"odinger operator of $L_n$ is given by
\[ L_n=-D_{xx}+ U_n=-D_{xx}+ U -2\left(\log\Wr[\varphi_1,\dots,\varphi_n]\right)''.  \]
If $\psi$ is a formal eigenfunction of $L$ with eigenvalue $E$, then
\begin{equation}\label{eq:Crum}
 \psi^{(n)} = \frac{\Wr[\varphi_1,\dots,\varphi_n,\psi]}{\Wr[\varphi_1,\dots\varphi_n]}
 \end{equation}
is a formal eigenfunction of $L_n$ with the same eigenvalue.
\end{prop}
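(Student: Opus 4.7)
The proof will proceed by induction on $n$, with the single-step Darboux transformation serving as the base case $n=1$ (already established earlier in the section, where $L_1$ has potential $U-2(\log\varphi_1)''$ and eigenfunctions $\Wr[\varphi_1,\psi]/\varphi_1$). The plan is to feed the inductive hypothesis into one more single-step Darboux transformation and then collapse the resulting nested Wronskians using the two Wronskian identities already displayed in the text.

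For the inductive step, assume the formulas hold for $n$. To go from $L_n$ to $L_{n+1}$ we must apply a single Darboux transformation to $L_n$ with a seed function that is a formal eigenfunction of $L_n$. The natural choice, by the inductive hypothesis applied to $\varphi_{n+1}$ in place of $\psi$, is
\[
\varphi_{n+1}^{(n)} = \frac{\Wr[\varphi_1,\dots,\varphi_n,\varphi_{n+1}]}{\Wr[\varphi_1,\dots,\varphi_n]},
\]
which is a formal eigenfunction of $L_n$ with eigenvalue $\lambda_{n+1}$. First I would verify the potential formula: applying the single-step rule to $L_n$ with this seed gives
\[
U_{n+1} = U_n - 2\bigl(\log \varphi_{n+1}^{(n)}\bigr)'' = U - 2(\log\Wr[\varphi_1,\dots,\varphi_n])'' - 2(\log\Wr[\varphi_1,\dots,\varphi_{n+1}])'' + 2(\log\Wr[\varphi_1,\dots,\varphi_n])'',
\]
and the middle two terms cancel the first correction, leaving precisely $U - 2(\log\Wr[\varphi_1,\dots,\varphi_{n+1}])''$ as desired.

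The eigenfunction formula is the more delicate bookkeeping. Writing $W_n = \Wr[\varphi_1,\dots,\varphi_n]$, the single-step rule applied at level $n$ gives
\[
\psi^{(n+1)} = \frac{\Wr\!\left[\varphi_{n+1}^{(n)},\, \psi^{(n)}\right]}{\varphi_{n+1}^{(n)}}.
\]
Using the scaling identity $\Wr[gf_1,gf_2] = g^2 \Wr[f_1,f_2]$ with $g=1/W_n$ pulls the common denominator $W_n$ out of both arguments of the inner Wronskian, producing a factor $W_n^{-2}\Wr\bigl[\Wr[\varphi_1,\dots,\varphi_{n+1}],\,\Wr[\varphi_1,\dots,\varphi_n,\psi]\bigr]$. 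Then the second displayed Wronskian identity, specialised with $f_i=\varphi_i$, $g=\varphi_{n+1}$, $h=\psi$, rewrites this inner object as $W_n \cdot \Wr[\varphi_1,\dots,\varphi_{n+1},\psi]$. Substituting and simplifying the $W_n$ factors against the denominator $\varphi_{n+1}^{(n)} = \Wr[\varphi_1,\dots,\varphi_{n+1}]/W_n$ collapses everything to
\[
\psi^{(n+1)} = \frac{\Wr[\varphi_1,\dots,\varphi_{n+1},\psi]}{\Wr[\varphi_1,\dots,\varphi_{n+1}]},
\]
completing the induction.

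The main obstacle is really just algebraic bookkeeping: tracking which Wronskian has which arguments and verifying that the Jacobi-type identity from the text applies in exactly the configuration needed. No new analytic input is required; the entire content of the proposition is encoded in the single-step formula together with the two quoted Wronskian identities, and the $(2,n)$ expansion of the Jacobi identity is precisely the tool that makes the nested Wronskians telescope.
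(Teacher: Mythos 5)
Your proof is correct and follows exactly the route the paper indicates: the paper carries out the two-step case using the scaling identity $\Wr[gf_1,\dots,gf_n]=g^n\Wr[f_1,\dots,f_n]$ and the Jacobi-type identity for nested Wronskians, and then states that the proposition follows by iterating this argument inductively, which is precisely the induction you have written out. Your verification of the telescoping potential formula and the collapse of $\Wr\bigl[\varphi_{n+1}^{(n)},\psi^{(n)}\bigr]/\varphi_{n+1}^{(n)}$ to $\Wr[\varphi_1,\dots,\varphi_{n+1},\psi]/\Wr[\varphi_1,\dots,\varphi_{n+1}]$ is accurate in every step.
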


\begin{example}
Coming back to the harmonic oscillator of Example~\ref{ex:ho}, we saw that seed functions for rational Darboux transformations are in one-to-one correspondence with the integers. 
If we want to perform a multi-step Darboux transformation, we need to fix a multi-index that specifies the set of seed functions to be used.
For instance, corresponding to the multi-index $(-3,-2,1,4)$ we would have, according to \eqref{eq:seedHO} the Darboux-Crum transformation acting on a function $\psi$ would be
\[\psi^{(4)}=\frac{ \Wr\left[{\rm e}^{x^2/2} \tH_2,{\rm e}^{x^2/2}\tH_1,{\rm e}^{-x^2/2} H_1,{\rm e}^{-x^2/2} H_4,\psi\right]}{\Wr\left[{\rm e}^{x^2/2} \tH_2,{\rm e}^{x^2/2}\tH_1,{\rm e}^{-x^2/2} H_1,{\rm e}^{-x^2/2} H_4\right] }
\]
In the following sections we will see how the polynomial part of these functions essentially defines exceptional Hermite polynomials, and how these Wronskians enjoy very particular symmetry properties that admit an elegant combinatorial description in terms of Maya diagrams.
\end{example}

\section{The Bochner problem: classical and exceptional polynomials}\label{sec:Bochner}

After having reviewed the notion of Darboux-Crum transformations, in this section we will introduce the concept of exceptional orthogonal polynomials, as orthogonal polynomial systems that arise from Sturm-Liouville problems with {\it exceptional degrees}, i.e. gaps in their degree sequence. But before we do so, we need to review some basic facts about Sturm-Liouville problems, and introduce Bochner's theorem, that characterizes the classical orthogonal polynomial systems of Hermite, Laguerre and Jacobi as polynomial eigenfunctions (with no missing degrees) of a Sturm-Liouville problem .

\subsection{Sturm-Liouville problems}
A Sturm-Liouville problem is a second-order boundary value problem of the form
\begin{gather}
  \label{eq:SLP}
  -(P(z) y')' + R(z) y = \lambda W(z) y,\quad y=y(z),\\
  \label{eq:SLPBC}
  \begin{aligned}
    \alpha_0 y(a) + \alpha_1 y'(a) = 0\\
    \beta_0 y(b) + \beta_1 y'(b) = 0
  \end{aligned}
\end{gather}
where $I=(a,b)$ is an interval, where $\lambda$ is a spectral
parameter, where $P(z), W(z), R(z)$ are suitably smooth real-valued
functions with $P(z), W(z)>0$ for $z\in I$.\footnote{In the case of an
  unbounded interval with $a=-\infty$ and/or $b=+\infty$, or if
  solutions $y(z)$ of \eqref{eq:SLP} have no defined value at the
  endpoints, one has to consider the asymptotics of the corresponding
  solutions and impose boundary conditions of a more general form:
\[
  \begin{aligned}
    \alpha_0(z) y(z) + \alpha_1(z) y'(z) \to 0& \quad\text{as } z\to a^{-}\\
    \beta_0(z) y(z) + \beta_1(z) y'(z) \to 0& \quad\text{as } z\to
    b^{+}
  \end{aligned}
\]
where $\alpha_0(z),\alpha_1(z), \beta_0(z), \beta_1(z)$ are continuous
functions defined on $I$.}

Dividing \eqref{eq:SLP} by $W(z)$ re-expresses the underlying
differential equation in an operator form:
\begin{equation}
  \label{eq:Tylambda}
  -T[y] = \lambda y,
\end{equation}
where
\begin{equation}
  \label{eq:Typqr}
  T[y] = p(z) y'' + q(z) y' + r(z) y,
\end{equation}
and where
\begin{equation}
  \label{eq:PWRpqr}
  \begin{aligned}
    p(z) &= \frac{P(z)}{W(z)}\qquad &
    P(z) &= \exp \int \frac{q(z)}{p(z)} dz\\
    q(z) &= \frac{P'(z)}{W(z)} &
    W(z) &= \frac{P(z)}{p(z)},\\
    r(z) &= -\frac{R(z)}{W(z)} &
    R(z)&= -r(z) W(z)
  \end{aligned}
\end{equation}

If $y_1(z), y_2(z)$ are two sufficiently smooth real-valued functions,
then integration by parts gives Lagrange's identity:
\begin{equation}
  \label{eq:Lagrange}
  \int (T[y_1] y_2 - T[y_2] y_1)(z) \,W(z) dz  = P(z) (y_1'(z) y_2(z)
  - y_2'(z) y_1(z)). 
\end{equation}
Suppose that the boundary conditions entail (i) the square
integrability of eigenfunctions with respect to $W(z)dz$ over the
interval $I$; and (ii) the vanishing of the right side of
\eqref{eq:Lagrange} at the endpoints of the interval. With some
suitable regularity assumptions on $P(z), W(z), R(z)$ one can then
show that the eigenvalues of $-T$ can be ordered so that
$\lambda_1<\lambda_2<\cdots < \lambda_n < \cdots \to \infty$.

If $y_i, y_j,\; i\neq j$ are two eigenfunctions corresponding to eigenvalues
$\lambda_i,\lambda_j$, respectively, then
\eqref{eq:Lagrange} reduces to
\begin{equation}
  \label{eq:forthog}
  (\lambda_i-\lambda_j) \int_I y_i(z) y_j(z) W(z) dz = P(z) (y_i'(z)
  y_j(z) - y_j'(z)y_i(z))\Big|^{b^-}_{a^+} = 0. 
\end{equation}
Therefore, the eigenfunctions are orthogonal with respect to the inner
product
\[ \left< f,g\right>_W = \int_I f(z) g(z) W(z) dz.\]

\begin{example}
  Let's work out the weight and boundary conditions for the Hermite
  differential equation
  \begin{equation}
    \label{eq:hde0}
     y'' - 2z y +\lambda y,\quad y=y(z).
  \end{equation}
  We apply \eqref{eq:PWRpqr} and rewrite the above in Sturm-Liouville
  form 
  \begin{equation}
    \label{eq:HSLP}
     -(W(z) y')' = \lambda W(z) y,\quad y\in \rL^2(\R,Wdz)
  \end{equation}
  where the weight has the form
  \[ W(z) = \exp \lp\int^z (-2z) \rp= e^{-z^2}\]
  In this case, the boundary conditions are that $e^{-z^2} y(z)^2$ be
  integrable near $\pm \infty$.  

  A basis of solutions to \eqref{eq:hde0} are
  \begin{align}
    \phi_0(z;\lambda)  &= \Phi\lp -\frac{\lambda}{4},\frac12,z^2\rp\\
    \phi_1(z;\lambda) &= z\,\Phi\lp \frac12 -\frac{\lambda}{4}, \frac32,z^2\rp
  \end{align}
  where
  \[ \Phi(a,c,x) = \sum_{n=0}^\infty \frac{ (a)_n}{(c)_n n!} x^n,\]
  is the confluent hypergeometric function.   This function has the
  asymptotic behaviour
  \[ \Phi(a,c,x) = \frac{\Gamma(c)}{\Gamma(a)} e^x x^{a-c}\lp 1+ O(|x|^{-1})
  \rp,\quad x\to +\infty, \]
  This implies that 
  \begin{align*}
    e^{-z^2}\phi_0(z;\lambda)^2 
    &= \frac{\pi e^{z^2}  z^{-2-\lambda}}{\Gamma(-\lambda/4)^2} \lp 1
      + O(z^{-2}) \rp,\quad z \to \pm \infty,\\ 
    e^{-z^2} \phi_1(z;\lambda)^2 
    &=  \frac{\pi   e^{z^2}  z^{-2-\lambda}}{4\Gamma(1/2-\lambda/4)^2}
      \lp 1 + O(z^{-2})\rp,\quad  z\to\pm \infty. 
  \end{align*}
  are not integrable for generic values of $\lambda$ near $z=\pm
  \infty$.
  We now introduce two other solutions of \eqref{eq:hde0},
  \begin{align}
    \psi_R(z;\lambda) 
    &= \Psi\lp -\frac{\lambda}{4},\frac12,z^2\rp,\quad z>0 \\
    \psi_L(z;\lambda) 
    &= \Psi\lp -\frac{\lambda}{4},\frac12,z^2\rp,\quad z<0    
  \end{align}
  where
  \begin{equation}
    \label{eq:psiphi}
    \Psi(a,c;x) = \frac{\Gamma(1-c)}{\Gamma(a-c+1)}\Phi(a,c;x) +
    \frac{\Gamma(c-1)}{\Gamma(a)} \Phi(a-c+1,2-c;x),\quad x>0.
  \end{equation}
  Note that $\psi_R(z)$ and $\psi_L(z)$ are different functions,
  because $\Psi$ is a branch of a multi-valued function defined by
  taking a branch cut over the negative real axis.  However,
  $\psi_L, \psi_R$ may be continued to solutions of \eqref{eq:hde0}
  over all of $\R$ by means of connection formulae \eqref{eq:psiphi1},
  below.

  We have the asymptotics
  \begin{align*}
    x^a \Psi(a,c;x) 
    &= 1+ O(x^{-1}),\quad x\to +\infty\\
    e^{-z^2} \psi_R(z;\lambda)^2 
    &=  e^{-z^2} z^\lambda\lp1 +  O(z^{-2})\rp,\quad
      z\to +\infty\\
    e^{-z^2}  \psi_L(z;\lambda)^2
    &= e^{-z^2}z^\lambda\lp1 + O(z^{-2})\rp,\quad z\to -\infty    
  \end{align*}
  Hence, $\psi_R, \psi_L$ each satisfy a one-sided boundary conditions
  at $\pm\infty$.

  From \eqref{eq:psiphi} we get the connection formulae
  \begin{equation}
    \label{eq:psiphi1}
  \begin{aligned}
    \psi_R(z;\lambda) 
    &= \frac{\sqrt{\pi}}{\Gamma(1/2-\lambda/4)} \phi_0(z;\lambda) 
     - \frac{2\sqrt{\pi}}{\Gamma(-\lambda/4)} \phi_1(z;\lambda),\\
    \psi_L(z;\lambda) 
    &= \frac{\sqrt{\pi}}{\Gamma(1/2-\lambda/4)} \phi_0(z;\lambda) 
      + \frac{2\sqrt{\pi}}{\Gamma(-\lambda/4)} \phi_1(z;\lambda).
  \end{aligned}
  \end{equation}

  Therefore, our boundary conditions amount to imposing the condition
  that $\psi_L$ be proportional to $\psi_R$.  By inspection of
  \eqref{eq:psiphi1}, this can happen in exactly two ways:
  $\psi_L = \psi_R$ and $\psi_L=-\psi_R$.  The first case occurs when
  $\Gamma(-\lambda/4) \to \infty$ that is when
  $\lambda/2 = 2n,\; n=1,2,\ldots$. The second possibility occurs when
  $\Gamma(1/2-\lambda/4) \to \infty$ which occurs when
  $\lambda/2 = 2n+1,\; n=1,2,\ldots$.  In the first case, we recover
  the even Hermite polynomials; in the second the odd Hermite
  polynomials.  This last observation can be restated as the following
  identity
  \[2^{-n} H_n(z) = \sqrt{\pi}\lp \frac{\phi_0(z;2n)}{\Gamma(1/2-n/2)}
  - \frac{2\phi_1(z;2n)}{\Gamma(-n/2)}\rp,\quad n =0,1,2,\ldots. \]
  Therefore the Hermite polynomials are precisely the solutions of
  \eqref{eq:hde0} that satisfy the boundary conditions of
  \eqref{eq:HSLP}, namely they are the only solutions of
  \eqref{eq:hde0} that are square-integrable with respect to
  $e^{-z^2}$ over all of $\R$.
\end{example}

\subsection{Classical Orthogonal Polynomials}
The notion of a Sturm-Liouville system with polynomial eigenfunctions
is the cornerstone idea in the theory of classical orthogonal
polynomials.  The reason is simple: if the eigenfunctions of a
Sturm-Liouville problem \eqref{eq:SLP} are polynomials, then they will
be orthogonal with respect to the corresponding weight $W(z)$.

The following three
types of polynomials --- bearing the names of Hermite, Laguerre, and
Jacobi --- are known as the classical orthogonal polynomials.
\begin{itemize}
\item Hermite polynomials obey the following 3-term recurrence relation:
\begin{equation}
  \label{eq:hermrr}
  2z H_n = H_{n+1} + 2n H_{n-1},\qquad     H_{-1}=0,\;H_0=1.
\end{equation}
They are orthogonal with respect to 
\[ W_{\mathrm{H}}(z) = e^{-z^2},\quad z\in(-\infty,\infty),\]
and satisfy the following differential equation
\begin{equation}
  \label{eq:hermde}
  y'' - 2z y'+2n y=0,\quad y=H_n(z),\quad n\in \N,
\end{equation}

\item Laguerre polynomials $L_n = L^{(\alpha)}_n(z)$ have one parameter
$\alpha$, and satisfy the following 3-term recurrence relation:
{\small\begin{align}
  \label{eq:lagrr}
         2z L_n 
         &=  (n+1)L_{n+1}-(2n+\alpha+1)L_n
           + (n+\alpha)  L_{n-1},\qquad     L_{-1}=0, L_0=1.
\end{align}}
\noindent
For $\alpha>-1$, Laguerre polynomials are orthogonal with respect to
\[ W_{\mathrm{L}} = e^{-z}z^\alpha,\quad z\in (0,\infty).\]
They satisfy the following differential equation
\begin{equation}
  \label{eq:lagde}
  z y'' +(\alpha+1-z) y'+ ny=0
  ,\quad y=L^{(\alpha)}_n(z),\quad n\in \N,
\end{equation}

\item Jacobi polynomials $P_n = P^{(\alpha,\beta)}_n(z)$ have two
parameters, $\alpha,\beta$ and are defined by: {\small\begin{align} z
    P_n=& \frac{2 (n+1) (n+\alpha +\beta +1)}{(2 n+\alpha +\beta +1)
      (2 n+\alpha +\beta +2)}P_{n+1}\\ \nonumber
    &+\frac{\left(\beta ^2-\alpha
      ^2\right)}{(2 n+\alpha +\beta ) (2 n+\alpha +\beta +2)}P_n \\
                                                        \nonumber
        &+\frac{2 (n+\alpha ) (n+\beta )}{(2 n+\alpha +\beta ) (2
          n+\alpha +\beta +1)}P_{n-1},  && P_{-1}=0,\; P_0=1 
\end{align}}
These polynomials obey the differential equation
\begin{equation}
  \label{eq:jacde}
  (1-z^2) P_n''+(\beta-\alpha-z (\alpha+\beta+2)) P_n'+n
  (\alpha+\beta +n+1) P_n=0
\end{equation}
For $\alpha,\beta>-1$ they are orthogonal with respect to
\[ W_{\mathrm{H}} = (1-z)^{\alpha} (1+z)^\beta ,\quad z\in(-1,1).\]

\end{itemize}

\begin{shaded}
\begin{exercise}
  Rewrite the above differential equations in Sturm-Liouville form.
  In each case,  work out the boundary conditions that pick out the
  polynomial solutions.
\end{exercise}
\end{shaded}

The class of Sturm-Liouville problems with polynomial eigenfunctions
 was studied and classified by Solomon Bochner in the following
fundamental result.  Bochner's Theorem was subsequently refined by
Lesky to show that the three classical families of Hermite, Laguerre,
and Jacobi give a full classification of such Sturm-Liouville
problem.

\begin{thm}[Bochner]
  \label{thm:bochner}
  Suppose that an operator 
  \begin{equation}
    \label{eq:Tpqr}
    T[y]=p(z) y'' + q(z) y' + r(z) y
  \end{equation}
  admits eigenpolynomials of every degree; that is, there exist
  polynomials $y_k(z)$ with $\deg y_k = k$ and constants $\lambda_k$
  such that
  \begin{equation}
    \label{eq:Tyklambdak}
     -T[y_k] = \lambda_k y_k,\quad k=0,1,2,\ldots.
  \end{equation}
  Then, necessarily $p,q,r$ are polynomials with 
  \[ \deg p \leq 2,\quad \deg q\leq 1,\quad \deg r= 0.\]
  Moreover, if these polynomials are the orthogonal eigenfunctions of
  a Sturm-Liouville system, then up to an affine transformation of the
  independent variable $z$, they are the classical polynomials of
  Hermite, Laguerre, and Jacobi.
\end{thm}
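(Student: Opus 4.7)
The plan is to prove the two halves of Bochner's theorem in sequence: first, the purely algebraic statement that $p,q,r$ must be polynomials of the stated degrees, and second, the analytic classification of such eigenpolynomial systems under the Sturm--Liouville orthogonality hypothesis, up to affine changes of $z$.

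For the degree bounds I would test the eigenvalue equation $T[y_k]=-\lambda_k y_k$ on the eigenpolynomials of the lowest three degrees. Taking $y_0$ to be a nonzero constant forces $r(z)=-\lambda_0$, so $r$ is a constant. Plugging in a degree-one eigenpolynomial $y_1=az+b$ with $a\neq 0$ and using that $r$ is constant, the relation $a\,q(z)=-\lambda_1 y_1-r\,y_1$ shows $q$ is a polynomial with $\deg q\le 1$. Finally, applying $T$ to a degree-two eigenpolynomial $y_2$, whose second derivative is a nonzero constant, solves for $p(z)$ as a linear combination of polynomials of degree at most two, giving $\deg p\le 2$. Matching leading coefficients at each degree $k$ yields the eigenvalue formula
\[
  \lambda_k = -k(k-1)p_2 - k\,q_1 - r_0,
\]
where $p_2,q_1,r_0$ are the leading coefficients of $p,q,r$. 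The hypothesis that eigenpolynomials exist at \emph{every} degree forces these $\lambda_k$ to be pairwise distinct, which is what prevents degenerate collapses in the classification.

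The classification step proceeds by cases on the root structure of $p(z)$ in $\C$, after exploiting the covariance of $T$ under affine maps $z\mapsto \alpha z+\beta$. The canonical forms are $p=1$, $p=z$, $p=1-z^2$, $p=z^2$, and quadratic $p$ with two complex-conjugate roots. In each surviving case, the degree-one eigenpolynomial pins down $q(z)$ up to a further affine shift and a rescaling, which I would use to bring $q$ to the standard form appearing in the classical ODEs \eqref{eq:hermde}, \eqref{eq:lagde}, \eqref{eq:jacde}. Direct comparison then identifies the eigenpolynomials as Hermite ($p=1$), Laguerre ($p=z$), and Jacobi ($p=1-z^2$). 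The double-root case $p=z^2$ yields the Bessel-type polynomials and the complex-root case a rescaled variant; these are the configurations that must be ruled out.

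The main obstacle is the analytic part: using the Sturm--Liouville hypothesis to exclude the Bessel and complex-root cases and to pin down the parameter ranges. Here I would compute the weight $W(z)=P(z)/p(z)$ with $P(z)=\exp\int q/p$ from \eqref{eq:PWRpqr} in each case, and check (i) that $W$ is positive and integrable on a real interval $I=(a,b)$, and (ii) that the boundary expression $P(z)(y_1'y_2-y_2'y_1)$ in Lagrange's identity \eqref{eq:Lagrange} vanishes at $z=a^{+},b^{-}$ for polynomial inputs. A short computation shows $p=z^2$ produces a weight with no real interval of positivity and integrability where the boundary terms vanish, and similarly for complex roots; this eliminates the Bessel branches. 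In the three surviving cases the vanishing boundary condition further forces $\alpha>-1$ (Laguerre) and $\alpha,\beta>-1$ (Jacobi). The algebraic degree count is essentially bookkeeping; the real content of the theorem is this spectral/integrability argument that isolates the three classical weights.
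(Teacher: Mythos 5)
Your outline reproduces the paper's proof almost step for step: the low-degree bootstrap ($y_0$ forces $r$ constant, $y_1$ forces $\deg q\le 1$, $y_2$ forces $\deg p\le 2$), reduction of $p$ to the five affine normal forms $1,\,z,\,z^2,\,1-z^2,\,1+z^2$, computation of the weight from \eqref{eq:PWRpqr} in each case, and survival of exactly the Hermite, Laguerre and Jacobi branches. But your justification for the crucial elimination step is wrong as stated. It is \emph{not} true that $p=z^2$ ``produces a weight with no real interval of positivity and integrability'': for $b>0$ and $a<0$ the weight $W(z)=e^{-b/z}z^{a-2}$ is positive \emph{and} integrable on $(0,\infty)$, and likewise $e^{b\arctan z}(1+z^2)^{a/2-1}$ is positive and integrable on all of $\R$ for $a<0$. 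Indeed these weights support genuine finite families of orthogonal polynomials --- the Bessel and twisted Jacobi (Romanovsky) polynomials. The paper's actual argument is that only \emph{finitely many} of the eigenpolynomials are square-integrable with respect to these weights, which violates the standing requirement that \emph{all} the $y_k$ lie in ${\mathrm L}^2(W\,dz)$; equivalently, in your criterion (ii) the boundary expression $P(z)(y_1'y_2-y_2'y_1)$ from \eqref{eq:Lagrange} fails to vanish at the infinite endpoint once $\deg y_1+\deg y_2$ is large, since $P(z)=e^{-b/z}z^{a}$ only decays polynomially there. Your criterion (ii), applied to \emph{all} polynomial inputs, does rescue the step, but the ``short computation'' you invoke in its place is the one computation in the proof that does not come out as claimed.

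A second inaccuracy: your claim that eigenpolynomials at every degree force the $\lambda_k$ to be pairwise distinct is false. The operator $T[y]=(1-z^2)y''$ has eigenpolynomials of every degree --- namely $1$, $z$, and $(1-z^2)P^{(1,1)}_{k-2}(z)$ for $k\ge 2$ --- with $\lambda_0=\lambda_1=0$. This does not damage the overall structure, since neither you nor the paper uses distinctness in any essential way and such degenerate parameter choices are killed later by the weight analysis (here $\alpha=\beta=-1$ gives a non-integrable Jacobi weight), but the claim should be deleted or weakened. Relatedly, the paper records one exclusion that your eigenvalue remark was presumably meant to cover: for $p=1$ and $q_1=0$ the operator is strictly degree-lowering, so no eigenpolynomials of degree $\ge 2$ exist at all; this is a purely algebraic observation and needs to be made explicitly rather than derived from the (false) distinctness lemma.
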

\begin{proof}
  Applying \eqref{eq:Tyklambdak} to $k=0,1,2$, we obtain
  \begin{align*}
    -\lambda_0 y_0&= r\\
    -\lambda_1 y_1&= q y_1' + r y_1\\
    -\lambda_2 y_2&= p y_2'' + q y_2' + r y_2 .
  \end{align*}
  By inspection, $r$ is a constant, while $q,p$ are polynomials with
  $\deg q\leq 1$ and $\deg p\leq 2$.

  Up to an affine transformation $z\mapsto s z + t$, the leading
  coefficient $p(z)$ can assume one of the following normal forms:
  \[ 1, z, z^2, 1-z^2, 1+z^2.\]
  Write $q(z) = az + b$.  Applying \eqref{eq:PWRpqr}, the
  corresponding weights have the form
  \begin{align*}
    \text{(i)}\qquad  & W(z)  = e^{\frac{b^2}{2a}} e^{\frac{a}{2} (z+b/a)^2}\\
    \text{(ii)}\qquad & W(z) =  e^{az} z^{b-1}\\
    \text{(iii)}\qquad & W(z) =  e^{-\frac{b}{z}}z^{a-2},\\
    \text{(iv)}\qquad & W(z) = (1-z)^{-(a+b)/2-1} (1+z)^{(b-a)/2-1}\\
    \text{(v)}\qquad & W(z) = e^{b \arctan(z)} (1+z^2)^{a/2-1}.
  \end{align*}
  \begin{itemize}
  \item   For normal form (i), the case $a=0$ is excluded.  If not, the
  resulting operator would be strictly degree lowering, which would
  preclude the existence eigenpolynomials of degrees $\geq 2$. Since
  $p(z)=1$ is invariant with respect to scaling and translation, no
  generality is lost by setting $b=0$, $a=\pm 2$.  The case of $a=-2$
  corresponds to the classical Hermite polynomials.  The case $a=2$
  can be excluded because there is no choice of boundary conditions
  that result in the vanishing of the right side of \eqref{eq:forthog}.

  \item For the normal form (ii), note that $p(z)=z$ is preserved by scaling transformations.  Hence, without loss of
  generality we can take $a=-1$.  This case corresponds to the classical Laguerre
  polynomials.

  \item Normal form (iii) is a bit tricky.  The case $b= 0$ can be ruled out
  because of the absence of suitable boundary conditions.  The
  analysis of $b<0$ and $b>0$ is the same, so suppose that $b>0$.
  Here the only possible boundary conditions are at the endpoints of
  the interval $(0,\infty)$. If $a<0$ then a finite number of
  polynomials can be made to be square integrable with respect to the
  weight in question.  These constitute the so-called Bessel
  orthogonal polynomials, which however fall outside the range of our
  definition --- we require that \emph{all} $y_k$ are
  square-integrable with respect to $W(z)$.

  \item Normal form (iv) corresponds to the Jacobi orthogonal polynomials.

  \item Normal form (v) corresponds to the so-called twisted Jacobi (also called Romanovsky)
  polynomials.  If $a<0$ then a finite number of
  initial degrees   are square-integrable with respect to the
  indicated weight over the interval $(-\infty,\infty)$.  As above,
  this violates our requirement that \emph{all} the $y_k$ be
  square-integrable with respect to $W(z) dz$.
  \end{itemize}
\end{proof}

\subsection{Exceptional Polynomials and Operators}

We now modify the assumption of Bochner's Theorem \ref{thm:bochner} to
arrive at the following.  
\begin{defn}
  We say that $T[y]=p(z) y'' + q(z) y' + r(z) y$ is an
  \emph{exceptional operator} if it admits polynomial eigenfunctions
  for a cofinite number of degrees; that is, there exist polynomials
  $y_k(z),\; k\notin \N\setminus \{ d_1,\ldots, d_m \}$ with
  $\deg y_k = k$ and with $ d_1,\ldots, d_m\in \N$ a finite number of
  exceptional degrees, and constants $\lambda_k$ such that
  \[ -T[y_k] = \lambda_k y_k,\quad k\in \N\setminus\{ d_1,\ldots, d_m
  \}.\]
  Moreover, if it is possible to impose boundary conditions so that the
  polynomials $y_k$ become eigenfunctions of the corresponding
  Sturm-Liouville problem, then we call the
  $\{y_k\}_{k\notin \{d_1,\ldots, d_m \}}$ \emph{exceptional
    orthogonal polynomials}.
\end{defn}

The relaxed assumption that permits for a finite number of missing
degrees allows to escape the constraints of Bochner's theorem and
characterizes a large and interesting new class of operators and
polynomials. 

\begin{example} We next show an example of codimension 2 exceptional Hermite polynomials.
Recall the classical Hermite polynomials defined by \eqref{eq:hermrr}.
Introduce a family of exceptional Hermite polynomials defined by
\begin{equation}
  \label{eq:hHdef}
  \hH_n = \frac{\Wr[H_1,H_2,H_n]}{8(n-1)(n-2)} = H_n+ 4 n H_{n-2}
  + 4 n(n-3) H_{n-4},\quad n\neq 1,2
\end{equation}
where the $H_i(z)$ are classical Hermite polynomials and where $\Wr$
denotes the usual Wronskian determinant:
\[ \Wr[H_1,H_2,H_n] = 
\begin{vmatrix}
  H_1 & H_1' & H_1'' \\
  H_2 & H_2' & H_2''\\
  H_n & H_n' & H_n''
\end{vmatrix}.
\]

\end{example}
\begin{exercise}
  Using the following identity for the classical Hermite polynomials:
  \[ H_n' = 2n H_{n-1}\]
and the 3-term recurrence relation \eqref{eq:hermrr} reduce the
Wronskian expression
\[ \hH_n = \frac{\Wr[H_1,H_2,H_n]}{8(n-1)(n-2)},\quad n\neq 1,2\]
to the right-hand side expression shown in \eqref{eq:hHdef}.
\end{exercise}
Observe that $\deg \hH_n = n$.  We call the resulting sequence of
polynomials \emph{exceptional} because the degree sequence
$\deg \hH_n$ is missing two the degrees -- the exceptional degrees
$n=1$ and $n=2$. We call the $\hH_n(z)$ \emph{exceptional Hermite}
polynomials because they furnish polynomial solutions of the following
modified version of the Hermite differential equation:
\begin{equation}
  \label{eq:EHDE}
   y'' - \left(2z +\frac{8z}{1+2z^2}\right) y' +2n   y=0,\quad y=
   \hH_n(z),\; n\neq 1,2.  
\end{equation}

At first glance, the exceptional modification of Hermite's differential equation
\eqref{eq:EHDE} has a rather peculiar form; indeed it is slightly
paradoxical that a differential equation with rational coefficients
admits polynomial solutions.  However, some of the underlying
structure of the equation comes to light once we ``clear
denominators'' and re-express \eqref{eq:EHDE} using the following, bilinear
formulation:
\begin{equation}
  \label{eq:EHDEbilin}
  \left(\eta y'' - 2\eta' y' + \eta'' y\right) -2z \left(
        \eta y' - \eta' y\right) + 2(n-2)\, \eta y = 0
\end{equation}
where 
\[ \eta= \Wr[H_1,H_2]=4+8z^2 \]
Now the equation is bilinear in $\eta$, which is fixed and $y=y(z)$
the dependent variable, and nearly symmetric with respect to the two
variables.  

We can also rewrite expression \eqref{eq:EHDE} using Sturm-Liouville form, as
\begin{equation}
  \label{eq:EHDESL}
  \left( \hW y'\right)' = \lambda \hW y,
\end{equation}
where
\[ \hW(z) = \frac{e^{-z^2}}{\eta(z)^2},\quad \lambda =-2n. \]
As before, the Sturm-Liouville form implies the orthogonality of the
eigenpolynomials:
\[ \int_{-\infty}^\infty \hH_m(z) \hH_n(z) \hW(z) dz= 0,\quad m\neq
n,\; m,n\neq 1,2\]
It is also possible to show that the exceptional polynomials satisfy
recurrence relations.  However, now there are multiple relations of
higher order:
{ \small
\begin{align}
  \label{eq:rr3}
  4 z(3+2z^2) \hH_n &= \hH_{n+3} + 6n\, \hH_{n+1} + 12 n(n-3)\,
  \hH_{n-1} + 8 n(n-4)(n-5)\, \hH_{n-3},\\
  \label{eq:rr4}
  16z^2(1+z^2) \hH_n &= \hH_{n+4} + 8 n \hH_{n+2} + 4 (6n^2-14n+1)
  \hH_n + \\ \nonumber
  &\qquad + 32 n(n-3)(n-4) \hH_{n-2} + 16 n(n-3)(n-5)(n-6) \hH_{n-4}
\end{align}
} 

\begin{table}
  \centering
  \begin{tabular}{c|c|c}
    $n$ & RHS Degrees in relation \eqref{eq:rr3}  
    & RHS Degrees in
      relation \eqref{eq:rr4} \\ \hline
    \strut 0 & 3 & 4,0\\
    3 & 6,4,0 &  7,5,3\\
    4 & 7,5,3 &  8,6,4,0\\
    5 & 8,6,4 &  9,7,5,3\\
    6 & 9,7,5,3 & 10,8,6,4\\
    7 & 10,8,6,4 & 11,9,7,5,3\\
    $\vdots$ & $\vdots$ & $\vdots$ \\
    $n\geq 7$ & $n+3, n+1, n-1, n-3$ & $n+4,n+2,n,n-2,n-4$
  \end{tabular} 
  \caption{Degrees in the exceptional recurrence relations}
  \label{tab:x2deg}
\end{table}
\noindent
Table \ref{tab:x2deg} lists the degrees of the exceptional polynomials
involved in the above recurrence at the values of $n=0,3,4,\ldots$.
By inspection, $\hH_0$ determines
$\hH_3, \hH_4, \hH_6, \ldots, \hH_{2k},$ $k\geq 2.$ Relation
\eqref{eq:rr3} with $n=5$ then determines $\hH_5$.  After that the
$\hH_{2k+1},\; k\geq 3$ are established.  Observe that
$\hH_n(z),\; n\geq 7$ are determined by both relations \eqref{eq:rr3}
and \eqref{eq:rr4}.  Remarkably, the relations are coherent, in the
sense that both relations give the same value of
$\hH_n(z),\; n\geq 7$.  This may be explained by the fact that the
finite-order difference operators that describe the RHS of
\eqref{eq:rr3} and \eqref{eq:rr4} commute with one another.
\begin{shaded}
\begin{exercise}
  Verify the recurrence relations  \eqref{eq:rr3} and \eqref{eq:rr4} using a computer algebra
  system and show that the  finite-difference operators that define their
  right-hand sides commute.
\end{exercise}
\end{shaded}

Finally, many of the properties of exceptional polynomials are
explained by the fact that there is a hidden relation between them and
their classical counterparts.  Let us define second order operators
\begin{align*}
  T[y] &= y'' - 2z y',\\
  \hT[y] &= y'' - \left(2z +\frac{8z}{1+2z^2}\right) y' 
\end{align*}
and re-express the classical and exceptional Hermite differential
equations in operator form, respectively, as
\[ -T[H_n] =  2n H_n,\; n\in \N \quad   -\hT[\hH_n] = 2n \hH_n,\;
n\neq 1,2.\]
Let us also introduce the second order operator
\[ A[y] = \Wr[H_1, H_2, y] =  4 (1+2z^2) y''-16z y'+16 y.\]

\begin{shaded}
\begin{exercise}
  Verify that the three differential  operators $T,\hT$ and $A$  satisfy the following (second-order)
intertwining relation:
\begin{equation}
  \label{eq:EHDEIR}
 \hT A = A T.
\end{equation}
\end{exercise}
\end{shaded}
Note that in the intertwining relations \eqref{eq:intertwining}, operator $A$ is first order, corresponding to a single-step Darboux transformation. In this case, $A$ is a second-order differential operator that comes from a 2-tep Darboux-Crum transformation with seed functions $H_1$ and $H_2$. 
In general, up to a normalization constant, the exceptional polynomials
are given by applying the intertwiner $A$ to the classical
polynomials:
\[ \hH_n \propto A[H_n].\]
If we take the intertwining relation as proven, we obtain
that
\[ \hT[A[H_n]] = (\hT A)[H_n] = (A T)[H_n] = -2n A[H_n] .\]
Thus, the intertwining relation ``explains'' why the $\hH_n$ are
eigenpolynomials of the exceptional operator $\hT$. This is essentially the same argument as the one used in Exercise 1, albeit with a higher-order intertwiner $A$.

\section{Symmetric Painlev\'e equations and Darboux dressing chains}\label{sec:symP}

And now for something completely different \cite{montypython}, or maybe not ?
The set of six nonlinear second order \p\ equations ${\rm P_I,\dots,P_{VI}}$ have attracted considerable interest in the past 100 years \cite{clarkson2003painleve,gromak2008painleve}.
They have the defining property that their solutions have no movable branch points.
The \p\ equations, whose solutions are called \p\ transcendents, are now considered to be the nonlinear analogues of special functions, cf.~\cite{clarkson2003painleve}. 
These functions, in general, are transcendental in the sense that they cannot be expressed in terms of previously known functions. However, the \p\ equations, except $\Pone$, also possess special families of solutions that can be expressed via rational functions, algebraic functions or the classical special functions, such as Airy, Bessel, parabolic cylinder, Whittaker or hypergeometric functions, for special values of the parameters.

However, rather than studying the Painlev\'e second order scalar equations, we will follow Noumi and Yamada since it will prove to be more useful to rewrite these equations as a system of first order equations, which will allow us  not only to understand the symmetry properties better, but also to generalize these system to higher order equations with the same desired properties.

\begin{defn}
We define the $A_2$-Painlev\'e system as the following system of three coupled nonlinear ODEs
\begin{eqnarray}\label{eq:P4system}
f_0' + f_0(f_1-f_2) &=& \alpha_0, \nonumber\\
f_1' + f_1(f_2-f_0) &=& \alpha_1,\\
f_2' + f_2(f_0-f_1) &=& \alpha_2, \nonumber 
\end{eqnarray}
subject to the condition
\begin{equation}\label{eq:P4normalization}
(f_0+f_1+f_2)'=\alpha_0+\alpha_1+\alpha_2=1.
\end{equation}
where $\alpha_0, \alpha_1, \alpha_2\in\mathbb C$ are complex parameters and $f_i=f_i(z)$ are complex functions.
\end{defn}
If the parameters take on arbitrary values, the general solution of this equation is transcendental. We are interested in this lecture to find solutions to \eqref{eq:P4system} where the functions $f_i=f_i(z)$ are rational functions of $z$. A solution of \eqref{eq:P4system} will be a tuple of the form $(f_0,f_1,f_2 |\alpha_0,\alpha_1,\alpha_2)$.

The reason why this system is relevant is that by eliminating two of the functions, we can reduce system \eqref{eq:P4system} to a single second order nonlinear ODE, that we will call $\Pfour$ because it is the fourth equation in the list of six Painlev\'e equations, namely:

\begin{equation}\label{eq:P4scalar}
y''=\frac{1}{2y}(y')^2+\frac{3}{2} y^3 + 4t y^2 +2(t^2-a)y+\frac{b}{y}
\end{equation}
\begin{shaded}
\begin{exercise}
Show that if the tuple $(f_0,f_1,f_2 |\alpha_0,\alpha_1,\alpha_2)$ is a solution to \eqref{eq:P4system}, then $y=y(t)$ is a solution to \eqref{eq:P4scalar}, where:
\begin{equation}\label{eq:reparam}
f_0=-c y,\quad z=-\frac{t}{c},\quad c=\sqrt{\frac{-1}{2}},\quad a=2(\alpha_1-\alpha_2),\quad b=-2\alpha_0^2
\end{equation}
\end{exercise}
\end{shaded}
We first take the derivative of the first equation in  \eqref{eq:P4system}:
\begin{equation}
f_0''+f_0'(f_1-f_2)+f_0(f_1'-f_2')=0
\end{equation}
Next subtract the third from the second equation to obtain
\[f_1'-f_2'=\alpha_1-\alpha_2-2f_1f_2+f_0(f_1+f_2)  \]
and insert it into the previous equation, to get
\begin{equation}\label{eq:fopp1}
f_0''+f_0'(f_1-f_2)+(\alpha_1-\alpha_2) f_0-2f_1f_2f_0+(f_1+f_2)f_0^2=0.
\end{equation}
From the first equation in \eqref{eq:P4system} and the normalization $f_0+f_1+f_2=z$, we have
\begin{eqnarray}\label{eq:2int1}
f_1-f_2&=&\frac{\alpha_0-f_0'}{f_0}\\
f_1+f_2&=&z-f_0 \label{eq:2int2}
\end{eqnarray}
Now bearing in mind that $4f_1f_2= (f_1+f_2)^2-(f_1-f_2)^2$ we have also
\begin{equation} \label{eq:4f1f2}
4f_1f_2=(z-f_0)^2-\left(\frac{\alpha_0-f_0'}{f_0} \right)^2
\end{equation}
Inserting \eqref{eq:2int1} ,\eqref{eq:2int2} and \eqref{eq:4f1f2} into \eqref{eq:fopp1}, and after some cancellations and grouping terms we arrive at
\begin{equation}
f_0''=\frac{f_0'^2}{2f_0}+\frac{3}{2}f_0^3-2zf_0^2+\left(\frac{z^2}{2}+\alpha_2-\alpha_1 \right)f_0-\frac{\alpha_0^2}{2f_0}
\end{equation}
which after the rescaling of variable, function and parameters shown in \eqref{eq:reparam} leads finally to \eqref{eq:P4scalar}.

Now that we know the equivalence between solutions of \eqref{eq:P4system}, that we will call ${\rm s}\Pfour$, the symmetric form of $\Pfour$, it will be easier to work with the system than with the equation. In particular, Noumi and Yamada showed \cite{noumi1999symmetries} that system \eqref{eq:P4system} in invariant under a symmetry group, which acts by B\"acklund transformations on a tuple of functions and parameters. This symmetry group is
the affine Weyl group $A_2^{(1)}$, generated by the operators
$\{ \mathbf{\pi},\textbf{s}_0, \textbf{s}_1, \textbf{s}_2\}$ whose
action on the tuple $(f_0,f_1,f_2 |\alpha_0,\alpha_1,\alpha_2)$ is
given by:
\begin{eqnarray}\label{eq:BT}
&&{\bf s}_k (f_j)=f_j-\frac{\alpha_k\delta_{k+1,j}}{f_k}+\frac{\alpha_k\delta_{k-1,j}}{f_k}, \nonumber\\
&&{\bf s}_k(\alpha_j)=\alpha_j-2\alpha_j\delta_{k,j}+\alpha_k(\delta_{k+1,j}+\delta_{k-1,j}),\\
&&{\bf \pi}(f_j)=f_{j+1},\qquad  {\bf \pi}(\alpha_j)=\alpha_{j+1} \nonumber
\end{eqnarray}
where $\delta_{k,j}$ is the Kronecker delta and $j,k=0,1,2 \mod(3)$.

The technique to generate rational solutions is to first identify a
number of very simple rational \textit{seed solutions}, and then
successively apply the B\"acklund transformations \eqref{eq:BT} to
generate families of rational solutions.

\begin{shaded}
\begin{exercise}
Check that the tuple $(z,0,0 |1,0,0)$ satisfies \eqref{eq:P4system}. This is one possible \textit{seed solution}. Now use the B\"acklund transformations ${\bf s}_0$ and ${\bf s}_1{\bf s}_0$  to generate two solution tuples, and check explicitly that the obtained solutions solves \eqref{eq:P4system}
\end{exercise}
\end{shaded}
It is obvious that $(z,0,0 |1,0,0)$ satisfies \eqref{eq:P4system}. From \eqref{eq:BT}, the action of ${\bf s}_0$ on the generic tuple $(f_0,f_1,f_2 |\alpha_0,\alpha_1,\alpha_2)$ is given by
\begin{align}
{\bf s}_0(f_0)&=f_0,& &{\bf s}_0(\alpha_0)=-\alpha_0\\
{\bf s}_0(f_1)&=f_1-\frac{\alpha_0}{f_0},& &{\bf s}_0(\alpha_1)=\alpha_1+\alpha_0\\
{\bf s}_0(f_2)&=f_2+\frac{\alpha_0}{f_0},& &{\bf s}_0(\alpha_0)=\alpha_2+\alpha_0
\end{align}
So we have then that ${\bf s}_0(z,0,0 |1,0,0)=\left(z,\frac{-1}{z},\frac{1}{z} |-1,1,1\right)$, and we can readily verify that this tuple satisfies \eqref{eq:P4system}.
In a similar manner, we see that
\[{\bf s}_1 {\bf s}_0(z,0,0 |1,0,0)={\bf s}_1\left(z,\frac{-1}{z},\frac{1}{z} \Big|-1,1,1\right)=\left(0,-\frac{1}{z},z+\frac{1}{z}\,\Big|\,0,-1,2\right) \]
which is also seen to satisfy \eqref{eq:P4system}.

In this way we can iteratively apply B\"acklund transformations on a small set of seed solutions and generate many rational solutions to \eqref{eq:P4system}.
This is a beautiful approach, pioneered by the japanese school, and the transformations \eqref{eq:BT} have a nice geometric interpretation in terms of reflection groups acting on the space of parameters $(\alpha_0,\alpha_1,\alpha_2)$. Note however that the solutions obtained by dressing a given seed solution are hard to write in closed form, and in general the whole procedure is more an algorithm to generate solutions than an explicit enumeration of them. If we ask ourselves how many poles the rational solution ${\bf s}_1^6\, {\bf s}_0^3(z,0,0 |1,0,0)$ has, this might be a difficult question to answer with this representation.

For this reason, we will not pursue this approach henceforth in these notes, and we refer the interested reader to Noumi's book \cite{noumi2004painleve} to learn the geometric theory of Painlev\'e equations, and their connections with other topics in integrable systems ($\tau$-functions, Hirota bilinear equations, Jacobi-Trudi formulas, reductions from KP equation, etc.).

We will concentrate in these lectures on alternative representations of the rational solutions, most notably the determinantal representations \cite{kajiwara1996determinant, kajiwara1998determinant}.

Once we are aware of the symmetry structure of \eqref{eq:P4system}, the system admits a natural generalization to any number of equations, known as
the $A_{N}^{(1)}$-Painlev\'e or the Noumi-Yamada system. The even case ($N=2n$)  is considerably simpler (for reasons that will be
explained later), and it is the one we will focus on this notes.

\begin{defn}
We define the $A_{2n}^{(1)}$-Painlev\'e system (or Noumi-Yamada system) as the following system of $2n+1$ coupled nonlinear ODEs
\end{defn}
\begin{equation}\label{eq:Ansystem}
  f_i'+f_i \left( \sum_{j=1}^n f_{i+2j-1} - \sum_{j=1}^n f_{i+2j}
  \right)=\alpha_i,\qquad i=0,\dots,2n \mod (2n+1)
\end{equation}
subject to the normalization condition
\begin{equation}
  \label{eq:alpha1}
(f_0+\dots+f_{2n})'=\alpha_0+\cdots + \alpha_{2n} =1.
\end{equation}

The symmetry group of this higher order system is the affine Weyl group $A_{2n}^{(1)}$, acting by B\"acklund transformations as in \eqref{eq:BT}. The system has the Painlev\'e property, and thus can be considered a proper higher order generalization of $\rm{s}\Pfour$ \eqref{eq:P4system}, which corresponds to $n=1$.

The goal of this lecture is to develop a systematic procedure to describe rational solutions to system \eqref{eq:BT}, providing an explicit representation of the solutions in terms of Wronskian determinants whose entries are Hermite polynomials. This is an alternative approach to the dressing of seed solutions by B\"acklund transformations described above.

\subsection{Darboux dressing chains}\label{sec:dressing}

The theory of dressing chains, or sequences of Schr\"{o}dinger operators
connected by Darboux transformations was developed by Adler
\cite{adler1994nonlinear}, and Veselov and Shabat
\cite{veselov1993dressing}. The connection between dressing chains and
\p\ equations was already shown in \cite{adler1994nonlinear}
and it has been exploited by some authors 
\cite{takasaki2003spectral,tsuda2005universal,bermudez2012complexb,marquette2013one,marquette2013two,marquette2016,
sen2005darboux,WilloxHietarinta,MateoNegro}. 
This section follows mostly the early works of Adler, Veselov and Shabat.

Consider the following sequence of Schr\"{o}dinger operators
\begin{equation}\label{eq:Lseq}
  L_i = -D_z^2 + U_i ,\qquad D_z= \frac{d}{dz},\quad U_i=U_i(z),\quad
  i\in \Z
\end{equation}
where each operator is related to the next by a Darboux transformation, i.e. by the following factorization
\begin{equation}
  \label{eq:Dxform}
  \begin{aligned}
    L_i &= (D_z + w_i)(-D_z + w_i)+\lambda_i, \quad w_i = w_i(z),\\
    L_{i+1} &= (-D_z + w_i)(D_z + w_i)+\lambda_i.
  \end{aligned}
\end{equation}
It follows that the functions $w_i$ satisfy the Riccati equations
\begin{equation}\label{eq:Riccati}
 w_i' + w_i^2  = U_i - \lambda_i,\quad -w_i' +w_i^2 = U_{i+1}- \lambda_i.
 \end{equation}
Equivalently, $w_i$ are the log-derivatives of $\psi_i$, the seed function of the Darboux transformation that maps $\L_i$ to $\L_{i+1}$
\begin{equation}
  \label{eq:Lipsii}
  L_i\psi_i = \lambda_i\psi_i,\qquad\text{where } w_i = \frac{\psi_i'}{\psi_i}.
\end{equation}
Using \eqref{eq:Lseq} and \eqref{eq:Dxform}, the potentials of the dressing chain are related by
\begin{align}
  U_{i+1} &= U_i - 2 w'_i, \label{eq:Uplus1}\\
  U_{i+n} &=U_i - 2 \left( w'_i+ \cdots + w'_{i+n-1}\right),\quad
  n\geq 2 \label{eq:Uplusn}
\end{align}
If we eliminate the potentials in \eqref{eq:Riccati} and set
\begin{equation}
  \label{eq:alphaidef}
  a_i =  \lambda_{i} - \lambda_{i+1}
\end{equation}
the following chain of coupled equations is obtained
\[
(w_i + w_{i+1})' + w_{i+1}^2 - w_i^2 = a_i ,\quad i\in \Z
\]
Before continuing, note that this infinite chain of equations has the
evident reversal symmetry
\begin{equation}
  \label{eq:reversal1}
  w_i \mapsto -w_{-i},\qquad a_i \mapsto -a_{-i}.
\end{equation}

This infinite chain of equations closes and becomes a finite
dimensional system of ODEs if a cyclic condition is imposed on the
potentials of the chain
\begin{equation}\label{eq:shift}
  U_{i+p} = U_i+\Delta,\quad i\in \Z
 \end{equation}
for some $p\in \N$ and $\Delta \in \C$.  If this holds, then
necessarily $w_{i+p}=w_i$, $a_{i+p}=a_i$, and
\begin{equation}
  \label{eq:Deltasumalpha}
 \Delta= -(a_0 + \cdots + a_{p-1}). 
\end{equation}

\begin{defn}\label{def:wchain}
  A $p$-cyclic Darboux dressing chain (or factorization chain) with
  shift $\Delta$ is a sequence of $p$ functions $w_0,\ldots, w_{p-1}$
  and complex numbers $a_0,\ldots, a_{p-1}$ that satisfy
  the following coupled system of $p$ Riccati-like ODEs
\begin{equation}
  \label{eq:wfchain}
  (w_i + w_{i+1})' + w_{i+1}^2 - w_i^2 = a_i ,\qquad
  i=0,1,\ldots, p-1 \mod(p)  
\end{equation}
subject to the condition \eqref{eq:Deltasumalpha}.
\end{defn}
Note that
transformation 
\begin{equation}
  \label{eq:reversal2}
  w_i \mapsto -w_{-i},\quad a_i \mapsto -a_{-i},\quad
  \Delta\mapsto -\Delta
\end{equation}
projects the reversal symmetry to the finite-dimensional system
\eqref{eq:wfchain}.  Moreover, for $j=0,1\ldots, p-1$ we also have the
cyclic symmetry
\[ w_i \mapsto w_{i + j},\quad a_i \mapsto a_{i+j},\quad
\Delta \mapsto \Delta \qquad i=0,\ldots p-1 \mod(p) \]
In the classification of solutions to \eqref{eq:wfchain} it will be
convenient to regard two solutions related by a reversal symmetry or
by a cyclic permutation as being equivalent.

Adding the $p$ equations \eqref{eq:wfchain} we immediately obtain a
first integral of the system
\[ \sum_{j=0}^{p-1} w_j= \tfrac12z\sum_{j=0}^{p-1} a_j=
-\tfrac12{\Delta}z.\]

The equivalence between the $A_{2n}$-\p\ system \eqref{eq:Ansystem} and the cyclic dressing chain \eqref{eq:wfchain} is given by the following proposition.

\begin{prop}\label{prop:wtof}
If the tuple of functions and complex numbers $(w_0,\dots,w_{2n}|a_0,\dots,a_{2n})$  satisfies a $(2n+1)$-cyclic Darboux dressing chain with shift $\Delta$ as per Definition~\ref{def:wchain}, then the tuple $\left(f_0,\dots,f_{2n}\,\big|\,\a_0,\dots, \a_{2n}\right)$
with
\begin{eqnarray}
f_i(z)&=& c \,(w_i + w_{i+1})\left(cz\right),\qquad i=0,\dots,2n\mod(2n+1),\\
 \a_i&=&c^2 a_i,\\
 c^2&=&-\frac{1}{\Delta}
\end{eqnarray}
 solves the $A_{2n}$-\p\ system \eqref{eq:Ansystem} with normalization \eqref{eq:alpha1}.
\end{prop}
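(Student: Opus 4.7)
The plan is to verify the $A_{2n}$-Painlev\'e system directly by rewriting the cyclic dressing chain equations in the variables $\tilde f_i=w_i+w_{i+1}$ and then absorbing the constant $c$ into a rescaling of the independent variable.

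First I would factor the Riccati-like chain equation \eqref{eq:wfchain} as
\[ (w_i+w_{i+1})' + (w_{i+1}-w_i)(w_{i+1}+w_i) = a_i, \]
i.e.\ $\tilde f_i' + (w_{i+1}-w_i)\tilde f_i = a_i$, where $\tilde f_i := w_i+w_{i+1}$. The core algebraic observation to establish is the telescoping identity
\[ w_{i+1}-w_i = \sum_{j=1}^{n}\tilde f_{i+2j-1}-\sum_{j=1}^{n}\tilde f_{i+2j}\qquad (\bmod\, 2n+1). \]
This follows from expanding the alternating sum $\sum_{j=1}^{2n}(-1)^{j+1}\tilde f_{i+j} = \sum_{j=1}^{2n}(-1)^{j+1}(w_{i+j}+w_{i+j+1})$: all intermediate $w_{i+k}$ cancel in telescoping pairs, leaving only $w_{i+1}-w_{i+2n+1}=w_{i+1}-w_i$ by the cyclicity $w_{i+2n+1}=w_i$. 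The fact that $2n+1$ is odd is exactly what makes the signs of the endpoint terms opposite, so the alternating sum closes with the desired difference; this is the step that fails for even-indexed chains and singles out the $A_{2n}$ case.

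Combining the factored chain with this identity gives, in the variable $z$,
\[ \tilde f_i' + \tilde f_i\Bigl(\sum_{j=1}^n\tilde f_{i+2j-1}-\sum_{j=1}^n\tilde f_{i+2j}\Bigr)=a_i. \]
Next I would implement the rescaling $f_i(z)=c\,\tilde f_i(cz)$, $\alpha_i=c^2a_i$. Under this dilation, each derivative picks up a factor $c^2$ (one $c$ from the chain rule, one $c$ from the prefactor), and each quadratic product $f_i f_j$ also scales by $c^2$. So the left-hand side becomes $c^2$ times the chain equation evaluated at $cz$, which equals $c^2 a_i=\alpha_i$, establishing \eqref{eq:Ansystem}.

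Finally, for the normalization \eqref{eq:alpha1}, I would sum the definitions to get $\sum_{i=0}^{2n}\tilde f_i = 2\sum_{i=0}^{2n}w_i$ by cyclic relabelling; by the first integral $\sum w_i=-\tfrac{1}{2}\Delta z$ noted just before the proposition, this yields $\sum\tilde f_i=-\Delta z$. Hence $\sum f_i(z)=c\cdot(-\Delta)\cdot cz=-c^2\Delta z=z$, whose derivative is $1$. The parameter normalization $\sum\alpha_i=c^2\sum a_i=c^2(-\Delta)=1$ follows from $\sum a_i=-\Delta$ (equation \eqref{eq:Deltasumalpha}) and $c^2=-1/\Delta$. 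I expect the main obstacle to be purely bookkeeping in the telescoping identity, in particular tracking indices modulo $2n+1$ and the sign pattern that distinguishes odd cycles from even ones; once that identity is in hand, the remainder is a straightforward rescaling verification.
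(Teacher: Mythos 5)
Your proof is correct and follows essentially the same route as the paper: the telescoping identity you establish is precisely the paper's relation \eqref{eq:wdif} (there deduced from the inverse formula \eqref{eq:ftow}, valid only for odd period), and the concluding rescaling argument with $c^2=-1/\Delta$ and the normalization check are the same. The only cosmetic difference is that you verify the alternating-sum identity directly, making the role of the odd period $2n+1$ explicit without invoking invertibility of the map $w\mapsto f$.
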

\begin{proof}
The linear transformation
\begin{equation}\label{eq:wtof}
f_i=w_i+w_{i+1}, \qquad i=0,\dots,2n\mod(2n+1)
\end{equation}
is invertible (only in the odd case $p=2n+1$), the inverse transformation being
\begin{equation}\label{eq:ftow}
  w_i= \tfrac{1}{2} \sum_{j=0}^{2n} (-1)^j f_{i+j}, \qquad
  i=0,\dots,2n\mod(2n+1) 
\end{equation}
They imply the relations
\begin{equation}\label{eq:wdif}
  w_{i+1}-w_i = \sum_{j=0}^{2n-1} (-1)^j f_{i+j+1}, \qquad
  i=0,\dots,2n\mod(2n+1). 
\end{equation}
Inserting \eqref{eq:wtof} and \eqref{eq:wdif} into the equations of the cyclic dressing chain \eqref{eq:wfchain} leads to the $A_{2n}$-\p\ system \eqref{eq:Ansystem}. For any constant $c\in\mathbb C$, the scaling transformation
\[f_i\mapsto c f_i,\quad z\mapsto cz,\quad \a_i\mapsto c^2 \a_i \]
preserves the form of the equations \eqref{eq:Ansystem}. The choice $c^2=-\frac{1}{\Delta}$ ensures that the normalization \eqref{eq:alpha1} always holds, for dressing chains with different shifts $\Delta$.
\end{proof}

\begin{rem}
$(2n)$-cyclic dressing chains and $A_{2n-1}$-\p\ systems are also related, but the mapping is given by a rational rather than a linear function. A full treatment of this even cyclic case (which includes  $\Pfive$ and its higher order hierarchy) is considerably harder and shall be treated elsewhere.
\end{rem}

The problem now becomes that of finding and classifying cyclic dressing chains, i.e. Schr\"{o}dinger operators and sequences of Darboux transformations that reproduce the initial potential up to an additive shift $\Delta$ after a fixed given number of transformations.

The theory of exceptional polynomials is intimately related with families of Schr\"{o}dinger operators connected by Darboux transformations \cite{gomez2013conjecture,garcia2016bochner}. Constructing cyclic dressing chains on this class of potentials becomes a feasible task, and knowledge of the effect of rational Darboux transformations on the potentials suggests that the only family of potentials to be considered in the case of odd cyclic dressing chains are the rational extensions of the harmonic oscillator \cite{gomez2013rational}, which are exactly solvable potentials whose eigenfunctions are expressible in terms of exceptional Hermite polynomials.

Each potential in this class can be indexed by a finite set of integers (specifying the sequence of Darboux transformations applied on the harmonic oscillator that lead to the potential), or equivalently by a Maya diagram, which becomes  very useful representation to capture a notion of equivalence and relations of the type \eqref{eq:shift}.

As mentioned before, the fact that all rational odd cyclic dressing chains (and equivalently rational solutions to the $A_{2n}$-\p\ system) must \textit{necessarily} belong to this class remains an open question. We conjecture that this is indeed the case, and no  rational solutions other than the ones described in the following sections exist.

\section{Rational extensions of the Harmonic oscillator}\label{sec:Maya}

\subsection{Maya diagrams}

In this Section we construct odd cyclic dressing chains on potentials belonging to the class of rational extensions of the harmonic oscillator. Every such potential is represented by a Maya diagram, a rational Darboux transformation acting on this class will be a flip operation on a Maya diagram and cyclic Darboux chains correspond to cyclic Maya diagrams. With this representation, the main problem of constructing rational cyclic Darboux chains becomes purely algebraic and combinatorial.

Following Noumi \cite{noumi2004painleve}, we define a Maya diagram in the following manner.
\begin{defn}
  A Maya diagram is a set of integers $M\subset\Z$ that contains a
  finite number of positive integers, and excludes a finite number of
  negative integers.  We will use $\cM$ to denote the set of all Maya
  diagrams.
\end{defn}

\begin{defn}\label{def:index}
Let $m_1>m_2>\cdots$ be the elements of a Maya diagram $M$ arranged in decreasing order. By assumption, there exists a unique integer $s_M\in \Z$ such that $m_i = -i+s_M$ for all
$i$ sufficiently large. We define $s_M$ to be the index of $M$.
\end{defn}

We visualize a Maya diagram as a horizontally extended sequence of
$\boxdot$ and $\emptybox$ symbols with the filled symbol $\boxdot$ in
position $i$ indicating membership $i\in M$. The defining assumption
now manifests as the condition that a Maya diagram begins with an
infinite filled $\boxdot$ segment and terminates with an infinite
empty $\emptybox$ segment.

\begin{defn}

Let $M$ be a Maya diagram, and 
\[ M_-= \{ -m-1 \colon m\notin M, m<0\},\qquad M_+ = \{ m\colon m\in
M\,, m\geq 0 \}. \]
Let $s_1>s_2>\cdots > s_p$ and $t_1> t_2>\dots> t_q$ be the
elements of $M_-$ and $M_+$ arranged in descending order. 
 
We define the \textit{Frobenius symbol} of $M$ to be the double
list $(s_1,\ldots, s_p \mid t_q,\ldots, t_1)$.
\end{defn}
It is not hard to show that $s_M=q-p$ is the index of $M$.  The classical Frobenius symbol
\cite{andrews2004integer,olsson1994combinatorics,andrews1998theory} corresponds to the zero index case where $q=p$.
If $M$ is a Maya diagram, then for any $k\in \Z$ so is
\[ M+k = \{ m+k \colon m\in M \}.\]
The behaviour of the index $s_M$ under translation of $k$ is given by
\begin{equation}\label{eq:indexshift}
M'=M+k\quad \Rightarrow \quad s_{M'}=s_M+k.
\end{equation}
We will refer to an equivalence class of Maya diagrams related by such
shifts as an \textit{unlabelled Maya diagram}. One can visualize the
passage from an unlabelled to a labelled Maya diagram as the choice of
placement of the origin.

A Maya diagram $M\subset \Z$ is said to be in standard form if $p=0$
and $t_q>0$.  Visually, a Maya diagram in standard form has only
filled boxes $\boxdot$ to the left of the origin and one empty box
$\emptybox$ just to the right of the origin. Every unlabelled Maya
diagram permits a unique placement of the origin so as to obtain a
Maya diagram in standard form.

\begin{shaded}
\begin{exercise}
Draw the box-and-ball representation of the Maya diagram
\[M=\{\dots,-9,-8,-7,-5,-4,-1,1,2\}.   \]
Find the Frobenius symbol and the index of $M$. Find a translation $k$ such that $M'=M+k$ is in standard form, and write the Frobenius symbol, index and  box-and-ball representation of $M'$.
\end{exercise}
\end{shaded}

The solution to the previous exercise can be found in the figure below.
\begin{figure}[ht]
  \includegraphics[width=0.95\textwidth]{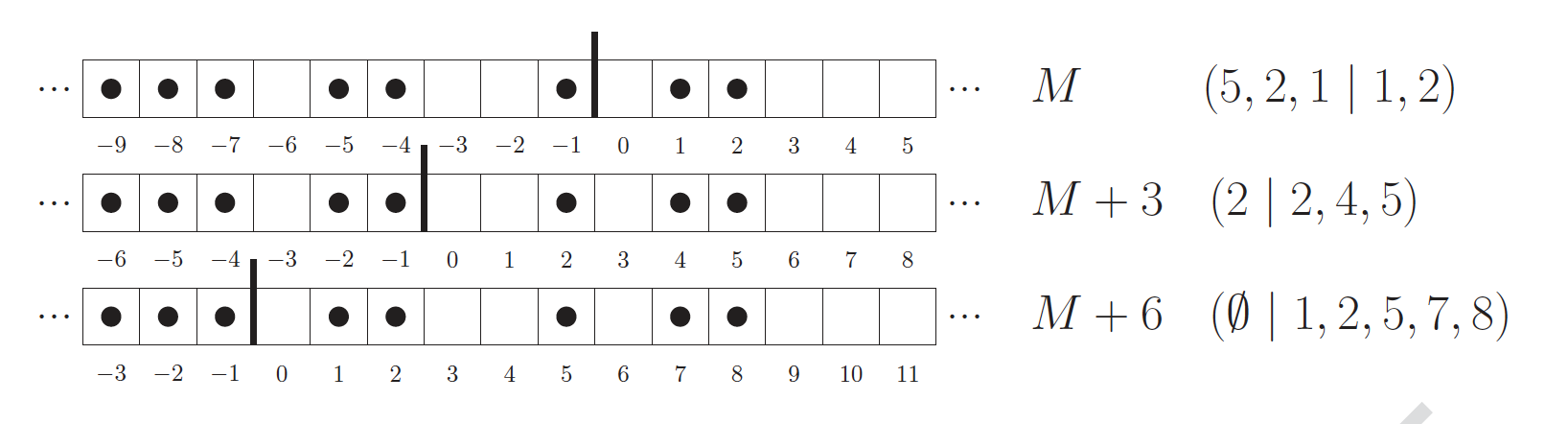}
\caption{Three equivalent Maya diagrams corresponding to the partition
$\lambda=(4,4,3,1,1)$, together with their Frobenius representation.}
  \label{fig:equivM}
\end{figure}

Observe that the third diagram is in standard form, so $k=6$ is the necesary shift. 
\subsection{Hermite pseudo-Wronskians}

We can interpret a Maya diagram with Frobenius symbol $(s_1,\dots,s_r|t_q,\dots,t_1)$ as the multi-index that specifies a multi-step rational Darboux transformation on the harmonic oscillator, i.e. $L\mapsto L_M$, where
\[L_M=-D_{xx} + x^2 -2\left( \Wr[\varphi_{-s_1},\dots,\varphi_{-s_r},\varphi_{t_1},\dots,\varphi_{t_q}] \right)_{xx} \]
where $\varphi_k$ are the seed functions for rational Darboux transformations of the harmonic oscillator described in \eqref{eq:seedHO}.
The first tuple in the Frobenius symbol specifies seed functions with conjugate Hermite polynomials in \eqref{eq:seedHO} (virtual states) while the second tuple 
specifies the bound states in \eqref{eq:seedHO}.
Getting rid of an overall exponential factor, we can associate to every Maya diagram a polynomial called a Hermite pseudo-Wronskian.
\begin{defn}
  Let $M$ be a Maya diagram and $(s_1,\dots,s_r|t_q,\dots,t_1)$ its
  corresponding Frobenius symbol. Define the polynomial
  \begin{equation}\label{eq:pWdef1} H_M = e^{-rx^2}\Wr[ e^{x^2}
    \th_{s_1},\ldots, e^{x^2} \th_{s_r}, H_{t_q},\ldots H_{t_1} ],
  \end{equation} where $\Wr$ denotes the Wronskian determinant of the
  indicated functions, and
  \begin{equation}
    \label{eq:thndef}
    \th_n(x)={\rm i}^{-n} H_{n}({\rm i}x)
  \end{equation}
  is the $n\supth$ degree conjugate Hermite polynomial.
\end{defn}

It is not evident that $H_M$ in \eqref{eq:pWdef1} is a polynomial, but this becomes clear once we represent it using a slightly different determinant.

\begin{prop}\label{prop:HM} The Wronskian $H_M$ admits the following alternative
  determinantal representation
  \begin{equation}\label{eq:pWdef2} H_M =
    \begin{vmatrix} \th_{s_1} & \th_{s_1+1} & \ldots &
\th_{s_1+r+q-1}\\ \vdots & \vdots & \ddots & \vdots\\ \th_{s_r} &
\th_{s_r+1} & \ldots & \th_{s_r+r+q-1}\\ H_{t_q} & D_x H_{t_q} &
\ldots & D_x^{r+q-1}H_{t_q}\\ \vdots & \vdots & \ddots & \vdots\\
H_{t_1} & D_x H_{t_1} & \ldots & D_x^{r+q-1}H_{t_1}
    \end{vmatrix}
  \end{equation}

\end{prop}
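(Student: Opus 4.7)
The plan is to reduce the Wronskian with exponential factors to a polynomial determinant by exploiting a simple identity relating the derivative operator to the index-raising operation on the conjugate Hermite polynomials. Specifically, I want to establish the key formula
\[ D_x\!\left(e^{x^2}\tilde{H}_s\right) = e^{x^2}\tilde{H}_{s+1}, \]
and then use it to move all derivatives through the exponential factor, after which the factor $e^{rx^2}$ pulled out of the first $r$ rows cancels the prefactor $e^{-rx^2}$.

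First I would verify the key identity. Starting from $\tilde{H}_s(x) = \mathrm{i}^{-s}H_s(\mathrm{i}x)$, differentiation together with the standard relation $H_s'=2sH_{s-1}$ gives $\tilde{H}_s' = 2s\tilde{H}_{s-1}$. The three-term recurrence $H_{n+1}=2zH_n-2nH_{n-1}$, evaluated at $\mathrm{i}x$ and multiplied by $\mathrm{i}^{-(n+1)}$, yields $\tilde{H}_{n+1} = 2x\tilde{H}_n + 2n\tilde{H}_{n-1}$. Combining the two,
\[ D_x\!\left(e^{x^2}\tilde{H}_s\right) = e^{x^2}\!\left(2x\tilde{H}_s + \tilde{H}_s'\right) = e^{x^2}\!\left(\tilde{H}_{s+1} - 2s\tilde{H}_{s-1} + 2s\tilde{H}_{s-1}\right) = e^{x^2}\tilde{H}_{s+1}. \]
A straightforward induction then gives $D_x^k(e^{x^2}\tilde{H}_s) = e^{x^2}\tilde{H}_{s+k}$ for every $k\geq 0$.

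Next I would write out the Wronskian in \eqref{eq:pWdef1} as the determinant of the $(r+q)\times(r+q)$ matrix whose $(i,j)$ entry is the $(j-1)$-st derivative of the $i$-th function. For the first $r$ rows the entry is $D_x^{j-1}(e^{x^2}\tilde{H}_{s_i}) = e^{x^2}\tilde{H}_{s_i+j-1}$ by the identity above, while the remaining $q$ rows are $D_x^{j-1}H_{t_k}$. Factoring $e^{x^2}$ out of each of the first $r$ rows produces $e^{rx^2}$, which cancels exactly against the $e^{-rx^2}$ prefactor in the definition, yielding precisely the determinant \eqref{eq:pWdef2}.

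There is no genuine obstacle here: the whole proof rests on the identity $D_x(e^{x^2}\tilde{H}_s)=e^{x^2}\tilde{H}_{s+1}$, which is a one-line consequence of the derivative rule and three-term recurrence for Hermite polynomials. The only thing to be careful about is bookkeeping of the sign/index conventions for $\tilde{H}_n$, but once the key identity is in place the polynomial character of $H_M$ is manifest from \eqref{eq:pWdef2}, since every entry is a polynomial in $x$.
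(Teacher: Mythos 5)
Your proposal is correct and follows essentially the same route as the paper, which proves the identity from the relation $D_x\bigl(e^{x^2}\th_n\bigr)=e^{x^2}\th_{n+1}$ in \eqref{eq:hermids} together with row factorization of the determinant; you simply make the derivation of that key identity (via $\th_n'=2n\th_{n-1}$ and the conjugate three-term recurrence) and the cancellation of $e^{rx^2}$ against the prefactor $e^{-rx^2}$ fully explicit. No gaps.
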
 
\noindent
The term Hermite
pseudo-Wronskian was coined in \cite{gomez2016durfee} because  \eqref{eq:pWdef2} is a mix of a Casoratian and a Wronskian
determinant.

\begin{exercise}
\begin{shaded}
Prove Proposition \ref{prop:HM}, i.e. prove the relation 
\[  H_M = e^{-rx^2}\Wr[ e^{x^2}
    \th_{s_1},\ldots, e^{x^2} \th_{s_r}, H_{t_q},\ldots H_{t_1} ]=  \begin{vmatrix} \th_{s_1} & \th_{s_1+1} & \ldots &
\th_{s_1+r+q-1}\\ \vdots & \vdots & \ddots & \vdots\\ \th_{s_r} &
\th_{s_r+1} & \ldots & \th_{s_r+r+q-1}\\ H_{t_q} & D_x H_{t_q} &
\ldots & D_x^{r+q-1}H_{t_q}\\ \vdots & \vdots & \ddots & \vdots\\
H_{t_1} & D_x H_{t_1} & \ldots & D_x^{r+q-1}H_{t_1}
    \end{vmatrix}\]
    \vskip0.2cm
\end{shaded}
\end{exercise}

 The desired identity follows by the fundamental relations satified by Hermite polynomials
  \begin{equation}
    \label{eq:hermids}
    \begin{aligned}    \noindent
      &D_x H_n(x) = 2n H_{n-1}(x),\quad n\geq 0,\\     \noindent
      &D_x \th_n(x) = 2n \th_{n-1}(x),\quad n\geq 0,\\     \noindent
      &2x H_n(x) = H_{n+1}(x) + 2n H_{n-1}(x),\\     \noindent
      &2x \th_n(x) = \th_{n+1}(x) - 2n \th_{n-1}(x),\\
      &D_x (e^{x^2} \th_n(x)) 
        = e^{x^2}\th_{n+1}(x),\\
      &D_x (e^{-x^2} h_n(x)) = -e^{-x^2} h_{n+1}(x).
    \end{aligned}
  \end{equation}
  together with the Wronskian identity
  \begin{equation}
    \label{eq:Wrhomog}
    \Wr[g f_1,\ldots, g f_s] = g^s \Wr[f_1,\ldots, f_s], 
  \end{equation}

One remarkable property satisfied by all Maya diagrams in the same equivalence class, is that their associated Hermite pseudo-Wronskians enjoy a very simple relation: with an appropriate scaling, the Hermite pseudo-Wronskian of a given Maya diagram is invariant under translations.

\begin{prop}\label{prop:equiv}
  Let  $\hat{H}_M $ be the normalized pseudo-Wronskian
  \begin{equation}
    \label{eq:hHdef}
    \hat{H}_M = \frac{(-1)^{rq}H_M}{\prod_{1\leq i<j\leq r} (2s_j-2s_i)\prod_{1\leq
        i<j\leq q}
      (2 t_i-2t_j)}.
  \end{equation}
Then for any Maya diagram $M$ and $k\in\Z$ we have
  \begin{equation} \label{eq:HMequiv}
       \hat{H}_M =  \hat{H}_{M+k}.
  \end{equation}
\end{prop}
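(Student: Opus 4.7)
The plan is to prove the single-step identity $\hat H_{M+1} = \hat H_M$ for every Maya diagram $M$; the general statement $\hat H_M = \hat H_{M+k}$ then follows by iterating in either direction. I split the single step into two cases depending on whether or not $-1 \in M$, since this determines how the Frobenius symbol of $M$ compares to that of $M+1$. The uniform trick, applied in both cases, is that one of the two diagrams has a $0$-index in its Frobenius symbol, which produces a row of the form $(1,0,\ldots,0)$ in an appropriate determinantal representation, opening the door to a cofactor expansion.

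\textbf{Case (i)}: $-1 \in M$ (equivalently, all $s_i \geq 1$, or $p = 0$). Then the Frobenius symbol of $M+1$ is $(s_1-1,\ldots,s_p-1 \mid 0, t_q+1,\ldots,t_1+1)$, and the new index $t'_{q+1}=0$ yields a row $(H_0, D_x H_0,\ldots, D_x^{p+q} H_0) = (1,0,\ldots,0)$ in \eqref{eq:pWdef2}. Cofactor expansion along that row, combined with $D_x^k H_{t_j+1} = 2(t_j+1) D_x^{k-1} H_{t_j}$ to convert the remaining Hermite rows into those of $H_M$'s matrix, yields $H_{M+1} = (-1)^p \prod_{j=1}^q 2(t_j+1)\cdot H_M$. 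A direct inspection of the Vandermonde products in the definition of $\hat H_M$ shows that $D_{M+1} = D_M \prod_j 2(t_j+1)$, and the sign shift $(-1)^{p(q+1)} = (-1)^p (-1)^{pq}$ in the $(-1)^{rq}$ prefactor cancels the cofactor sign, giving $\hat H_{M+1} = \hat H_M$.

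\textbf{Case (ii)}: $-1 \notin M$, i.e.\ $s_p = 0$. The Frobenius symbol of $M+1$ is now $(s_1-1,\ldots,s_{p-1}-1 \mid t_q+1,\ldots,t_1+1)$, and no row of \eqref{eq:pWdef2} is of the required sparse form. I get around this by rewriting $H_M$ via a dual Wronskian, obtained by multiplying every function in \eqref{eq:pWdef1} by $e^{-x^2}$ and applying \eqref{eq:Wrhomog}:
\[ H_M = e^{q x^2}\,\Wr\bigl[\th_{s_1}, \ldots, \th_{s_p}, e^{-x^2}H_{t_q}, \ldots, e^{-x^2}H_{t_1}\bigr]. \]
Since $\th_0 = 1$, the row for $\th_{s_p}$ in this Wronskian is $(1,0,\ldots,0)$; cofactor expansion along it, together with $D_x \th_n = 2n\,\th_{n-1}$ and $D_x(e^{-x^2}H_n) = -e^{-x^2}H_{n+1}$, recovers the analogous dual Wronskian for $M+1$ up to the scalar factor $(-1)^{p+q-1}\prod_{i=1}^{p-1}2s_i$. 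That factor is compensated by the change in the Vandermonde denominator: $D_M$ contains $\prod_{i=1}^{p-1}(2 s_p - 2 s_i) = (-1)^{p-1}\prod_i 2 s_i$, which is absent from $D_{M+1}$. Once the prefactor $(-1)^{pq}/(-1)^{(p-1)q}$ is accounted for, every sign and every $2 s_i$ cancels, and one gets $\hat H_{M+1} = \hat H_M$.

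The main obstacle is the pure bookkeeping of signs from three independent sources -- the cofactor expansion, the derivative identities for $H_n$ and $\th_n$, and the antisymmetric Vandermonde prefactor $\prod(2 s_j - 2 s_i)\prod(2 t_i - 2 t_j)$ -- and the need to confirm that they collapse exactly to $\hat H_{M+1} = \hat H_M$. The conceptual insight that makes case (ii) no harder than case (i) is the passage to the dual Wronskian, which reveals a $\th_0 = 1$ sparse row mirroring the $H_0 = 1$ trick of case (i); once this symmetry between the two formulations is exploited, the two cases become instances of the same cofactor argument.
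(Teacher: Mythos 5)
Your proposal is correct and follows exactly the strategy the paper prescribes: reduce by induction to the single-step identity $\hat H_{M+1}=\hat H_M$ (the paper leaves this step as an exercise, citing \cite{gomez2016durfee}). Your two-case cofactor argument supplies that step soundly --- the sparse $H_0$ row in the representation \eqref{eq:pWdef2} when $-1\in M$, and the dual-Wronskian $\th_0$ row when $s_p=0$ --- and the bookkeeping checks out in both cases: in case (i), $H_{M+1}=(-1)^p\prod_j 2(t_j+1)\,H_M$ together with $D_{M+1}=D_M\prod_j 2(t_j+1)$ and the prefactor $(-1)^{p(q+1)}=(-1)^{pq}(-1)^p$ gives $\hat H_{M+1}=\hat H_M$, while in case (ii) the net sign $(-1)^{(p-1)q}(-1)^{p+q-1}(-1)^{p-1}=(-1)^{pq}$ and the cancellation of $\prod_{i=1}^{p-1}2s_i$ between the cofactor factor and the Vandermonde denominator yield the same conclusion.
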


The proof of this Proposition is not too hard and proceeds by induction: it is enough to prove the equality by a shift of $k=1$. We leave it as an exercise for the interested reader. The proof can be seen in \cite{gomez2016durfee}. At least, to gain some practice and convince ourselves of this result, we propose the following exercise.

\begin{shaded}
\begin{exercise}
Let $M$ be the Maya diagram with Frobenius symbol $(3,2|2,4)$. Write down $\hat H_M$, $\hat H_{M+3}$ and $\hat H_{M-4}$. Compute the determinants and check that \eqref{eq:HMequiv} is verified.
\end{exercise}
\end{shaded}

The remarkable aspect of equation \eqref{eq:HMequiv} is that the identity involves determinants of different sizes. As mentioned above, every unlabelled Maya diagram contains a Maya diagram in standard form, and its associated Hermite pseudo-Wronskian \eqref{eq:pWdef1} is just an ordinary Wronskian determinant whose entries are Hermite polynomials.An interesting problem is to determine the smallest determinant in a given equivalence class, i.e. the minimum number of Darboux transformations to reach a givne potential. The details on how to solve this problem are given in \cite{gomez2016durfee}. 

Due to Proposition \ref{prop:equiv}, we could restrict the analysis without loss of generality to Maya diagrams in standard form and Wronskians of Hermite polynomials, but we will employ the general notation as it 
brings conceptual clarity to the description of Maya cycles.

We will now introduce and study a class of potentials for
Schr\"odinger operators that will be used as building blocks for
cyclic dressing chains: the set of rational extensions of the harmonic oscillator, which, as we will see, amounts to the set of potentials that one can obtain from $U(x)=x^2$ by applying rational Darboux-Crum transformations.

\subsection{Rational extensions of the harmonic oscillator}

\begin{defn}
A rational extension of the harmonic
oscillator is a potential of the form
\[ U(x) = x^2 + \frac{a(x)}{b(x)},\qquad a,b \text{
  polynomials},\qquad \deg a\leq \deg b, \]
that is \textit{exactly solvable by polynomials}, in the sense of Definition~\ref{def:ESP}.
\end{defn}

If $b(x)$ has no real zeros, then $L$ is a Sturm-Liouville operator on $\R$ with quasi-polynomial eigenfunctions. The next Proposition proved in \cite{gomez2013rational} states
that rational extensions of the harmonic oscillator can be put in one
to one correspondence with Maya diagrams. The details of this result are based on the theory of trivial monodromy potentials and they exceed the scope of these lecture notes. The interested reader is referred to \cite{gomez2013rational} and \cite{oblomkov1999monodromy} for further details.

\begin{prop}
  \label{prop:ratext}
Let $M\subset \Z$ be a Maya diagram.  Define
\begin{equation}
  \label{eq:UMdef}
  U_M(x) = x^2 - 2 D_x^2 \log H_M + 2s_M,
\end{equation}
where $H_M$ is the corresponding pseudo-Wronskian \eqref{eq:pWdef1}-
\eqref{eq:pWdef2}, and $s_M\in \Z$ is the index of $M$.
Up to an additive constant, every rational extension of the harmonic
  oscillator takes the form \eqref{eq:UMdef}.
\end{prop}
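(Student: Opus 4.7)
The plan is to split the claim into two directions: first, to verify that each $U_M$ is itself a rational extension of the harmonic oscillator exactly solvable by polynomials, and then to establish that every such rational extension arises this way. The first direction is a direct Darboux-Crum calculation, whereas the second is the real content of the proposition and must be imported from Oblomkov's classification of trivial-monodromy potentials.

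For the forward direction I would read the Frobenius symbol $(s_1,\ldots,s_r \mid t_q,\ldots,t_1)$ of $M$ as a multi-index of seed functions $\varphi_{-s_1},\ldots,\varphi_{-s_r},\varphi_{t_1},\ldots,\varphi_{t_q}$ from \eqref{eq:seedHO}, and apply the Darboux-Crum formula of Section~\ref{sec:Crum} to the harmonic oscillator $L=-D_{xx}+x^2$, obtaining
\[
L_M = -D_{xx}+ x^2 - 2(\log W_M)'',\qquad W_M=\Wr[\varphi_{-s_1},\dots,\varphi_{-s_r},\varphi_{t_1},\dots,\varphi_{t_q}].
\]
Since each virtual-state seed carries a prefactor $e^{x^2/2}$ and each bound-state seed a prefactor $e^{-x^2/2}$, factoring these exponentials row by row in the determinant (via the identity \eqref{eq:Wrhomog}) yields $W_M = C\, e^{(r-q)x^2/2}\, H_M$ for a nonzero constant $C$, where $H_M$ is the pseudo-Wronskian \eqref{eq:pWdef1}. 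Differentiating $\log W_M$ twice, the quadratic part of the exponent contributes the constant $r-q=-s_M$, which through the overall factor $-2$ becomes precisely the $+2s_M$ shift in \eqref{eq:UMdef}.

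I would then confirm that $U_M$ fits the two structural requirements of a rational extension exactly solvable by polynomials. Writing $(\log H_M)'' = (H_M H_M'' - (H_M')^2)/H_M^2$ shows $U_M-x^2=a/b$ with $b=H_M^2$ of degree $2\deg H_M$ and $\deg a \le 2\deg H_M -2 \le \deg b$. For the polynomial eigenfunctions, Crum's formula \eqref{eq:Crum} applied to each classical eigenfunction $\varphi_k = e^{-x^2/2}H_k$ not already among the seeds produces an eigenfunction of $L_M$ of the form $\mathrm{const}\cdot e^{-x^2/2}\, H_{M\cup\{k\}}/H_M$, where $H_{M\cup\{k\}}$ is a pseudo-Wronskian of the enlarged Maya diagram; taking $\mu(x)=e^{-x^2/2}/H_M(x)$, $z(x)=x$ and reindexing by effective degree brings this into the form $\mu(x)y_n(z(x))$ of Definition~\ref{def:ESP} for all but finitely many $n\in\N$.

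The converse direction is the main obstacle and I would not attempt it directly. Instead I would appeal to Oblomkov's classification \cite{oblomkov1999monodromy}: a rational extension of the harmonic oscillator that admits a cofinite family of polynomial-type eigenfunctions must have trivial monodromy at each of its poles (otherwise single-valued eigenfunctions cannot exist near a pole in the required cofinite profusion), and Oblomkov's theorem then asserts that every such potential is obtained from $x^2$ by an iterated rational Darboux transformation with seeds drawn from \eqref{eq:seedHO}. These iterations are labelled bijectively by Maya diagrams, and the shift freedom in passing between representatives of an unlabelled Maya diagram absorbs the \emph{additive constant} in the statement, completing the correspondence.
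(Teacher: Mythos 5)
Your proposal takes essentially the same route as the paper, which offers no self-contained proof of Proposition~\ref{prop:ratext} but defers exactly to the trivial-monodromy theory of \cite{oblomkov1999monodromy} as worked out in \cite{gomez2013rational} --- your converse direction is precisely that citation, and your forward Darboux--Crum computation (factoring the Gaussian prefactors via \eqref{eq:Wrhomog} to obtain $W_M = C\, e^{(r-q)x^2/2} H_M$, whence the additive shift $-2(r-q) = 2s_M$) correctly fills in the calculation the paper leaves implicit in the discussion preceding the proposition in Section~\ref{sec:Maya}. There is no gap beyond what the paper itself outsources to the references.
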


The class of Schr\"odinger operators with potentials that are rational extensions of the harmonic oscillator is invariant
under a rational Darboux transformations. Otherwise speaking, if we perform a rational Darboux transformation on a rational extension of the harmonic oscillator, indexed by a Maya diagram $M$, we will obtain another potential in the same class, indexed by $M'$. Both $M$ and $M'$ differ only in one element, as we show next.

\begin{defn}
We define the flip at position $m\in \Z$ to be the involution
$\phi_m:\cM\to \cM$ defined by
\begin{equation}\label{eq:flipdef}
 \phi_m : M \mapsto
\begin{cases}
   M \cup \{ m \} & \text{ if } m\notin M \\
   M \setminus \{ m \} & \text{ if } m\in M 
\end{cases},\qquad M\in \cM.
\end{equation}
\end{defn}
\noindent
In the first case, we say that $\phi_m$ acts on $M$ by a
state-deleting transformation ($\emptybox\to \boxdot$).  In the second
case, we say that $\phi_m$ acts by a state-adding transformation
($\boxdot\to\emptybox$).

Using Crum's formula for iterated Darboux transformations \eqref{eq:Crum}, and the seed functions for rational DTs of the harmonic oscillator \eqref{eq:seedHO}, it can be shown that every quasi-rational eigenfunction of $ L=-D_x^2+ U_M(x)$
has the form
\begin{equation}
  \label{eq:seedfunc}
  \psi_{M,m} = e^{\epsilon x^2/2}\frac{H_{\phi_m(M)}}{H_M}, \qquad m\in \Z,
\end{equation}
with 
\[ \epsilon = \begin{cases}
  -1 & \text{ if } \, m\notin M \\
  +1 & \text{ if } \, m\in M 
\end{cases} \,.
\]
Explicitly, we have
\begin{equation}
    \label{eq:Mneigenfunc}
    L \psi_{M,m}  = (2m+1) \psi_{M,m} ,\quad m\in \Z.    
  \end{equation}

\begin{rem}
The seed eigenfunctions \eqref{eq:seedfunc} include the true eigenfunctions of $L$ plus other set of formal non square-integrable eigenfunctions, sometimes known in the physics literature as \textit{virtual states},\cite{odake2013krein,odake2011exactly}. For a correct spectral theoretic interpretation one needs to ensure that the potential $U_M$ is regular, i.e. that $H_M$ has no zeros in $\R$. The set of Maya diagrams for which $H_M$ has no real zeros was characterized (in a more general setting) independently by Krein \cite{krein1957continuous} and Adler \cite{adler1994modification}, while the number of real zeros for $H_M$ was given in \cite{garcia2015oscillation}. However, for the purpose of this paper it is convenient stay within a purely formal setting and keep the whole class of potentials $U_M$, regardless of whether they have real poles or not.
\end{rem}

The relation between dressing chains of Darboux transformations for the class of operators \eqref{eq:UMdef} and flip operations on Maya diagrams is made explicit by the following proposition. 

\begin{prop}
  \label{prop:UMflip}
  Two Maya diagrams $M, M'$ are related by a flip \eqref{eq:flipdef}
  if and only if their associated rational extensions $U_M,U_{M'}$ are
  connected by a Darboux transformation \eqref{eq:Uplus1}.
\end{prop}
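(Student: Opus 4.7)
The plan is to prove both implications by working directly with the explicit formula $U_M = x^2 - 2(\log H_M)'' + 2s_M$ and the quasi-rational eigenfunctions $\psi_{M,m}$ from \eqref{eq:seedfunc}.

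For the forward direction ($\Rightarrow$), suppose $M' = \phi_m(M)$ and take as seed the eigenfunction $\psi_{M,m} = e^{\epsilon x^2/2} H_{M'}/H_M$, which is an eigenfunction of $L_M$ by \eqref{eq:Mneigenfunc}. The Darboux transform of $U_M$ with this seed is
\[
U_M - 2(\log \psi_{M,m})'' = U_M - 2\epsilon - 2(\log H_{M'})'' + 2(\log H_M)''.
\]
Substituting $-2(\log H_M)'' = U_M - x^2 - 2s_M$ and rearranging yields
\[
U_M - 2(\log \psi_{M,m})'' = U_{M'} - 2\epsilon + 2(s_M - s_{M'}).
\]
The key combinatorial lemma I need is that adding an element to a Maya diagram raises its index by $1$ while removing one lowers it by $1$ (immediate from Definition~\ref{def:index} and a bookkeeping check on the tail $m_i = -i + s_M$). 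Thus if $m\notin M$ then $\epsilon = -1$ and $s_{M'} = s_M + 1$, while if $m\in M$ then $\epsilon = +1$ and $s_{M'} = s_M - 1$. In both cases $-2\epsilon + 2(s_M - s_{M'}) = 0$, so $U_M - 2(\log \psi_{M,m})'' = U_{M'}$.

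For the reverse direction ($\Leftarrow$), suppose $U_M$ and $U_{M'}$ are related by a Darboux transformation with some seed $\varphi$ satisfying $L_M \varphi = \lambda \varphi$, so that $2(\log \varphi)'' = U_M - U_{M'}$. Using the explicit formula for $U_M, U_{M'}$ and integrating twice, the seed must take the form
\[
\varphi = K\, e^{C_1 x}\, e^{(s_M - s_{M'}) x^2/2}\, \frac{H_{M'}}{H_M}
\]
for some constants $K, C_1$. Then I would argue that the requirement $L_M \varphi = \lambda \varphi$ forces this expression to match one of the listed quasi-rational eigenfunctions $\psi_{M,m}$ given in \eqref{eq:seedfunc}: the Gaussian exponent must be $\pm x^2/2$ (so $s_M - s_{M'} = \pm 1$), the linear factor $C_1$ must vanish, and the polynomial part $H_{M'}$ (up to a constant) equals $H_{\phi_m(M)}$ for the corresponding $m$. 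Comparing indices forces $M' = \phi_m(M)$.

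The main obstacle is the final step of the reverse direction: establishing that every quasi-rational eigenfunction of $L_M$ is of the form $\psi_{M,m}$ for some integer $m$, with no exotic alternatives. This is the non-trivial structural input and really relies on the trivial monodromy classification of Oblomkov referenced in Proposition~\ref{prop:ratext}; once that is granted, matching the Gaussian exponent, the linear term, and the index shift simultaneously is a direct comparison. The forward direction, by contrast, is a pure computation once the index-shift lemma under flips is noted.
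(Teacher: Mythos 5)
Your proof is correct and takes essentially the same route as the paper's: the forward direction is the same direct computation with seed $\psi_{M,m}$ combined with the index shift $s_{M'}=s_M\pm 1$, and your converse reconstructs the seed from $2(\log\varphi)''=U_M-U_{M'}$ and then appeals, exactly as the paper does, to the classification of quasi-rational eigenfunctions \eqref{eq:seedfunc} resting on Proposition~\ref{prop:ratext} (i.e.\ the trivial-monodromy input). Your handling of the integration constants $K$, $C_1$ and of both Gaussian signs at once is marginally more explicit than the paper's converse, which simply sets $\psi=e^{-x^2/2}H_{M'}/H_M$ and defers the state-adding case to ``similar'' reasoning, but the substance is identical.
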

\begin{proof}
  Suppose that $m\notin M$ and that $M' = M \cup \{ m\}$ is a
  state-deleting flip transformation of $M$. The seed function for the factorization is $\psi_{M,m}$ defined in \eqref{eq:seedfunc}.   Set
  \begin{equation}
    \label{eq:f1statedelete}
    w_{M,m} = \frac{\psi_{M,m}'}{\psi_{M,m}} = -x +
    \frac{H_{M'}'}{H_{M'}} - \frac{H_{M}'}{H_{M}}.
  \end{equation}
  Since
  \[ s_{M'} = s_M +1,\]
  by \eqref{eq:UMdef}, we have 
  \begin{equation}
    \label{eq:UMM'}
    \frac12(U_{M'}-U_{M}) = 1+ D_x \left(
      \frac{H_{M}'}{H_{M}}-  \frac{H_{M'}'}{H_{M'}} \right) = -w_{M,m}',
  \end{equation}
so that \eqref{eq:Uplus1} holds.
  Conversely, suppose that $M$ and $M'$ are such that \eqref{eq:UMM'}
  holds for some $w = w(x)$. If we define
  \[ w = \frac{\psi'}{\psi},\qquad
   \psi = e^{-x^2/2}\frac{ H_{M'}}{H_{M}} ,\]
   then $\psi $ must be a quasi-rational seed function for $U_M$ and
   it follows by \eqref{eq:seedfunc} of Proposition \ref{prop:ratext} that
   that $M'= M\cup \{m\}$ for some $m\notin M$. The corresponding result for state-adding Darboux transformations is done in a similar way.
\end{proof}

We see thus that the class of rational extensions of the harmonic
oscillator is indexed by Maya diagrams, and that the Darboux
transformations that preserve this class can be described by flip
operations on Maya diagrams. It has recently been noticed that Maya diagrams and rational extensions can be realized as categories, and their relation as a functor between categories, \cite{GGMM2019}.
Now we are ready to introduce the concept of cyclic Maya diagrams, and use them later to build Darboux dressing chains on these potentials, and solutions to $A_N^{(1)}$-Painlev\'e.

\subsection{Cyclic Maya diagrams}

Cyclic Maya diagrams are just the ones such that we can perform a number of flip operations on them, and recover the same Maya diagram up to a shift, \cite{ggsm}. We introduce the necessary notation and precise definitions  below.

\begin{defn}
  \label{def:multiflip}
  For $p\in \N$ let $\cZ_p$ denote the set of all subsets of $\Z$
  having cardinality $p$.
  For $\bmu= \{ \mu_1,\ldots, \mu_p\}\in \cZ_p$ we now define
  $\phi_{\bmu}$ to be the multi-flip
   \begin{equation}
     \label{eq:phimudef}
     \phi_{\bmu}= \phi_{\mu_1} \circ \cdots \circ \phi_{\mu_{p}}.
\end{equation}
\end{defn}

\begin{defn}
  We say that $M$ is $p$-cyclic with shift $k$, or $(p,k)$ cyclic, if
  there exists a $\bmu \in \cZ_p$ such that
  \begin{equation}
    \label{eq:cyclicMdef}
    \phi_\bmu(M) = M+k.
  \end{equation}
  We will say that $M$ is $p$-cyclic if it is $(p,k)$ cyclic for some
  $k\in \Z$.
\end{defn}

\begin{prop}
  \label{prop:muM1M2}
  For Maya diagrams $M,M'\in \cM$, define the set
  \begin{equation}
    \label{eq:muM1M2}
    \Upsilon(M,M') = (M \setminus M') \cup (M'\setminus M)
  \end{equation}
  Then the multi-flip $\phi_\bmu$ where $\bmu=\Upsilon(M,M')$ is the unique
  multi-flip such that $ M' = \phi_{\bmu}(M)$ and $\Upsilon(M,M') $ 
  $M = \phi_{\bmu}(M')$. 
\end{prop}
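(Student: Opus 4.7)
The plan is to think of a Maya diagram $M$ as its indicator function on $\Z$, so that a single flip $\phi_m$ simply toggles the value of this indicator at position $m$. Since flips at distinct positions act on disjoint coordinates, they commute, and the composition $\phi_\bmu$ is well-defined independent of the ordering in \eqref{eq:phimudef}; moreover each $\phi_\bmu$ is an involution.

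First I would verify that $\bmu = \Upsilon(M,M')$ really lies in some $\cZ_p$, i.e. that it is finite. This is immediate from the definition of a Maya diagram: both $M$ and $M'$ contain all but finitely many negative integers and only finitely many non-negative integers, so they agree on a cofinite subset of $\Z$, and their symmetric difference is finite.

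Next I would establish the pointwise characterization that for any Maya diagram $N$ and any finite $\bmu \subset \Z$,
\[ m \in \phi_\bmu(N) \iff \bigl[\,m\in N \text{ and } m\notin \bmu\,\bigr] \ \text{or}\ \bigl[\,m\notin N \text{ and } m\in \bmu\,\bigr]. \]
This follows directly from \eqref{eq:flipdef} applied position by position, using that distinct flips commute so that only the parity of the flips at $m$ matters. Specializing to $N=M$ and $\bmu = \Upsilon(M,M')$ and splitting into the four cases according to the memberships of $m$ in $M$ and $M'$, each case collapses to the statement $m \in \phi_\bmu(M) \iff m \in M'$, giving $\phi_\bmu(M) = M'$. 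Because $\Upsilon$ is symmetric in its arguments, the identical computation with $M$ and $M'$ interchanged yields $\phi_\bmu(M') = M$.

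For uniqueness, suppose $\bmu' \subset \Z$ is any finite set with $\phi_{\bmu'}(M) = M'$. The pointwise characterization applied to both $\bmu$ and $\bmu'$ shows that $m \in \bmu'$ if and only if the membership of $m$ in $M$ differs from its membership in $M'$, i.e. if and only if $m \in \Upsilon(M,M') = \bmu$. Hence $\bmu' = \bmu$. The whole argument is essentially an exercise in boolean bookkeeping once the indicator-function viewpoint is in place; the only point requiring a moment of attention is the finiteness of $\Upsilon(M,M')$, which is exactly the tail condition built into the definition of a Maya diagram.
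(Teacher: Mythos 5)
Your proof is correct and follows essentially the same route as the paper, which dispenses with a formal argument entirely and merely remarks that $\Upsilon(M,M')$ is the set of sites where $M$ and $M'$ differ, making the claim ``evident.'' Your indicator-function bookkeeping is a faithful formalization of exactly that observation, with the two points the paper glosses over --- finiteness of the symmetric difference and uniqueness of the multi-flip --- handled correctly.
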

Intuitively, is the set of sites at which $M$ and $M'$ differ, so it is evident that a multi-flip on these sites will turn $M$ into $M'$ and viceversa.
As an immediate corollary, we have the following.
\begin{prop}
  Let $k$ be a non-zero integer.  Every Maya diagram $M\in \cM$ is
  $(p,k)$ cyclic where $p$ is the cardinality of
  $\bmu=\Upsilon(M,M+k)$.
\end{prop}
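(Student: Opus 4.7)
The plan is to apply Proposition \ref{prop:muM1M2} directly with the choice $M' = M+k$. The preceding proposition already guarantees that setting $\bmu = \Upsilon(M, M+k)$ yields a multi-flip with $\phi_\bmu(M) = M+k$, which is precisely the defining identity \eqref{eq:cyclicMdef} for $(p,k)$-cyclicity. The only thing left to check is that this $\bmu$ is a \emph{finite} subset of $\Z$, so that $p = |\bmu|$ is well-defined and $\bmu \in \cZ_p$ as required by Definition \ref{def:multiflip}.

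First I would record the finiteness of $\Upsilon(M, M+k)$. By Definition \ref{def:index}, the elements of $M$ in decreasing order satisfy $m_i = -i + s_M$ for all sufficiently large $i$, which is equivalent to saying that there exists $N \in \N$ such that
\[
M \cap \Z_{\geq N} = \emptyset \quad \text{and} \quad \Z_{\leq -N} \subset M.
\]
The translated diagram $M+k$ has the analogous property with $N$ replaced by $N + |k|$. Consequently $M$ and $M+k$ coincide on $\Z_{\leq -(N+|k|)}$ and both miss every element of $\Z_{\geq N+|k|}$, so their symmetric difference
\[
\Upsilon(M, M+k) = (M \setminus (M+k)) \cup ((M+k) \setminus M)
\]
is contained in the finite interval $[-(N+|k|),\, N+|k|-1]$ and is therefore finite.

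With $p := |\Upsilon(M, M+k)| < \infty$, we have $\bmu \in \cZ_p$, and Proposition \ref{prop:muM1M2} immediately gives $\phi_\bmu(M) = M+k$. This is exactly the statement that $M$ is $(p,k)$-cyclic, completing the proof.

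Since the substantive content is the finiteness observation and the rest is an invocation of Proposition \ref{prop:muM1M2}, there is no genuine obstacle here; the only point requiring any care is unwinding the definition of a Maya diagram to see that translations by a fixed integer $k$ produce only finitely many differences, which is the structural reason the construction of odd cyclic dressing chains via Maya diagrams is tractable in the first place.
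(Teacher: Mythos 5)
Your proposal is correct and follows the same route as the paper, which states this result as an immediate corollary of Proposition \ref{prop:muM1M2} applied with $M'=M+k$ and gives no further argument. The only addition you make — verifying explicitly that $\Upsilon(M,M+k)$ is finite, via the stabilization of a Maya diagram at $\pm\infty$ — is a detail the paper leaves implicit, and your justification of it is sound.
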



\begin{shaded}
\begin{exercise}
For the following Maya diagrams, find the sequence of flip transformations $\bmu = \{\mu_0,\mu_1,\mu_2\}$ such that $M'=\phi_{\bmu}(M)=M+k$
\begin{align*}
&k=1 \qquad M=(\emptyset | 3,4,5,6) =(-\infty,-1] \cup \{3,4,5,6\}\\
&k=3 \qquad  M=(3 | 1,2,4,5,8) =(-\infty,-4]\cup\{ -2,-1\} \cup \{ 1,2,4,5,8\}
\end{align*}
\end{exercise}
\end{shaded}
In the first case, we see that $M'=M+1=(\emptyset|0,4,5,6,7 )$, so $\bmu = \Upsilon(M,M+1)=(0,3,7)$. The first and third flips correspond to state-deleting transformations ($\emptybox\to\boxdot$), while the second is a state-adding transformation ($\boxdot\to\emptybox$).
In the second case, we have
 \[M'=M+3=(\emptyset|1,2,4,5,7,8,11)=(-\infty,-1]\cup\{1,2,4,5,7,8,11 \}\]
  so $\bmu = \Upsilon(M,M+3)=(-3,7,11 )$. In this case, all three transformations are state-deleting ($\emptybox\to\boxdot$).

Now we are  able to establish the link between Maya cycles and cyclic
dressing chains composed of rational extensions of the harmonic
oscillator.
\begin{thm}
  \label{prop:Mwcorrespondence}
  Let $M\in \cM$ be a Maya diagram, $k$ a non-zero integer, and $p$
  the cardinality of $\bmu=\Upsilon(M,M+k)$. Let
  $\bmu = \{\mu_0,\ldots, \mu_{p-1}\}$ be an arbitrary enumeration of
  $\bmu$ and set
  \begin{equation}
    \label{eq:Mchain}
    M_0 = M,\quad M_{i+1} = \phi_{\mu_i}(M_{i}),\qquad
    i=0,1,\ldots, p-1
  \end{equation}
  so that $M_p = M_0+k$ by construction.  Set
  \begin{align}
    &w_i= s_{i} \,x+ \frac{H_{M_{i+1}}'}{H_{M_{i+1}}}- \frac{H_{M_{i}}'}{H_{M_{i}}},\qquad i=0,\dots,p-1. \label{eq:HM2w}\\
    &\alpha_i=2(\mu_i-\mu_{i+1}), \label{eq:mu2alpha}
      \intertext{where}
     \label{eq:sign}
    &s_i=\begin{cases}
      -1 & \textit{ if } \, \mu_i\notin M \\
    +1 & \textit{ if } \, \mu_i\in M 
  \end{cases},
\end{align}
and 
\[ \mu_p = \mu_0 + k.\]
Then, $(w_0,\ldots, w_{p-1}; \alpha_0,\ldots, \alpha_{p-1})$
constitutes a rational solution to the $p$-cyclic dressing chain
\eqref{eq:wfchain} with shift $\Delta=2k$.
\end{thm}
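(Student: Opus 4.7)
The plan is to interpret each step $M_i \to M_{i+1}$ as a rational Darboux transformation on the rational extension $L_i = -D_x^2 + U_{M_i}$, read off $w_i$ as the log-derivative of the corresponding seed eigenfunction, and then let the standard Riccati identities of a Darboux factorization do the rest. First, by the discussion around \eqref{eq:seedfunc}, the quasi-rational seed function for the flip $\phi_{\mu_i}:M_i \mapsto M_{i+1}$ is
\[ \psi_i = \psi_{M_i,\mu_i} = e^{s_i x^2/2}\,\frac{H_{M_{i+1}}}{H_{M_i}}, \qquad L_i\psi_i = (2\mu_i+1)\psi_i,\]
where the sign $s_i = \pm 1$ is given by \eqref{eq:sign} (and agrees with whether $\mu_i \in M_i$, since the $\mu_j$ are distinct). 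Taking $(\log\psi_i)'$ reproduces precisely formula \eqref{eq:HM2w}, so $w_i$ is the log-derivative of a seed function with factorization energy $\lambda_i = 2\mu_i+1$, and Proposition~\ref{prop:UMflip} ensures $L_i$ and $L_{i+1}$ are Darboux partners via this seed.

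Second, from the factorization $L_i - \lambda_i = (D_x+w_i)(-D_x+w_i)$ and its partner, one reads off the Riccati identities $w_i'+w_i^2 = U_{M_i}-\lambda_i$ and $-w_i'+w_i^2 = U_{M_{i+1}}-\lambda_i$ as in \eqref{eq:Riccati}. Applying the same identities to $w_{i+1}$ with $\lambda_{i+1}=2\mu_{i+1}+1$ and adding yields
\[(w_i+w_{i+1})' + w_{i+1}^2 - w_i^2 = \lambda_i - \lambda_{i+1} = 2(\mu_i - \mu_{i+1}) = \alpha_i,\]
which is exactly the chain equation \eqref{eq:wfchain} with $a_i=\alpha_i$. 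So the equations are satisfied for $i=0,\dots,p-2$ on the nose.

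The delicate point, and the main obstacle, is to verify the wrap-around index $i=p-1$, i.e.\ that $w_p=w_0$ (cyclicity) and that $U_{M_p} = U_{M_0} + \Delta$ with $\Delta=2k$. For the shift of potentials, observe that $M_p = M_0+k$ implies $s_{M_p} = s_{M_0}+k$ by \eqref{eq:indexshift}, and that Proposition~\ref{prop:equiv} forces $H_{M+k}/H_M$ to be a non-zero constant (since $\hat H_{M+k}=\hat H_M$). Inserting into \eqref{eq:UMdef} gives
\[ U_{M_p} - U_{M_0} = -2D_x^2\log\!\bigl(H_{M_p}/H_{M_0}\bigr) + 2k = 2k,\]
so the required shift is $\Delta=2k$. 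To see $w_p=w_0$, note that the Darboux step from $M_p$ must use the seed indexed by $\mu_p=\mu_0+k$, and $\phi_{\mu_0+k}(M_0+k) = \phi_{\mu_0}(M_0)+k = M_1+k$; hence the relevant seed is
\[\psi_{M_p,\mu_p} = e^{s_0 x^2/2}\,\frac{H_{M_1+k}}{H_{M_0+k}},\]
whose log-derivative equals that of $\psi_{M_0,\mu_0}$ because $H_{M_j+k}/H_{M_j}$ is constant for each $j$ by Proposition~\ref{prop:equiv}. Therefore $w_p=w_0$ and, combining this with the Riccati derivation above at $i=p-1$, the last chain equation gives $a_{p-1} = \lambda_{p-1}-\lambda_p = 2(\mu_{p-1}-\mu_0-k) = \alpha_{p-1}$.

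Finally, the normalization condition \eqref{eq:Deltasumalpha} holds by telescoping:
\[\sum_{i=0}^{p-1}\alpha_i = 2\sum_{i=0}^{p-1}(\mu_i-\mu_{i+1}) = 2(\mu_0-\mu_p) = -2k = -\Delta,\]
so $(w_0,\dots,w_{p-1};\alpha_0,\dots,\alpha_{p-1})$ is a bona fide $p$-cyclic rational dressing chain with shift $\Delta=2k$, as claimed.
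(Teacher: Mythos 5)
Your proof is correct and follows essentially the same route as the paper's (deliberately terse) argument: identify each flip with a Darboux step whose seed eigenfunction \eqref{eq:seedfunc} has eigenvalue $2\mu_i+1$, read off $w_i$ via \eqref{eq:Lipsii} and $a_i=\lambda_i-\lambda_{i+1}$ via \eqref{eq:alphaidef}. Your only addition is to spell out the wrap-around step --- using Proposition~\ref{prop:equiv} to get $H_{M+k}/H_M$ constant, hence $U_{M_p}=U_{M_0}+2k$ and $w_p=w_0$, and noting $\mu_i\in M_i\iff\mu_i\in M$ since the flip sites are distinct --- details the paper leaves implicit, and you handle them correctly.
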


\begin{proof}
The result follows from the structure of the seed eigenfunctions \eqref{eq:seedfunc} with eigenvalues given by \eqref{eq:Mneigenfunc}, after applying \eqref{eq:Lipsii} and \eqref{eq:alphaidef}.
The sign of $s_{i}$ indicates whether the $(i+1)$-th step of the chain that takes $L_i$ to $L_{i+1}$ is a state-adding $(+1)$ or state-deleting $(-1)$ transformation.
\end{proof}

So now we know that given a Maya $n$-cycle, we can build an $n$-cyclic dressing chain and a rational solution to the Noumi-Yamada system. But we would like to go further and classify  cyclic Maya diagrams for any given (odd) period, which we
tackle next. 
\begin{rem}
 Under the correspondence described by Proposition
\ref{prop:Mwcorrespondence}, the reversal symmetry
\eqref{eq:reversal2} manifests as the transformation
\[ (M_0,\ldots, M_p) \mapsto (M_p,\ldots, M_0),\quad (\mu_1,\ldots,
\mu_{p}) \mapsto (\mu_{p},\ldots, \mu_1),\quad k\mapsto -k.\]
In light of the above remark, there is no loss of generality if we
restrict our attention to cyclic Maya diagrams with a positive shift
$k>0$.
\end{rem}

\section{Classification of cyclic Maya diagrams}\label{sec:Mayacycles}

In this section we introduce two new concepts on Maya diagrams:  \textit{genus} and
\textit{interlacing}, which become a key ingredient in the characterization of cyclic Maya
diagrams. But before we do so, let us introduce another way to specify a Maya diagram, which becomes more convenient for the task that we now face.



For $\bbeta\in \cZ_{2g+1}$ define the Maya diagram
\begin{equation}
  \label{eq:MBi}
  \Xi(\bbeta)= (-\infty,\beta_0) \cup [\beta_1,\beta_2) \cup
  \ \cdots \cup [\beta_{2g-1},\beta_{2g})
\end{equation}
where
\[ [m,n) = \{ j\in \Z \colon m\leq j < n\}\]
and where $\beta_0<\beta_1<\cdots < \beta_{2g}$ is the strictly increasing
enumeration of $\bbeta$.

\begin{prop}
  Every Maya diagram $M\in \cM$ has a unique representation of the
  form $M=\Xi(\bbeta)$ where $\bbeta$ is a set of integers of odd
  cardinality $2g+1$.
\end{prop}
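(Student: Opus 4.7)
The plan is to construct $\bbeta$ explicitly as the strictly increasing sequence of transition points between the filled and empty runs of $M$, verify that it yields the decomposition $\Xi(\bbeta)$, and then observe that these transition points are uniquely determined by $M$.

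For \emph{existence}, I would set $\beta_0 = \min(\Z\setminus M)$; this minimum exists because $M$ contains all sufficiently negative integers (so $\Z\setminus M$ is bounded below) and $M$ omits all sufficiently large integers (so $\Z\setminus M$ is nonempty). I would then recursively define
\[ \beta_{2i+1} = \min\{n > \beta_{2i} : n\in M\},\qquad \beta_{2i+2} = \min\{n > \beta_{2i+1} : n\notin M\}, \]
continuing while the relevant set is nonempty. The even-indexed minimum exists whenever $\beta_{2i+1}$ does, because the set $\{n > \beta_{2i+1} : n\notin M\}$ contains all sufficiently large $n$. The odd-indexed minimum, on the other hand, must eventually fail: $M$ contains only finitely many positive integers, hence only finitely many integers exceeding $\beta_0$, so after at most that many odd steps we run out of candidates. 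The recursion therefore terminates at an even-indexed $\beta_{2g}$ (we halt precisely when the next odd-step set is empty), producing a strictly increasing sequence $\beta_0<\beta_1<\cdots<\beta_{2g}$ of odd cardinality $2g+1$.

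By construction, the intervals delimited by consecutive $\beta$'s have a definite state: $(-\infty,\beta_0)\subset M$ (by minimality of $\beta_0$); $[\beta_{2i},\beta_{2i+1})\cap M=\emptyset$ (by minimality of $\beta_{2i+1}$ as the next $M$-element); $[\beta_{2i-1},\beta_{2i})\subset M$ (by minimality of $\beta_{2i}$ as the next missing integer); and $[\beta_{2g},\infty)\cap M=\emptyset$ (since the final search for an $M$-element came up empty). Matching this decomposition against \eqref{eq:MBi} gives $M=\Xi(\bbeta)$. For \emph{uniqueness}, I would observe that $\bbeta$ is intrinsically characterized by $M$ as the set of integers $n$ for which exactly one of $n-1$ and $n$ belongs to $M$ (the transition points); hence any $\bbeta'$ with $\Xi(\bbeta')=M$ must produce the same transition set and, enumerated in increasing order, coincides with $\bbeta$.

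The only subtle point is confirming that the cardinality is odd rather than even. This is forced by the boundary behaviour at $\pm\infty$: the membership state in $M$ flips at each element of $\bbeta$ and must transition from ``filled'' at $-\infty$ to ``empty'' at $+\infty$, which requires an odd number of flips. Apart from this parity check, the proof is a routine bookkeeping argument about the structure of Maya diagrams.
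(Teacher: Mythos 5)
Your proof is correct. The paper states this proposition without proof, treating it as evident from the box-and-ball picture of a Maya diagram as an initial infinite filled segment followed by finitely many alternating empty and filled blocks (see the remark motivating the definition of genus); your argument --- recursively extracting the block starts as the transition points, verifying the resulting decomposition matches $\Xi(\bbeta)$, getting uniqueness from the intrinsic characterization of $\bbeta$ as the set of $n$ with exactly one of $n-1,n$ in $M$, and deducing odd cardinality from the state change between ``filled at $-\infty$'' and ``empty at $+\infty$'' --- is precisely a rigorous rendering of that same picture.
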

\begin{defn}\label{def:genus}
  We call the integer $g\geq 0$ the genus of $M= \Xi(\bbeta)$ and
  $(\beta_0,\beta_1,\ldots, \beta_{2g})$ the block coordinates of $M$.
\end{defn}
\noindent

\begin{rem}
  To motivate Definition \ref{def:genus}, it is perhaps more illustrative to
  understand the visual meaning of the genus of $M$, see
the Maya diagram below.  After removal of the initial infinite
  $\boxdot$ segment and the trailing infinite $\emptybox$ segment, a
  Maya diagram consists of alternating empty $\emptybox$ and filled
  $\boxdot$ segments of variable length.  The genus $g$ counts the
  number of such pairs.  
  The even block coordinates $\beta_{2i}$ indicate the starting
  positions of the empty segments, and the odd block coordinates
  $\beta_{2i+1}$ indicated the starting positions of the filled
  segments.  Also, note that $M$ is in standard form if and only if
  $\beta_0=0$.
\end{rem}

\begin{shaded}
\begin{exercise}
Draw the box-ball diagram corresponding to the genus-$2$ Maya diagram with block coordinates $(\beta_0,\ldots, \beta_4) = (2,3,5,7,10)$ and give its Frobenius symbol.
\end{exercise}
\end{shaded}

Since $\beta_0=2$, to the left of $2$ all sites are filled and site $2$ is empty. Next we have  filled block $[\beta_1,\beta_2)=[3,5)$ of size $2$ and another filled block at  $[\beta_3,\beta_3)=[7,10)$ of size $3$. All sites are empty to the right of $\beta_4=10$.

\begin{tikzpicture}[scale=0.6]

\draw  (1,1) grid +(15 ,1);

\path [fill] (0.5,1.5) node {\huge ...} 
++(1,0) circle (5pt) ++(1,0) circle (5pt)  ++(1,0) circle (5pt) 
++(1,0) circle (5pt) ++(1,0) circle (5pt) 
++(2,0) circle (5pt) ++(1,0) circle (5pt) 
++ (3,0) circle (5pt)  ++(1,0) circle (5pt)   ++ (1,0) circle (5pt) 
++ (3,0) node {\huge ...} +(1,0) node[anchor=west] {};

\draw[line width=1pt] (4,1) -- ++ (0,1.5);

\foreach \x in {-3,...,11} 	\draw (\x+4.5,2.5)  node {$\x$};
\path (6.5,0.5) node {$\beta_0$} ++ (1,0) node {$\beta_1$}
++ (2,0) node {$\beta_2$}++ (2,0) node {$\beta_3$}++ (3,0) node {$\beta_4$}
;
\end{tikzpicture}

 $$M =  (-\infty,\beta_0)\cup [ \beta_1,\beta_2) \cup [ \beta_3,\beta_4)$$
Note that the genus is both the  number of finite-size empty blocks and the number of finite-size  filled blocks.
  

\begin{shaded}
\begin{exercise}\label{ex:+1}
  Let $M = \Xi(\bbeta)$ be a Maya diagram specified by its block
  coordinates  \label{eq:MBi}. Prove that 
    \[ \bbeta = \Upsilon(M,M+1).\]
\end{exercise}
\end{shaded}

\begin{proof}
  Observe that
  \[ M+1 = (-\infty, \beta_0] \cup (\beta_1,\beta_2] \cup \cdots \cup
  (\beta_{2g-1}, \beta_{2g}],\]
  where
  \[ (m,n] = \{ j\in \Z \colon m<j\leq n \}.\]
  It follows that
  \begin{align*}
    (M+1)\setminus M &= \{ \beta_0, \ldots, \beta_{2g} \}\\
    M\setminus (M+1) &= \{ \beta_1,\ldots, \beta_{2g-1} \}.
  \end{align*}
  The desired conclusion follows immediately.
\end{proof}


Let $\cM_g$ denote the set of Maya diagrams of genus $g$.  The above
discussion may be summarized by saying that the mapping \eqref{eq:MBi}
defines a bijection $\Xi:\cZ_{2g+1} \to \cM_g$, and that the block
coordinates are precisely the flip sites required for a translation
$M\mapsto M+1$.

The next concept we need to introduce is the interlacing and modular
decomposition. 
\begin{defn}\label{def:interlacing}
  Fix a $k\in \N$ and let $M^{(0)}, M^{(1)},\ldots M^{(k-1)}\subset \Z$ be sets
  of integers.  We define the interlacing of these to be the set
  \begin{equation}\label{eq:interlacing} \Theta\left(M^{(0)}, M^{(1)},\ldots M^{(k-1)}\right)
    = \bigcup_{i=0}^{k-1} (k M^{(i)} +i),
 \end{equation}
 where
 \[ kM +j = \{ km + j \colon m\in M \},\quad M\subset \Z.\]
 Dually, given a set of integers $M\subset \Z$ and a $k\in \N$ define
 the sets
 \[ M^{(i)} = \{ m\in \Z \colon km+i \in M\},\quad i=0,1,\ldots, k-1.\]
 We will call the $k$-tuple of sets $\left(M^{(0)}, M^{(1)},\ldots M^{(k-1)}\right)$ the
 $k$-modular decomposition of $M$.
\end{defn}


The following result follows directly from the above definitions.
\begin{prop}
  We have $M=\Theta\left(M^{(0)}, M^{(1)},\ldots M^{(k-1)}\right)$ if and only if\\
  $\left(M^{(0)}, M^{(1)},\ldots M^{(k-1)}\right)$ is the $k$-modular decomposition of $M$.
\end{prop}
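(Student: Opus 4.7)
The plan is to observe that this proposition is really just a structural restatement of the division algorithm: every integer $n$ has a unique representation $n = km+i$ with $m \in \Z$ and $0 \leq i < k$, so the residue classes $k\Z + i$ partition $\Z$ into $k$ pieces. Both constructions $\Theta$ and the modular decomposition are built to respect this partition, so the proof should amount to unpacking the definitions and invoking uniqueness of the quotient-remainder pair.

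First I would prove the forward direction. Suppose $M = \Theta(M^{(0)}, \ldots, M^{(k-1)}) = \bigcup_{i=0}^{k-1}(kM^{(i)}+i)$. Fix $i \in \{0,\ldots,k-1\}$, and let $(N^{(0)},\ldots, N^{(k-1)})$ denote the $k$-modular decomposition of $M$, so $N^{(i)} = \{m \in \Z : km+i \in M\}$. I want $N^{(i)} = M^{(i)}$. For the inclusion $M^{(i)} \subseteq N^{(i)}$, if $m \in M^{(i)}$ then $km+i \in kM^{(i)}+i \subseteq M$, so $m \in N^{(i)}$ by definition. Conversely, if $m \in N^{(i)}$, then $km+i \in M$, so by the hypothesis there exists $j \in \{0,\ldots,k-1\}$ and $m' \in M^{(j)}$ with $km+i = km'+j$. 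By uniqueness of the quotient and remainder, $j = i$ and $m' = m$, so $m \in M^{(i)}$.

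The reverse direction is symmetric and essentially identical. Assume $(M^{(0)},\ldots,M^{(k-1)})$ is the $k$-modular decomposition of $M$, i.e.\ $M^{(i)} = \{m : km+i \in M\}$. To show $M = \Theta(M^{(0)},\ldots,M^{(k-1)})$, pick $n \in \Z$ and write uniquely $n = km+i$ with $0 \leq i < k$. Then $n \in M$ iff $m \in M^{(i)}$ iff $n = km+i \in kM^{(i)}+i$ iff $n \in \Theta(M^{(0)},\ldots,M^{(k-1)})$, where the last equivalence again uses the partition of $\Z$ by residue classes (membership in at most one $kM^{(j)}+j$ is automatic).

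There is no real obstacle here: the only thing to be careful about is that the pieces $kM^{(i)}+i$ in the definition of $\Theta$ are genuinely disjoint, which is guaranteed precisely because each integer has a unique residue mod $k$. Once that is pointed out, both implications reduce to matching the quotient--remainder data on the two sides. It may also be worth remarking that this bijection respects the Maya-diagram structure when $M \in \cM$: since $M$ contains all but finitely many negative integers and only finitely many non-negatives, each $M^{(i)}$ inherits the same property, so the $k$-modular decomposition of a Maya diagram is a $k$-tuple of Maya diagrams, a fact that will be used in the sequel.
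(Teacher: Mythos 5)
Your proof is correct and follows the same route the paper intends: the paper offers no written argument, merely asserting that the result ``follows directly from the above definitions,'' and your careful unpacking via uniqueness of the quotient--remainder pair (which also guarantees disjointness of the pieces $kM^{(i)}+i$) is precisely that definitional verification made explicit. Your closing remark that each $M^{(i)}$ inherits the Maya-diagram property matches the paper's own follow-up observation, so nothing is missing.
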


Even though the above operations of interlacing and modular
decomposition apply to general sets, they have a well defined
restriction to Maya diagrams.  Indeed, it is not hard to check that if
$M=\Theta\left(M^{(0)}, M^{(1)},\ldots M^{(k-1)}\right)$ and $M$ is a Maya diagram, then
 $M^{(0)}, M^{(1)},\ldots M^{(k-1)}$ are also Maya diagrams.  Conversely, if the latter are all Maya
diagrams, then so is $M$.  Another important case concerns the
interlacing of finite sets.  The definition \eqref{eq:interlacing}
implies directly that if $\bmu^{(i)} \in \cZ_{p_i},\; i=0,1,\ldots, k-1$
then
\[ \bmu = \Theta\left(\bmu^{(0)}, \ldots , \bmu^{(k-1)}\right) \]
is a finite set of cardinality $p=p_0+\cdots + p_{k-1}$.

Visually, each of the $k$ Maya diagrams is dilated by a factor of $k$,
shifted by one unit with respect to the previous one and superimposed,
so the interlaced Maya diagram incorporates the information from
$M^{(0)}, \ldots M^{(k-1)}$ in $k$ different modular classes. 
 In other words, the interlaced Maya diagram is built by copying sequentially a filled or
empty box as determined by each of the $k$ Maya diagrams.

\begin{shaded}
\begin{exercise}
 For the following three Maya diagrams, given by their block coordinates:
\[ M_0 = \Xi(0,1,4),\quad M_1 = \Xi(-1,1,3,5,6),\quad M_2 = \Xi(4)  \]
 Draw the box-square diagram of the interlaced diagram $M= \Theta(M_0,M_1,M_2)$ and give the block coordinates and the $3$-block coordinates of $M$. 
\end{exercise}
\end{shaded}

\begin{figure}[h]
\begin{tikzpicture}[scale=0.6]

\draw  (1,3) grid +(11 ,1);

\path [fill,color=black] (0.5,3.5) node {\huge ...} 
++(1,0) circle (5pt) ++(1,0) circle (5pt)  ++(1,0) circle (5pt) 
++(1,0) circle (5pt)
++(2,0) circle (5pt) ++(1,0) circle (5pt)  ++(1,0) circle (5pt) 
++ (4,0) node {\huge ...} +(1,0) node[anchor=west,color=black] { $M_0 = \Xi(0,1,4),\;\qquad\quad\,\,\,\, g_0 = 1$}; 

\draw[line width=1pt] (5,3) -- ++ (0,2);

\foreach \x in {-4,...,6} 	\draw (\x+5.5,4.5)  node {$\x$};

\draw  (1,1) grid +(11 ,1);

\path [fill,color=blue] (0.5,1.5) node {\huge ...} 
++(1,0) circle (5pt) ++(1,0) circle (5pt)  ++(1,0) circle (5pt) 
++(3,0) circle (5pt) ++(1,0) circle (5pt) 
++ (3,0) circle (5pt)  ++(2,0) node {\huge ...} +(1,0) node[anchor=west,color=black] { $M_1 = \Xi(-1,1,3,5,6),\;\quad g_1 = 2$}; 

\draw[line width=1pt] (5,1) -- ++ (0,2);

\draw  (1,-1) grid +(11 ,1);

\path [fill,color=red] (0.5,-0.5) node {\huge ...} 
++(1,0) circle (5pt) ++(1,0) circle (5pt)  ++(1,0) circle (5pt) 
++(1,0) circle (5pt) ++(1,0) circle (5pt)  ++(1,0) circle (5pt) 
++(1,0) circle (5pt)  ++(1,0) circle (5pt) 
++(4,0) node {\huge ...} +(1,0) node[anchor=west,color=black] { $M_2 = \Xi(4),\qquad\qquad\qquad g_2 = 0$}; 

\draw[line width=1pt] (5,-1) -- ++ (0,2);

\draw  (0,-4) grid +(23 ,1);
\foreach \x in {-5,...,17} 	\draw (\x+5.5,2.5-5)  node {$\x$};
\draw[line width=1pt] (5,-4) -- ++ (0,2);

\path [fill,color=black] (2.5,-3.5)   
circle (5pt) ++(6,0) circle (5pt)  
++(3,0) circle (5pt) ++(3,0) circle (5pt) ;

\path [fill,color=blue] 
(0.5,-3.5) circle (5pt) ++(9,0) circle (5pt)  
++(3,0) circle (5pt) ++(9,0) circle (5pt) ;

\path [fill,color=red] 
(1.5,-3.5) circle (5pt) ++(3,0) circle (5pt)  
++(3,0) circle (5pt) ++(3,0) circle (5pt)
++(3,0) circle (5pt) ++(3,0) circle (5pt) ;

\draw (1.0,-5) node[right] {$M= \Theta(M_0,M_1,M_2)=\Xi_3(0,1,4|-1,1,3,5,6|4) = \Xi(-2,-1,0,2,10,11,12,16,17)$}; 
\end{tikzpicture}

\end{figure}

Equipped with these notions of genus and interlacing, we are now ready
to state the main result for the classification of cyclic Maya
diagrams.


\begin{thm}
  \label{thm:Mp}
  Let $M=\Theta\left(M^{(0)}, M^{(1)},\ldots M^{(k-1)}\right)$ be the $k$-modular decomposition of a given Maya diagram $M$.  Let $g_i$ be the genus
  of $M^{(i)},\; i=0,1,\ldots, k-1$.  Then, $M$ is $(p,k)$-cyclic where
  \begin{equation}
    \label{eq:pgi}
    p = p_0+p_1+\cdots + p_{k-1},\qquad p_i = 2g_i + 1.
  \end{equation}
\end{thm}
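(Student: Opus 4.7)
The plan is to compute $\Upsilon(M, M+k)$ explicitly in terms of the $k$-modular decomposition and then appeal to Exercise \ref{ex:+1}.

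First, recall from Proposition \ref{prop:muM1M2} that the unique multi-flip sending $M$ to $M+k$ is $\phi_{\bmu}$ with $\bmu = \Upsilon(M,M+k)$, and $M$ is $(|\bmu|,k)$-cyclic. So it suffices to show $|\Upsilon(M,M+k)| = p_0+\cdots+p_{k-1}$.

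Next, I would compute the $k$-modular decomposition of $M+k$. An integer of the form $km+i$ with $0 \le i < k$ lies in $M$ iff $m\in M^{(i)}$, and lies in $M+k$ iff $km+i-k = k(m-1)+i \in M$, i.e.\ iff $m-1 \in M^{(i)}$. Hence the $k$-modular decomposition of $M+k$ is $(M^{(0)}+1,\ldots, M^{(k-1)}+1)$. Since membership in each residue class mod $k$ is decided independently, the symmetric-difference operation commutes with $k$-fold interlacing:
\[
\Upsilon(M,M+k) \;=\; \Theta\!\left(\Upsilon(M^{(0)},M^{(0)}+1),\,\ldots,\,\Upsilon(M^{(k-1)},M^{(k-1)}+1)\right).
\]
Indeed, $n = km+i$ with $0\le i<k$ belongs to the left side iff exactly one of ``$n\in M$'' and ``$n\in M+k$'' holds, which translates to exactly one of ``$m\in M^{(i)}$'' and ``$m\in M^{(i)}+1$'' holding, i.e.\ $m \in \Upsilon(M^{(i)},M^{(i)}+1)$, which places $n$ in the $i$-th strand of the interlacing on the right.

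Now invoke Exercise \ref{ex:+1}: if $M^{(i)} = \Xi(\bbeta^{(i)})$ with block coordinates $\bbeta^{(i)} = (\beta^{(i)}_0,\ldots,\beta^{(i)}_{2g_i})$, then $\Upsilon(M^{(i)},M^{(i)}+1) = \bbeta^{(i)}$, which has cardinality $2g_i+1 = p_i$. Since the interlacing of finite sets has cardinality equal to the sum of the cardinalities (the residue classes are disjoint),
\[
|\Upsilon(M,M+k)| \;=\; \sum_{i=0}^{k-1} |\bbeta^{(i)}| \;=\; \sum_{i=0}^{k-1}(2g_i+1) \;=\; p,
\]
which is exactly the claim. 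The main conceptual step is the componentwise identity for $\Upsilon$ under $k$-modular decomposition; the rest is bookkeeping using the preceding exercise and the bijection $\Xi$.
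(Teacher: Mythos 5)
Your proof is correct and takes essentially the same route as the paper's: both reduce the claim to Exercise~\ref{ex:+1} strand-by-strand via the $k$-modular decomposition, use the fact that the relevant operation commutes with interlacing, and count $p=\sum_{i}(2g_i+1)$ from the disjointness of the residue classes. The only difference is direction --- the paper constructs $\bmu=\Theta\left(\bbeta^{(0)},\ldots,\bbeta^{(k-1)}\right)$ and verifies $\phi_\bmu(M)=M+k$, whereas you compute the unique flip set $\Upsilon(M,M+k)$ as the interlacing of the sets $\Upsilon\left(M^{(i)},M^{(i)}+1\right)$ --- which is the same identity read in the opposite direction.
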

\begin{proof}
  Let $\bbeta^{(i)}=\Upsilon\left(M^{(i)},M^{(i+1)}\right) \in \cZ_{p_i}$ be the block
  coordinates of $M^{(i)},\; i=0,1,\ldots, k-1$.  Consider the interlacing
  $\bmu = \Theta\left(\bbeta^{(0)}, \ldots , \bbeta^{(k-1)}\right)$.  From
  Exercise~\ref{ex:+1} we have that,
  \[ \phi_{\bbeta^{(i)}}\left( M^{(i)}\right)= M^{(i)} +1.\]
  so it follows that
  \begin{align*}
    \phi_{\bmu}(M)&= \phi_{\Theta\left(\bbeta^{(0)}, \ldots , \bbeta^{(k-1)}\right)}\Theta\left( M^{(0)} , \ldots ,
                  M^{(k-1)}\right) \\   
    &= \Theta\left( \phi_{\bbeta^{(0)}}(M^{(0)}) , \ldots ,
                    \phi_{\bbeta^{(k-1)}}(M^{(k-1)})\right) \\
                  &= \Theta\left(M^{(0)}+1 , \ldots , M^{(k-1)} + 1\right) \\
                  &= \Theta\left(M^{(0)}, \ldots , M^{(k-1)}\right) + k \\
                  &= M+k.
  \end{align*}
  Therefore, $M$ is $(p,k)$ cyclic where the value of $p$ agrees with
  \eqref{eq:pgi}.
\end{proof}

Normally we will have a bound on the period $p$, and the classification problem is to describe all possible $(p,k)$-cyclic Maya diagrams for all values of $k>0$.
Theorem~\ref{thm:Mp} sets the way to do this.
\begin{cor}\label{cor:k}
  For a fixed period $p\in \N$, there exist $p$-cyclic Maya diagrams with
  shifts $k=p,p-2,\dots,\lfloor p/2\rfloor$, and no other positive shifts are  possible.
\end{cor}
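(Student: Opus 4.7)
The plan is to prove both directions of the characterization as direct consequences of Theorem~\ref{thm:Mp}, which already encodes the key combinatorial identity $p = 2G + k$ with $G = \sum g_i \geq 0$.

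For the necessity direction, I would start by noting that if $M$ is $(p,k)$-cyclic with $k>0$, then by the uniqueness of the flip set recorded in the proposition preceding Theorem~\ref{thm:Mp}, necessarily $p = |\Upsilon(M, M+k)|$. Theorem~\ref{thm:Mp} applied to the $k$-modular decomposition $(M^{(0)},\dots,M^{(k-1)})$ with genera $g_0,\dots,g_{k-1}$ then gives
\[ p \;=\; \sum_{i=0}^{k-1}(2g_i+1) \;=\; 2G + k, \qquad G := \sum_{i=0}^{k-1} g_i \;\geq\; 0. \]
Since $G\geq 0$ and $k\geq 1$, this immediately forces $k\leq p$ together with the parity constraint $k\equiv p\pmod 2$, restricting $k$ to the arithmetic progression $p,\,p-2,\,p-4,\,\dots$

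For the sufficiency direction, I would fix any $k$ in this allowed progression and build an explicit witness by reversing the argument. Setting $G := (p-k)/2 \in \N_0$, I would pick any partition into non-negative parts $g_0 + \cdots + g_{k-1} = G$ (the simplest choice being $g_0 = G$ with the rest zero), choose Maya diagrams $M^{(i)} \in \cM_{g_i}$ via the bijection $\Xi:\cZ_{2g_i+1}\to\cM_{g_i}$, and form the interlacing $M = \Theta(M^{(0)}, \dots, M^{(k-1)})$. Since interlacing and $k$-modular decomposition are mutually inverse operations on Maya diagrams, the genera of the modular components of $M$ are exactly $(g_0,\dots,g_{k-1})$, and Theorem~\ref{thm:Mp} then certifies that $M$ is $(p,k)$-cyclic.

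The argument faces no serious technical obstacle, since both halves amount to bookkeeping on top of Theorem~\ref{thm:Mp}. The only subtlety worth explicitly flagging is the mutual-inverse relationship between interlacing and modular decomposition (recorded in the proposition just after Definition~\ref{def:interlacing}), which guarantees that an arbitrary prescription of non-negative genera $(g_0,\dots,g_{k-1})$ is actually realized by some Maya diagram and is what makes the sufficiency construction automatic.
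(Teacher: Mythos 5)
Your proof is correct and is essentially the argument the paper intends: the corollary is stated without proof as an immediate consequence of Theorem~\ref{thm:Mp}, whose genus count $p=\sum_{i=0}^{k-1}(2g_i+1)=2\sum_i g_i+k$ gives your necessity direction (parity and $k\le p$, using the uniqueness of the flip set from Proposition~\ref{prop:muM1M2}), while your sufficiency witness is exactly the interlacing construction the paper gestures at in the remark that the highest shift $k=p$ comes from interlacing $p$ trivial genus-$0$ diagrams. Note only that your (correct) conclusion is that the admissible shifts run $k=p,p-2,\dots$ down to $1$ or $2$ according to parity --- consistent with the paper's own $A_4$ classification, where $k=1$ occurs for $p=5$ via the signature $(5)$ --- so the endpoint $\lfloor p/2\rfloor$ in the statement is evidently a misprint rather than a constraint your argument misses.
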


\begin{rem}
  The highest shift $k=p$ corresponds to the interlacing of $p$ trivial (genus
  0) Maya diagrams.
\end{rem}

\subsection{Indexing Maya $p$-cycles}

We now introduce a combinatorial system for describing rational
solutions of $p$-cyclic factorization chains.  First, we require a
 generalized notion of block coordinates suitable for
describing $p$-cyclic Maya diagrams.
\begin{defn}\label{def:kblock}
  Let $M=\Theta\left(M^{(0)},\ldots M^{(k-1)}\right)$ be a $k$-modular decomposition of a
  $(p,k)$ cyclic Maya diagram.  For $i=0,1,\ldots, k-1$ let
  $\bbeta^{(i)}= \left(\beta^{(i)}_{0}, \ldots,
    \beta^{(i)}_{p_i-1}\right)$
  be the block coordinates of $M^{(i)}$ enumerated in increasing order.
  In light of the fact that
  \[ M = \Theta\left(\Xi(\bbeta^{(0)}), \ldots , \Xi(\bbeta^{(k-1)})\right),\]
  we will refer to the concatenated sequence
  \begin{align*}
    (\beta_0,\beta_1,\ldots, \beta_{p-1}) 
    &= (\bbeta^{(0)} | \bbeta^{(1)} | \ldots |
      \bbeta^{(k-1)}) \\
    &=  \left( \beta^{(0)}_{0}, \ldots, \beta^{(0)}_{p_0-1} |
      \beta^{(1)}_{0}, \ldots, \beta^{(1)}_{p_1-1} 
      | \ldots | \beta^{(k-1)}_{0},\ldots, \beta^{(k-1)}_{p_{k-1}-1}\right) 
  \end{align*}
  as the $k$-block coordinates of $M$.  Formally, the correspondence
  between $k$-block coordinates and Maya diagram is described by the
  mapping
  \[ \Xi_k\colon \cZ_{2g_0+1} \times \cdots \times \cZ_{2g_{k-1}+1} \to \cM \]
  with action
  \[ \Xi_k \colon (\bbeta^{(0)} | \bbeta^{(1)} | \ldots |
      \bbeta^{(k-1)})\mapsto \Theta\left(\Xi(\bbeta^{(0)}), \ldots ,
    \Xi(\bbeta^{(k-1)})\right) \]
\end{defn}

\begin{defn}
  Fix a $k\in \N$. For $m\in \Z$ let $[m]_k\in \{ 0,1,\ldots, k-1 \}$
  denote the residue class of $m$ modulo division by $k$.  For
  $m,n \in \Z$ say that $m \preccurlyeq_k n$ if and only if
  \[ [m]_k < [n]_k,\quad \text{ or } \quad [m]_k = [n]_k\; \text{ and
  } m\leq n.\]
  In this way, the transitive, reflexive relation $\preccurlyeq_k$
  forms a total order on $\Z$.
\end{defn}

\begin{prop}
  Let $M$ be a $(p,k)$ cyclic Maya diagram.  There exists a unique
  $p$-tuple of integers $(\mu_0,\ldots, \mu_{p-1})$ strictly ordered
  relative to $\preccurlyeq_k$ such that
  \begin{equation}
    \label{eq:bmuMk}
    \phi_\bmu(M) = M+k 
  \end{equation}
\end{prop}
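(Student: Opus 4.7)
The plan is to reduce the statement to Proposition~\ref{prop:muM1M2} and then invoke the fact that $\preccurlyeq_k$ is a total order on $\Z$; no genuinely hard step is required, only careful bookkeeping.

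First I would apply Proposition~\ref{prop:muM1M2} with $M' = M+k$. That proposition produces the set
\[ \bmu^{\star} \;:=\; \Upsilon(M,M+k) \;=\; (M\setminus(M+k))\cup((M+k)\setminus M) \]
as the unique finite subset of $\Z$ satisfying $\phi_{\bmu^{\star}}(M) = M+k$. The hypothesis that $M$ is $(p,k)$-cyclic, combined with this uniqueness, forces $|\bmu^{\star}|=p$. So as an \emph{unordered} set, the collection of flip sites is already uniquely determined.

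Next I would observe that the composition $\phi_{\bmu}=\phi_{\mu_0}\circ\cdots\circ\phi_{\mu_{p-1}}$ of Definition~\ref{def:multiflip} depends only on the underlying set $\{\mu_0,\ldots,\mu_{p-1}\}$ and not on its enumeration: indeed, a single-site flip $\phi_m$ toggles only membership at position $m$, so $\phi_m\circ\phi_n=\phi_n\circ\phi_m$ whenever $m\neq n$. Hence every enumeration of $\bmu^{\star}$ yields the same map and satisfies $\phi_{\bmu}(M)=M+k$; this gives existence of a tuple of the required type.

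Finally, because $\preccurlyeq_k$ is a total order on $\Z$ (as stated in the definition preceding the proposition), its restriction to the finite set $\bmu^{\star}$ is a total order on a $p$-element set, and therefore admits a unique strictly $\preccurlyeq_k$-increasing enumeration $(\mu_0,\mu_1,\ldots,\mu_{p-1})$. This enumeration is the unique $p$-tuple asserted by the proposition: any other $p$-tuple ordered by $\preccurlyeq_k$ with $\phi_{\bmu}(M)=M+k$ must have the same underlying set (by Proposition~\ref{prop:muM1M2}) and hence the same ordering. The only point that requires a moment's care is the commutativity remark for single-site flips, which ensures that the set-theoretic uniqueness from Proposition~\ref{prop:muM1M2} lifts unambiguously to tuple-level uniqueness once the canonical order $\preccurlyeq_k$ is imposed.
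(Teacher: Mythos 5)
Your proof is correct, but it takes a genuinely different route from the paper's. The paper argues constructively: it takes the $k$-block coordinates $(\bbeta^{(0)} | \bbeta^{(1)} | \ldots | \bbeta^{(k-1)})$ of $M$, sets $\bmu = \Theta\left(\bbeta^{(0)},\ldots,\bbeta^{(k-1)}\right)$ so that $\phi_\bmu(M) = M+k$ holds by the proof of Theorem~\ref{thm:Mp}, and then simply writes down the $\preccurlyeq_k$-ordered enumeration explicitly, namely $\left(k\beta^{(0)}_{0},\ldots,k\beta^{(0)}_{p_0-1},\, k\beta^{(1)}_{0}+1,\ldots,k\beta^{(1)}_{p_1-1}+1,\,\ldots\right)$, i.e.\ residue classes listed in increasing order and entries increasing within each class. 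You instead deduce everything abstractly from Proposition~\ref{prop:muM1M2}: the flip set is forced to be the symmetric difference $\Upsilon(M,M+k)$, the $(p,k)$-cyclicity hypothesis pins its cardinality to $p$, commutativity of single-site flips makes $\phi_\bmu$ independent of the enumeration (a point the paper leaves implicit in Definition~\ref{def:multiflip}, and which you are right to flag, since the composition there is written via a particular enumeration), and the totality of $\preccurlyeq_k$ then yields a unique strictly increasing enumeration. Your argument is shorter and makes the uniqueness claim fully explicit, whereas the paper's proof addresses uniqueness only implicitly through Proposition~\ref{prop:muM1M2}. What the paper's constructive proof buys, and yours does not, is the explicit formula for the canonical flip sequence in terms of the $k$-block coordinates; that formula is exactly what is used downstream in Definition~\ref{def:kblock}, in the notion of $k$-signature, and in all the worked $A_4$-Painlev\'e examples, where $\bmu$ is computed as $\Theta(\bbeta^{(0)} | \cdots | \bbeta^{(k-1)})$. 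The two constructions of course coincide: by the set-level uniqueness you invoke, $\Theta\left(\bbeta^{(0)},\ldots,\bbeta^{(k-1)}\right) = \Upsilon(M,M+k)$, and the paper's concatenated list is precisely the $\preccurlyeq_k$-increasing enumeration of that set.
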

\begin{proof}
  Let
  $(\beta_0,\ldots, \beta_{p-1}) =  (\bbeta^{(0)} | \bbeta^{(1)} | \ldots |   \bbeta^{(k-1)})$ be the $k$-block coordinates of $M$.
  Set \[ \bmu = \Theta\left(\bbeta^{(0)}, \ldots , \bbeta^{(k-1)}\right) \]
  so that \eqref{eq:bmuMk} holds by the proof to Theorem \ref{thm:Mp}.
  The desired enumeration of $\bmu$ is given by
  \[ (k \beta_0,\ldots, k \beta_{p-1}) + (0^{p_0}, 1^{p_1}, \ldots,
  (k-1)^{p_{k-1}}) \]
  where the exponents indicate repetition.  Explicitly,
  $(\mu_0,\ldots, \mu_{p-1})$ is given by
  \[ \left(k\beta^{(0)}_{0},\ldots, k\beta^{(0)}_{p_0-1}, k\beta^{(1)}_{0} + 1 ,\ldots,
  k\beta^{(1)}_{p_1-1} +1 , \ldots , k \beta^{(k-1)}_{0} + k-1, \ldots,
  k\beta^{(k-1)}_{p_{k-1}-1} + k-1 \right) .\]
\end{proof}
\begin{defn}
  In light of \eqref{eq:bmuMk} we will refer to the just defined tuple
  $(\mu_0,\mu_1,\ldots, \mu_{p-1})$ as the $k$-canonical flip sequence
  of $M$ and refer to the tuple $(p_0,p_1,\ldots, p_{k-1})$ as the
  $k$-signature of $M$.
\end{defn}

By Proposition \ref{prop:Mwcorrespondence} a rational solution of the
$p$-cyclic dressing chain requires a $(p,k)$ cyclic Maya diagram, and
an additional item data, namely a fixed ordering of the canonical
flip sequence.
We will specify such ordering  as
\[ \bmu_\bpi = (\mu_{\pi_0},\ldots, \mu_{\pi_{p-1}}) \]
where $\bpi=(\pi_0,\ldots, \pi_{p-1})$ is a permutation of
$(0,1,\ldots, p-1)$. With this notation, the chain of Maya diagrams
described in Proposition \ref{prop:Mwcorrespondence}
is generated as
\begin{equation}\label{eq:picycle}
 M_0 = M,\qquad M_{i+1} =  \phi_{\mu_{\pi_i}}(M_i),\qquad i=0,1,\ldots, p-1.
 \end{equation}
\begin{rem}\label{rem:normalization}
Using a translation it is possible to normalize $M$ so that
$\mu_{0} = 0$.  Using a cyclic permutation and it is possible to
normalize $\bpi$ so that $\pi_p=0$.  The net effect of these two
normalizations is to ensure  that
$M_0,M_1,\ldots, M_{p-1}$ have standard form.
\end{rem}

\begin{rem}
In the discussion so far we have imposed the hypothesis that the
sequence of flips that produces a translation $M\mapsto M+k$ does not
contain any repetitions.  However, in order to obtain a full
classification of rational solutions, it will be necessary to account
for degenerate chains which include multiple flips at the same site.

To that end it is necessary to modify Definition \ref{def:multiflip}
to allow $\bmu$ to be a multi-set\footnote{A multi-set is
  generalization of the concept of a set that allows for multiple
  instances for each of its elements.}, and to allow
$\mu_0,\mu_1,\ldots, \mu_{p-1}$ in \eqref{eq:MBi} to be merely a
non-decreasing sequence.  
This has the effect of permitting $\emptybox$ and $\boxdot$ segments of
zero length wherever $\mu_{i+1} = \mu_i$.  The $\Xi$-image of such a
non-decreasing sequence is not necessarily a Maya diagram of genus
$g$, but rather a Maya diagram whose genus is bounded above by $g$.

It is no longer possible to assert that there is a unique $\bmu$ such
that $\phi_\bmu(M) = M+k$, because it is possible to augment the
non-degenerate $\bmu = \Upsilon(M,M+k)$ with an arbitrary number of
pairs of flips at the same site to arrive at a degenerate $\bmu'$ such
that $\phi_{\bmu'}(M) = M+k$ also.  The rest of the theory remains
unchanged.
\end{rem}

\subsection{Rational solutions of $A_4$-Painlev\'e}\label{sec:A4}

In this section we will put together all the results derived above in order to describe an effective way of labelling and constructing all the rational solutions to the $A_{2k}$-Painlev\'e system based on cyclic dressing chains of rational extensions of the harmonic oscillator, \cite{cggm}. We conjecture that the construction described below covers all rational solutions to such systems. As an illustrative example, we describe all rational solutions to the $A_4$-Painlev\'e system, and we fournish examples in each signature class.

For odd $p$, in order to specify a Maya $p$-cycle, or equivalently a rational solution of a $p$-cyclic dressing chain, we need to specify three items of data:
\begin{enumerate}
\item a signature sequence $(p_0,\ldots, p_{k-1})$ consisting of odd positive integers that sum to $p$. This sequence determines the genus of the $k$ interlaced Maya diagrams that give rise to a $(p,k)$-cyclic Maya diagram $M$. The possible values of $k$ are given by Corollary~\ref{cor:k}.
\item Once the signature is fixed, we need to specify the $k$-block coordinates \[(\beta_0,\dots,\beta_{p-1})=(\bbeta^{(0)}|\ldots| \bbeta^{(k-1)})\] where $\bbeta^{(i)}=(\beta^{(i)}_0,\dots,\beta^{(i)}_{p_i})$ are the block coordinates that define each of the interlaced Maya diagrams $M^{(i)}$. These two items of data specify uniquely a $(p,k)$-cyclic Maya diagram $M$, and a canonical flip sequence $\bmu=(\beta_0,\dots,\beta_{p-1})$ . The next item specifies a given $p$-cycle that contains $M$.
\item Once the $k$-block coordinates and canonical flip sequence $\bmu$ are fixed, we still have the freedom to choose a permutation  $\bpi\in S_p$ of $(0,1,\ldots, p-1)$ that specifies the actual flip sequence $\bmu_\bpi$, i.e. the order in which the flips in the canonical flip sequence are applied to build the Maya $p$-cycle.

\end{enumerate}

For any signature of a Maya $p$-cycle, we need to specify the $p$ integers in the canonical flip sequence, but following Remark~\ref{rem:normalization}, we can get rid of translation invariance by setting $\mu_0=\beta^{(0)}_0=0$, leaving only $p-1$ free integers. Moreover, we can restrict ourselves to permutations such that $\bpi_p=0$ in order to remove the invariance under cyclic permutations. The remaining number of degrees of freedom is $p-1$, which (perhaps not surprisingly) coincides with the number of generators of the symmetry group $A^{(1)}_{p-1}$. This is a strong indication that the class described above captures a generic orbit of a seed solution under the action of the symmetry group.


We now illustrate the general theory by describing the rational
solutions of the $A^{(1)}_4$- Painlev\'e system, whose equations are given by

\begin{eqnarray}\label{eq:A4system}
f_0' + f_0(f_1-f_2+f_3-f_4) &=& \alpha_0, \nonumber\\
f_1' + f_1(f_2-f_3+f_4-f_0) &=& \alpha_1,\nonumber \\
f_2' + f_2(f_3-f_4+f_0-f_1) &=& \alpha_2, \\
f_3' + f_3(f_4-f_0+f_1-f_2) &=& \alpha_3, \nonumber \\
f_4' + f_4(f_0-f_1+f_2-f_3) &=& \alpha_4,\nonumber 
\end{eqnarray}
with normalization
\[ f_0+f_1+f_2+f_3+f_4=z.\]
\begin{thm}
  Rational solutions of the $A^{(1)}_4$-Painlev\'e system \eqref{eq:A4system} correspond to chains of $5$-cyclic
  Maya diagrams belonging to one of the following signature classes:
  \[ (5), (3,1,1), (1,3,1), (1,1,3), (1,1,1,1,1).\]
  With the normalization $\bpi_4=0$ and $\mu_0=0$, each rational
  solution may be uniquely labeled by one of the above signatures,
  a 4-tuple of arbitrary non-negative integers $(n_1, n_2, n_3, n_4)$, and  a
  permutation $(\pi_0, \pi_1,\pi_2, \pi_3)$ of $(1,2,3,4)$.  For
  each of the above signatures, the corresponding $k$-block coordinates
  of the initial $5$-cyclic Maya diagram are then given by
  \begin{align*}
&k=1 & &(5)& &(0,n_1,n_1+n_2, n_1+n_2+n_3, n_1+n_2+n_3+n_4) \\
&k=3 & &(3,1,1) && (0, n_1 , n_1+ n_2 | n_3 | n_4)\\
&k=3 & &(1,3,1) & &(0| n_1, n_1+n_2, n_1+n_2+n_3 | n_4)\\
&k=3 & &(1,1,3) & &(0| n_1 | n_2, n_2+n_3, n_2+n_3+n_4 )\\
&k=5 & &(1,1,1,1,1) && (0| n_1 | n_2| n_3| n_4)\\
  \end{align*}
\end{thm}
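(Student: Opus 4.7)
My plan is to assemble this theorem as a consequence of the machinery developed in the previous sections; the construction direction is essentially bookkeeping, while the classification relies on the conjecture flagged at the end of Section~\ref{sec:Maya}.

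First, I would apply Proposition~\ref{prop:wtof} to recast rational solutions of~\eqref{eq:A4system} as rational solutions of a $5$-cyclic Darboux dressing chain~\eqref{eq:wfchain}. Restricting (per the open conjecture) to chains built from rational extensions of the harmonic oscillator, Theorem~\ref{prop:Mwcorrespondence} then identifies each such chain with a $5$-cyclic Maya diagram $M$ equipped with an ordering of its canonical flip sequence. Thus it suffices to enumerate $5$-cyclic Maya diagrams together with an admissible ordering of their flip data.

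Second, I would enumerate the admissible signatures via Theorem~\ref{thm:Mp}. Since every $p_i = 2 g_i + 1$ is odd, the $k$-signatures compatible with $p=5$ are the compositions of $5$ into odd parts, namely
\[
(5),\ (3,1,1),\ (1,3,1),\ (1,1,3),\ (1,1,1,1,1),
\]
with associated shifts $k = 1, 3, 3, 3, 5$. This agrees with the admissible values $k\in\{1,3,5\}$ predicted by Corollary~\ref{cor:k}.

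Third, I would read off the explicit parametrization from Definition~\ref{def:kblock}. For each signature $(p_0,\ldots,p_{k-1})$, the underlying $(5,k)$-cyclic Maya diagram is determined by its $k$-block coordinates $(\bbeta^{(0)}|\cdots|\bbeta^{(k-1)})$, where each $\bbeta^{(i)}$ is a non-decreasing (possibly multi-set, per the degeneracy Remark) tuple of $p_i$ integers. Using the translation normalization $\mu_0 = \beta^{(0)}_0 = 0$ from Remark~\ref{rem:normalization} and encoding the remaining $p-1=4$ block coordinates by their consecutive gaps $n_1, n_2, n_3, n_4 \geq 0$, I obtain exactly the $k$-block coordinate formulas listed in the statement. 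The canonical flip sequence $\bmu = \Theta(\bbeta^{(0)},\ldots,\bbeta^{(k-1)})$ is then determined, and by Remark~\ref{rem:normalization} the remaining ordering freedom in~\eqref{eq:picycle} is a permutation $(\pi_0,\pi_1,\pi_2,\pi_3)$ of $(1,2,3,4)$ after normalizing $\pi_4=0$. Feeding this data into Theorem~\ref{prop:Mwcorrespondence} and Proposition~\ref{prop:wtof} produces the corresponding rational solution of~\eqref{eq:A4system}.

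The main obstacle is not construction but completeness: one must argue that distinct triples (signature, gap vector, permutation) produce genuinely distinct rational solutions, and that every rational solution is captured. Injectivity of the parametrization follows from the injectivity of $\Xi_k$ and from Proposition~\ref{prop:muM1M2}, which ensures that $\bmu=\Upsilon(M,M+k)$ is uniquely determined by $M$. The harder issue---that no further rational solutions lie outside the harmonic-oscillator class---is the conjecture stated after Proposition~\ref{prop:UMflip} and is where a fully rigorous argument would have to break new ground; granting it, the theorem follows from the preceding steps.
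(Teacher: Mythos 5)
Your proposal follows essentially the same route as the paper: the paper also obtains the five signatures as the odd-part compositions of $5$ via Theorem~\ref{thm:Mp} and Corollary~\ref{cor:k}, fixes the normalizations $\mu_0=0$, $\pi_4=0$ via Remark~\ref{rem:normalization}, encodes the remaining $k$-block coordinates by the gap parameters $n_1,\dots,n_4$ (with degenerate, multi-set cases covered exactly as you note), and likewise leaves completeness---that every rational solution arises from a chain of rational extensions of the harmonic oscillator---resting on the stated conjecture. Your explicit remarks on injectivity via $\Xi_k$ and Proposition~\ref{prop:muM1M2} make this caveat-tracking slightly more careful than the paper's own exposition, but the argument is the same.
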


We show specific examples  with shifts $k=1,3$ and $5$ and signatures $(5)$, $(1,1,3)$ and $(1,1,1,1,1)$.

\begin{exercise}\label{ex:51}
\begin{shaded}
Construct a $(5,1)$-cyclic Maya diagram in the signature class $(5)$ with $(n_1,n_2,n_3,n_4)=(2,3,1,1)$ and permutation   $(34210)$. Build the corresponding set of rational solutions to $A_4$-Painlev\'e.
\end{shaded}
The first Maya diagram in the cycle is $M_0=\Xi(0,2,5,6,7)$, depicted in the first row of Figure~\ref{fig:51cyclic}.  The canonical flip sequence is  $\bmu=(0,2,5,6,7)$.  The permutation
  $(34210)$  gives the chain of Maya diagrams shown in Figure
  \ref{fig:51cyclic}.  Note that the permutation specifies the sequence of block coordinates that get shifted by one at each step of the cycle. This type of solutions with signature $(5)$ were already studied in \cite{filipuk2008symmetric}, and they are based on a genus 2 generalization of the generalized Hermite polynomials that appear in the solution of \PIV ($A_2$-Painlev\'e).
\begin{figure}[ht]
  \centering
\begin{tikzpicture}[scale=0.5]

  \path [fill] (0.5,6.5) ++(3,0)
  circle (5pt) ++(1,0)
  circle (5pt) ++ (1,0) circle (5pt)++ (2,0)
  circle (5pt) ++ (3,0)  node[anchor=west] {
    $M_0=\Xi\,(0,2,5,6,7)$};

  \path [fill] (0.5,5.5) ++(3,0) circle (5pt) ++(1,0) circle (5pt) ++
  (1,0) circle (5pt) ++ (5,0)  node[anchor=west] {
    $M_1=\Xi\,(0,2,5,7,7)$};

  \path [fill] (0.5,4.5) ++(3,0)
  circle (5pt) ++(1,0)
  circle (5pt) ++ (1,0) circle (5pt)++ (3,0)
  circle (5pt) ++ (2,0)  node[anchor=west] {
    $M_2=\Xi\,(0,2,5,7,8)$};

  \path [fill] (0.5,3.5) ++(3,0)
  circle (5pt) ++(1,0)
  circle (5pt) ++ (1,0) circle (5pt) ++(1,0) circle (5pt) ++ (2,0)
  circle (5pt) ++ (2,0)  node[anchor=west] {
    $M_3=\Xi\,(0,2,6,7,8)$};

  \path [fill] (0.5,2.5)  ++(4,0)
  circle (5pt) ++ (1,0) circle (5pt) ++(1,0) circle (5pt) ++ (2,0)
  circle (5pt) ++ (2,0)  node[anchor=west] {
    $M_4=\Xi\,(0,3,6,7,8)$};

  \path [fill] (0.5,1.5)  ++(1,0) circle (5pt) ++(3,0)
  circle (5pt) ++ (1,0) circle (5pt) ++(1,0) circle (5pt) ++ (2,0)
  circle (5pt) ++ (2,0)  node[anchor=west] {
    $M_5=\Xi\,(1,3,6,7,8)=M_0+1$};

  \path [fill] (0.5,1.5) ++(0,0) circle (5pt)
  ++(0,1) circle (5pt)
  ++(0,1) circle (5pt)
  ++(0,1) circle (5pt)
  ++(0,1) circle (5pt)
  ++(0,1) circle (5pt);

  \draw  (0,1) grid +(10 ,6);
  \draw[line width=2pt] (1,1) -- ++ (0,6);
  \draw[line width=2pt] (9,1) -- ++ (0,6);

  \foreach \x in {-1,...,8} \draw (\x+1.5,0.5)  node {$\x$};

\end{tikzpicture}
  
\caption{A Maya $5$-cycle with shift $k=1$ for the choice $(n_1,n_2,n_3,n_4)=(2,3,1,1)$ and permutation $\bpi=(34210)$. }
  \label{fig:51cyclic}
\end{figure}
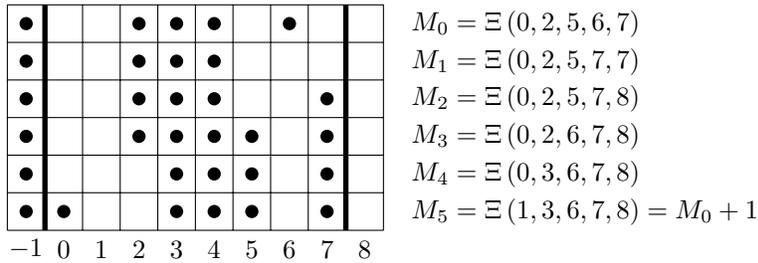

We shall now provide the explicit construction of the rational solution to the $A_4$-Painlev\'e system \eqref{eq:A4system}, by using Proposition~\ref{prop:Mwcorrespondence} and Proposition~\ref{prop:wtof}. The permutation $\bpi=(34210)$ on the canonical sequence $\bmu=(0,2,5,6,7)$ produces the flip sequence $\bmu_\bpi=(6,7,5,2,0)$, so that the values of the $\alpha_i$ parameters given by \eqref{eq:mu2alpha} become $(\alpha_0,\alpha_1,\alpha_2,\alpha_3,\alpha_4)=(-2,4,6,4,-14)$. The pseudo-Wronskians corresponding to each Maya diagram in the cycle are ordinary Wronskians, which will always be the case with the normalization imposed in Remark~\ref{rem:normalization}. They read (see Figure~\ref{fig:51cyclic}):
\begin{eqnarray*}
H_{M_0}(z)&=&\Wr(H_2,H_3,H_4,H_6)\\
H_{M_1}(z)&=&\Wr(H_2,H_3,H_4)\\
H_{M_2}(z)&=&\Wr(H_2,H_3,H_4,H_7)\\
H_{M_3}(z)&=&\Wr(H_2,H_3,H_4,H_5,H_7)\\
H_{M_4}(z)&=&\Wr(H_3,H_4,H_5,H_7)\\
\end{eqnarray*}
where $H_n=H_n(z)$ is the $n$-th Hermite polynomial. The rational solution to the dressing chain is given by the tuple $(w_0,w_1,w_2,w_3,w_4|\alpha_0,\alpha_1,\alpha_2,\alpha_3,\alpha_4)$, where $\alpha_i$ and $w_i$ are given by \eqref{eq:HM2w}-\eqref{eq:mu2alpha} as:
\begin{align*}
w_0(z)&=z+\ddz\Big[\log H_{M_1}(z) - \log H_{M_0}(z)\Big],&& a_0=-2,\\
w_1(z)&=-z+ \ddz\Big[\log H_{M_2}(z)-\log H_{M_1}(z)\Big],&& a_1=4,\\
w_2(z)&=-z+\ddz\Big[\log H_{M_3}(z)- \log H_{M_2}(z)\Big],&& a_2=6,\\
w_3(z)&=z+ \ddz\Big[\log H_{M_4}(z) - \log H_{M_3}(z)\Big],&& a_3=4,\\
w_4(z)&=-z+ \ddz\Big[\log H_{M_0}(z) - \log H_{M_4}(z)\Big],&& a_4=-14.
\end{align*}
Finally, Proposition~\ref{prop:Mwcorrespondence} implies that the corresponding rational solution to the $A_4$-Painlev\'e system \eqref{eq:Ansystem} is given by the tuple $(f_0,f_1,f_2,f_3,f_4|\alpha_0,\alpha_1,\alpha_2,\alpha_3,\alpha_4)$, where
\begin{align*}
f_0(z)&= \ddz\Big[\log H_{M_2}(c_1 z)- \log H_{M_0}(c_1 z)\Big],&& \alpha_0=1,\\
f_1(z)&=z+\ddz\Big[\log H_{M_3}(\cc{1}z) -\log H_{M_1}(\cc{1}z)\Big],&& \alpha_1=-2,\\
f_2(z)&= \ddz\Big[\log H_{M_4}(\cc{1}z) - \log H_{M_2}(\cc{1}z)\Big],&& \alpha_2=-3,\\
f_3(z)&=\ddz\Big[\log H_{M_0}(\cc{1}z) - \log H_{M_3}(\cc{1}z)\Big],&& \alpha_3=-2,\\
f_4(z)&=\ddz\Big[\log H_{M_1}(\cc{1}z) - \log H_{M_4}(\cc{1}z)\Big],&& \alpha_4=7.
\end{align*}
with $\cc{1}^2=-\frac{1}{2}$.

\end{exercise}

\begin{example}
 To illustrate the existence of degenerate Maya cycles, we construct one such degenerate example belonging to the $(5)$ signature class, by choosing 
 $(n_1,n_2,n_3,n_4)=(1,1,2,0)$. The presence of $n_4=0$ means that the first Maya diagram has genus 1 instead of the generic genus 2, with block coordinates given by $M_0=\Xi\,(0,1,2,4,4)$. The canonical flip sequence  $\bmu=(0,1,2,4,4)$ contains two flips at the same site, so it is not unique. Choosing the permutation $(42130)$ produces the chain of Maya  diagrams shown in Figure \ref{fig:51cyclicdegen}.  The explicit construction of the rational solutions follows the same steps as in the previous example, and we shall omit it here. It is worth noting, however, that due to the degenerate character of the chain, three linear combinations of $f_0,\dots,f_4$ will provide a solution to the lower rank $A_2$-Painlev\'e. If the two flips at the same site are performed consecutively in the cycle, the embedding of $A_2^{(1)}$ into $A_4^{(1)}$ is trivial and corresponds to setting two consecutive $f_i$ to zero. This is not the case in this example, as the flip sequence is $\bmu_\bpi=(4,2,1,4,0)$, which produces a non-trivial embedding.
\begin{figure}[ht]
  \centering
\begin{tikzpicture}[scale=0.5]

  \path [fill] (0.5,6.5) 
  ++(2,0)circle (5pt)
  ++(5,0)  node[anchor=west] {  $M_0=\Xi\,(0,1,2,4,4)$};

  \path [fill] (0.5,5.5) 
  ++(2,0) circle (5pt) 
  ++(3,0) circle (5pt)
  ++ (2,0)  node[anchor=west] {$M_1=\Xi\,(0,1,2,4,5)$};

  \path [fill] (0.5,4.5) 
  ++(2,0) circle (5pt)
  ++(1,0) circle (5pt)
  ++(2,0) circle (5pt)
  ++(2,0)  node[anchor=west] {  $M_2=\Xi\,(0,1,3,4,5)$};

  \path [fill] (0.5,3.5) 
  ++(3,0) circle (5pt)
  ++(2,0) circle (5pt)
  ++(2,0)  node[anchor=west] {  $M_3=\Xi\,(0,2,3,4,5)$};

  \path [fill] (0.5,2.5) 
  ++(3,0) circle (5pt)
  ++(4,0)  node[anchor=west] {  $M_4=\Xi\,(0,2,3,5,5)$};

  \path [fill] (0.5,1.5) 
  ++(1,0) circle (5pt)
  ++(2,0) circle (5pt)
  ++(4,0)  node[anchor=west] {  $M_5=\Xi\,(1,2,3,5,5)=M_0+1$};

  \path [fill] (0.5,1.5) ++(0,0) circle (5pt)
  ++(0,1) circle (5pt)
  ++(0,1) circle (5pt)
  ++(0,1) circle (5pt)
  ++(0,1) circle (5pt)
  ++(0,1) circle (5pt);

  \draw  (0,1) grid +(7 ,6);
  \draw[line width=2pt] (1,1) -- ++ (0,6);
  \draw[line width=2pt] (6,1) -- ++ (0,6);

  \foreach \x in {-1,...,5} \draw (\x+1.5,0.5)  node {$\x$};

\end{tikzpicture}
  
  \caption{A degenerate  Maya $5$-cycle with $k=1$ for the choice  $(n_1,n_2,n_3,n_4)=(1,1,2,0)$ and permutation $\bpi=(42130)$.}
  \label{fig:51cyclicdegen}
\end{figure}
\end{example}

\begin{exercise}
\begin{shaded}
Construct a $(5,3)$-cyclic Maya diagram in the signature class $(1,1,3)$ with $(n_1,n_2,n_3,n_4)=(3,1,1,2)$ and permutation $(41230)$. Show the explicit form of the rational solutions to the dressing chain and $A_4$-Painlev\'e.
\end{shaded}
The first Maya diagram has $3$-block coordinates $(0|3|1,2,4)$ and the canonical flip sequence is given by $\bmu=\Theta\,(0|3|1,2,4)=(0, {\color{red}10},{\color{blue} 5,8,14})$.
The permutation $(41230)$ gives the chain of Maya diagrams shown in Figure
  \ref{fig:53cyclic}. Note that, as in Example~\ref{ex:51}, the permutation specifies the order in which the $3$-block coordinates are shifted by +1 in the subsequent steps of the cycle. This type of solutions in the signature class $(1,1,3)$ were not given in \cite{filipuk2008symmetric}, and they are new to the best of our knowledge.

\begin{figure}[ht]
  \centering
\begin{tikzpicture}[scale=0.5]
  \path [fill,red] (0.5,6.5)  
  ++(2,0) circle (5pt)
  ++(3,0) circle (5pt)
  ++(3,0) circle (5pt) ;

  \path [fill,blue] (0.5,6.5)  
  ++(3,0) circle (5pt)
  ++(6,0) circle (5pt) 
  ++(3,0) circle (5pt) ;

  \path (17.5,6.5)  node[anchor=west] {$M_0=\Xi_3\,(0|3|1,2,4)$};

  \path [fill,red] (0.5,5.5)  
  ++(2,0) circle (5pt)
  ++(3,0) circle (5pt)
  ++(3,0) circle (5pt) ;

  \path [fill,blue] (0.5,5.5)  
  ++(3,0) circle (5pt)
  ++(6,0) circle (5pt) 
  ++(3,0) circle (5pt) 
  ++(3,0) circle (5pt) ;
  \path (17.5,5.5)  node[anchor=west] {$M_1=\Xi_3\,(0|3|1,2,5)$};

  \path [fill,red] (0.5,4.5)  
  ++(2,0) circle (5pt)
  ++(3,0) circle (5pt)
  ++(3,0) circle (5pt)
  ++(3,0) circle (5pt) ;

  \path [fill,blue] (0.5,4.5)  
  ++(3,0) circle (5pt)
  ++(6,0) circle (5pt) 
  ++(3,0) circle (5pt) 
  ++(3,0) circle (5pt) ;
  \path (17.5,4.5)  node[anchor=west] {$M_2=\Xi_3\,(0|4|1,2,5)$};

  \path [fill,red] (0.5,3.5)  
  ++(2,0) circle (5pt)
  ++(3,0) circle (5pt)
  ++(3,0) circle (5pt)
  ++(3,0) circle (5pt) ;

  \path [fill,blue] (0.5,3.5)  
  ++(3,0) circle (5pt)
  ++(3,0) circle (5pt)
  ++(3,0) circle (5pt) 
  ++(3,0) circle (5pt) 
  ++(3,0) circle (5pt) ;
  \path (17.5,3.5)  node[anchor=west] {$M_3=\Xi_3\,(0|4|2,2,5)$};

  \path [fill,red] (0.5,2.5)  
  ++(2,0) circle (5pt)
  ++(3,0) circle (5pt)
  ++(3,0) circle (5pt)
  ++(3,0) circle (5pt) ;
  \path [fill,blue] (0.5,2.5)  
  ++(3,0) circle (5pt)
  ++(3,0) circle (5pt)
  ++(6,0) circle (5pt) 
  ++(3,0) circle (5pt) ;
  \path (17.5,2.5)  node[anchor=west] {$M_4=\Xi_3\,(0|4|2,3,5)$};

  \path [fill,black] (0.5,1.5)  
  ++(1,0) circle (5pt);  
  \path [fill,red] (0.5,1.5)  
  ++(2,0) circle (5pt)
  ++(3,0) circle (5pt)
  ++(3,0) circle (5pt)
  ++(3,0) circle (5pt) ;
  \path [fill,blue] (0.5,1.5)  
  ++(3,0) circle (5pt)
  ++(3,0) circle (5pt)
  ++(6,0) circle (5pt) 
  ++(3,0) circle (5pt) ;
  \path (17.5,1.5)  node[anchor=west] {$M_5=\Xi_3\,(1|4|2,3,5)=M_0+3$};

  \path [fill,blue] (0.5,1.5) ++(0,0) circle (5pt)
  ++(0,1) circle (5pt)
  ++(0,1) circle (5pt)
  ++(0,1) circle (5pt)
  ++(0,1) circle (5pt)
  ++(0,1) circle (5pt);

  \path [fill,red] (-.5,1.5) ++(0,0) circle (5pt)
  ++(0,1) circle (5pt)
  ++(0,1) circle (5pt)
  ++(0,1) circle (5pt)
  ++(0,1) circle (5pt)
  ++(0,1) circle (5pt);

  \path [fill,black] (-1.5,1.5) ++(0,0) circle (5pt)
  ++(0,1) circle (5pt)
  ++(0,1) circle (5pt)
  ++(0,1) circle (5pt)
  ++(0,1) circle (5pt)
  ++(0,1) circle (5pt);

  \draw  (-2,1) grid +(19 ,6);
  \draw[line width=2pt] (1,1) -- ++ (0,6);
  \draw[line width=2pt] (16,1) -- ++ (0,6);

  \foreach \x in {0,...,15} \draw (\x+1.5,0.5)  node {$\x$};
\end{tikzpicture}
  \caption{A Maya $5$-cycle with shift $k=3$ for the choice $(n_1,n_2,n_3,n_4)=(3,1,1,2)$ and permutation $\bpi=(41230)$. }
  \label{fig:53cyclic}
\end{figure}

We proceed to build the explicit rational solution to the $A_4$-Painlev\'e system \eqref{eq:A4system}. In this case, the permutation $\bpi=(41230)$ on the canonical sequence $\bmu=(0,10,5,8,14)$ produces the flip sequence $\bmu_\bpi=(14,10,5,8,0)$, so that the values of the $\alpha_i$ parameters given by \eqref{eq:mu2alpha} become $(\alpha_0,\alpha_1,\alpha_2,\alpha_3,\alpha_4)=(8,10,-6,16,-34)$. The pseudo-Wronskians corresponding to each Maya diagram in the cycle are ordinary Wronskians, which will always be the case with the normalization imposed in Remark~\ref{rem:normalization}. They read (see Figure~\ref{fig:53cyclic}):
\begin{eqnarray*}
H_{M_0}(z)&=&\Wr(H_1,H_2,H_4,H_7,H_8,H_{11})\\
H_{M_1}(z)&=&\Wr(H_1,H_2,H_4,H_7,H_8,H_{11},H_{14})\\
H_{M_2}(z)&=&\Wr(H_1,H_2,H_4,H_7,H_8,H_{10},H_{11},H_{14})\\
H_{M_3}(z)&=&\Wr(H_1,H_2,H_4,H_5,H_7,H_8,H_{10},H_{11},H_{14})\\
H_{M_4}(z)&=&\Wr(H_1,H_2,H_4,H_5,H_7,H_{10},H_{11},H_{14})\\
\end{eqnarray*}
where $H_n=H_n(z)$ is the $n$-th Hermite polynomial. The rational solution to the dressing chain is given by the tuple $(w_0,w_1,w_2,w_3,w_4|\alpha_0,\alpha_1,\alpha_2,\alpha_3,\alpha_4)$, where $\alpha_i$ and $w_i$ are given by \eqref{eq:HM2w}-\eqref{eq:mu2alpha} as:
\begin{align*}
w_0(z)&=-z+\ddz\Big[\log H_{M_1}(z) - \log H_{M_0}(z)\Big],&& a_0=8,\\
w_1(z)&=-z+ \ddz\Big[\log H_{M_2}(z)-\log H_{M_1}(z)\Big],&& a_1=10,\\
w_2(z)&=-z+ \ddz\Big[\log H_{M_3}(z) - \log H_{M_2}(z)\Big],&& a_2=-6,\\
w_3(z)&=z+ \ddz\Big[\log H_{M_4}(z) - \log H_{M_3}(z)\Big],&& a_3=16,\\
w_4(z)&=-z+ \ddz\Big[\log H_{M_0}(z) - \log H_{M_4}(z)\Big],&& a_4=-34.
\end{align*}
Finally, Proposition~\ref{prop:wtof} implies that the corresponding rational solution to the $A_4$-\p\ system \eqref{eq:Ansystem} is given by the tuple $(f_0,f_1,f_2,f_3,f_4|\a_0,\a_1,\a_2,\a_3,\a_4)$, where
\begin{align*}
f_0(z)&=\tfrac13z+\ddz\Big[\log H_{M_2}(\cc{2}z) - \log H_{M_0}(\cc{2}z)\Big],&& \alpha_0=-\tfrac43,\\
f_1(z)&=\tfrac13z+\ddz\Big[\log H_{M_3}(\cc{2}z) -\log H_{M_1}(\cc{2}z)\Big],&& \alpha_1=-\tfrac53,\\
f_2(z)&= \ddz\Big[\log H_{M_4}(\cc{2}z) - \log H_{M_2}(\cc{2}z)\Big],&& \alpha_2=1,\\
f_3(z)&=\ddz\Big[\log H_{M_0}(\cc{2}z) -\log H_{M_3}(\cc{2}z)\Big],&& \alpha_3=-\tfrac83,\\
f_4(z)&=\tfrac13z+\ddz\Big[\log H_{M_1}(\cc{2}z) - \log H_{M_4}(\cc{2}z)\Big],&& \alpha_4=\tfrac{17}{3}.
\end{align*}
with $\cc{2}^2=-\tfrac16$.

\end{exercise}

\begin{exercise}
\begin{shaded}
Construct a $(5,5)$-cyclic Maya diagram in the signature class $(1,1,1,1,1)$ with $(n_1,n_2,n_3,n_4)=(2,3,0,1)$ and permutation $(32410)$. Show the explicit form of the rational solutions to the dressing chain and $A_4$-Painlev\'e.
\end{shaded}
With the above choice, the first Maya diagram o the cycle has $5$-block coordinates $(0|2|3|0|1)$, and the canonical flip sequence is given by  $\bmu=\Theta\,(0|2|3|0|1)=(0,{\color{red}11},{\color{blue}17},{\color{brown}3},{\color{green}
    9})$. The permutation $(32410)$ gives the chain of Maya diagrams shown in Figure \ref{fig:55cyclic}.  Note that, as it happens in the previous examples, the permutation specifies the order in which the $5$-block coordinates are shifted by +1 in the subsequent steps of the cycle. This type of solutions with signature $(1,1,1,1,1)$ were already studied in \cite{filipuk2008symmetric}, and they are based on a generalization of the Okamoto polynomials that appear in the solution of \PIV ($A_2$-Painlev\'e).
    
\begin{figure}[ht]
  \centering
\includegraphics[width=1.2\textwidth]{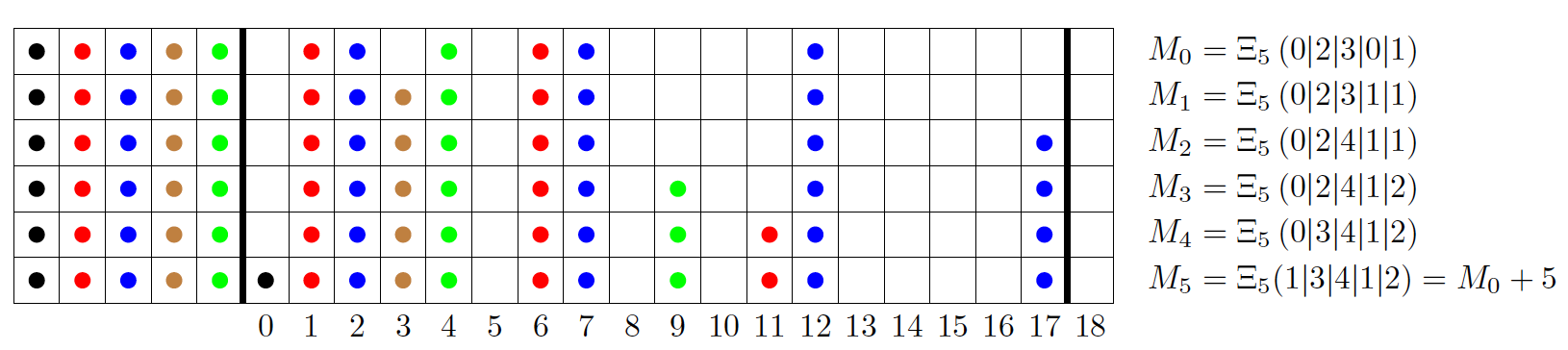}
  \caption{A Maya $5$-cycle with shift $k=5$ for the choice $(n_1,n_2,n_3,n_4)=(2,3,0,1)$ and permutation $\bpi=(32410)$.}
  \label{fig:55cyclic}
\end{figure}

We proceed to build the explicit rational solution to the $A_4$-Painlev\'e system \eqref{eq:A4system}. In this case, the permutation $\bpi=(32410)$ on the canonical sequence $\bmu=(0,11,17,3,9)$ produces the flip sequence $\bmu_\bpi=(3,17,9,11,0)$, so that the values of the $\alpha_i$ parameters given by \eqref{eq:mu2alpha} become $(\alpha_0,\alpha_1,\alpha_2,\alpha_3,\alpha_4)=(-28,16,-4,22,-16)$. The pseudo-Wronskians corresponding to each Maya diagram in the cycle are ordinary Wronskians, which will always be the case with the normalization imposed in Remark~\ref{rem:normalization}. They read:
\begin{eqnarray*}
H_{M_0}(z)&=&\Wr(H_1,H_2,H_4,H_6,H_7,H_{12})\\
H_{M_1}(z)&=&\Wr(H_1,H_2,H_3,H_4,H_6,H_7,H_{12})\\
H_{M_2}(z)&=&\Wr(H_1,H_2,H_3,H_4,H_6,H_7,H_{12},H_{17})\\
H_{M_3}(z)&=&\Wr(H_1,H_2,H_3,H_4,H_6,H_7,H_9,H_{12},H_{17})\\
H_{M_4}(z)&=&\Wr(H_1,H_2,H_3,H_4,H_6,H_7,H_9,H_{11},H_{12},H_{17})\\
\end{eqnarray*}
where $H_n=H_n(z)$ is the $n$-th Hermite polynomial. The rational solution to the dressing chain is given by the tuple $(w_0,w_1,w_2,w_3,w_4|\alpha_0,\alpha_1,\alpha_2,\alpha_3,\alpha_4)$, where $\alpha_i$ and $w_i$ are given by \eqref{eq:HM2w}-\eqref{eq:mu2alpha} as:
\begin{align*}
w_0(z)&=-z+\ddz\Big[\log H_{M_1}(z) - \log H_{M_0}(z)\Big],&& a_0=-28\\
w_1(z)&=-z+ \ddz\Big[\log H_{M_2}(z) - \log H_{M_1}(z)\Big],&& a_1=16,\\
w_2(z)&=-z+ \ddz\Big[\log H_{M_3}(z) - \log H_{M_2}(z)\Big],&& a_2=-4,\\
w_3(z)&=-z+ \ddz\Big[\log H_{M_4}(z) - \log H_{M_3}(z)\Big],&& a_3=22\\
w_4(z)&=-z+ \ddz\Big[\log H_{M_0}(z) - \log H_{M_4}(z)\Big],&& a_4=-16.
\end{align*}
Finally, Proposition~\ref{prop:Mwcorrespondence} implies that the corresponding rational solution to the $A_4$-\p\ system \eqref{eq:Ansystem} is given by the tuple $(f_0,f_1,f_2,f_3,f_4|\a_0,\a_1,\a_2,\a_3,\a_4)$, where
\begin{align*}
f_0(z)&=\tfrac15z+\ddz\Big[\log H_{M_2}(\cc{3}z) - \log H_{M_0}(\cc{3}z)\Big],&& \alpha_0=\tfrac{14}{5},\\
f_1(z)&=\tfrac15z+\ddz\Big[\log H_{M_3}(\cc{3}z) -\log H_{M_1}(\cc{3}z)\Big],&& \alpha_1=-\tfrac85,\\
f_2(z)&=\tfrac15z+ \ddz\Big[\log H_{M_4}(\cc{3}z)-\log H_{M_2}(\cc{3}z)\Big],&& \alpha_2=\tfrac25,\\
f_3(z)&=\tfrac15z+\ddz\Big[\log H_{M_0}(\cc{3}z) - \log H_{M_3}(\cc{3}z)\Big],&& \alpha_3=-\tfrac{11}5,\\
f_4(z)&=\tfrac15z+\ddz\Big[\log H_{M_1}(\cc{3}z) - \log H_{M_4}(\cc{3}z)\Big],&& \alpha_4=\tfrac85.
\end{align*}
with $\cc{3}^2=-\tfrac1{10}$.

\end{exercise}

\section*{Acknowledgements}
The research of DGU has been supported in part by the Spanish Ministerio de Ciencia, Innovaci\'on y Universidades and ERDF under contracts RTI2018-100754-B-I00  and  PGC2018-096504-B-C33.  The research of RM was supported in part by NSERC grant
RGPIN-228057-2009. DGU would like to thank Mama Foupagnigni, Wolfram Koepf, the Volkswagen Stiftung and the African Institute of Mathematical Sciences for their hospitality during the Workshop on Introduction to Orthogonal Polynomials and Applications, Duala (Cameroon), where these lectures were first taught.


%

\providecommand{\bysame}{\leavevmode\hbox to3em{\hrulefill}\thinspace}
\providecommand{\MR}{\relax\ifhmode\unskip\space\fi MR }
\providecommand{\MRhref}[2]{%
  \href{http://www.ams.org/mathscinet-getitem?mr=#1}{#2}
}
\providecommand{\href}[2]{#2}

\end{document}